\documentclass[10pt,twocolumn,letterpaper,table]{article}
\usepackage{usenix-2020-09}

\newif\ifusenixsty
\usenixstytrue
\newif\ifcameraready
\camerareadyfalse
\newif\ifmarkchanges
%\markchangestrue
\markchangesfalse
\newif\ifremovecomments
%\removecommentsfalse
\removecommentstrue
\newif\ifshowadditions
%\showadditionstrue
\showadditionsfalse
%\showadditionsfalse
\newif\ifshowrr
%\showrrtrue
\showrrfalse

%%% CFP: https://www.usenix.org/conference/nsdi19/call-for-papers
%%% Double-blind, 12 pages not including references, and appendices
%%% Appendices allowed, but PC may not read it. We should use this.

%% Temporary to allow compilation
%\hypersetup{draft}

%%% Add packages, configurations, macros to the files listed below
%%%% The rest of the packages
\usepackage{lmodern}
\usepackage{amssymb,amsmath,amsfonts}
\usepackage[T1]{fontenc}
\usepackage[utf8]{inputenc}

% use upquote if available, for straight quotes in verbatim environments
\usepackage{upquote}

% use microtype if available
\usepackage{microtype}
\UseMicrotypeSet[protrusion]{basicmath}
\usepackage{hyperref}
\usepackage{graphicx,grffile}

\usepackage{times} 
\usepackage{tikz}%% cycled number
\usepackage{endnotes,epsfig}
\usepackage{multirow}
\usepackage{xspace}
\usepackage{tabularx}
\usepackage{ragged2e}
\usepackage{booktabs}
\usepackage{paralist}
\usepackage[american]{babel}
\usepackage[shortlabels]{enumitem}
\usepackage{courier,color,wrapfig}
\usepackage{xspace}
\usepackage{balance}
\usepackage{enumitem}
\usepackage{epstopdf}
\usepackage{multirow}
\usepackage{booktabs}
\usepackage{url}
 %Allow URL break in the reference section
\usepackage{amssymb}
\usepackage{pifont}
\usepackage{subfigure}
\usepackage{tikz}
\usepackage{comment}
\usepackage{mathtools}
\usepackage[normalem]{ulem}
\usepackage{xkeyval}
\usepackage{cleveref}
\usepackage{xcolor}
\usepackage[short]{optidef}

\ifremovecomments
\usepackage[disable, textsize=small]{todonotes}
\else
\usepackage[textsize=small]{todonotes}
\fi

\usepackage{ifthen}
\usepackage{etoolbox}

\usepackage[margin=5pt,font=small,labelfont={bf,rm}]{caption}
\captionsetup{
	format = plain,
	font = small,
	labelfont = {bf,sc},
	textfont = {rm}
}
%% Add horizontal lines below tables/figures
\DeclareCaptionFormat{myformat}{#1#2#3\hrulefill}
\captionsetup{format=myformat}
%% End

\usepackage{sepfootnotes}
\usepackage[compact]{titlesec}
\usepackage{bbding}
\usepackage{enumitem}

\colorlet{whiteRowColor}{white}
\colorlet{rowColorOne}{gray!5}
\colorlet{rowColorTwo}{gray!20}
\rowcolors{2}{rowColorOne}{rowColorTwo}
\definecolor{maroon}{RGB}{128,0,0}
%\rowcolors{2}{gray!5}{gray!20}

%this is for creating tables with fixed row width and text wrapped
\usepackage{array}
\newcolumntype{L}[1]{>{\raggedright\let\newline\\\arraybackslash\hspace{0pt}}m{#1}}
\newcolumntype{C}[1]{>{\centering\let\newline\\\arraybackslash\hspace{0pt}}m{#1}}
\newcolumntype{R}[1]{>{\raggedleft\let\newline\\\arraybackslash\hspace{0pt}}m{#1}}
%this is for algorithms
\usepackage{amsmath,amsthm}
\makeatletter
\def\thm@space@setup{%
	\thm@preskip=6pt plus 1pt minus 3pt
	\thm@postskip=\thm@preskip
}
\makeatother
\usepackage{bbm}
\usepackage[linesnumbered,boxed]{algorithm2e}

\makeatletter
\def\BState{\State\hskip-\ALG@thistlm}
\makeatother

%%% Local Variables:
%%% mode: latex
%%% TeX-master: t
%%% End:

%%%%%%%%%%%%%%%%%%%%%%%%%%%%%%%%%%%%%%%%%%%%%%%%%%%%%%%%
%%%%%%%%%%%%%%%%%%%%%%%%%%%%%%%%%%%%%%%%%%%%%%%%%%%%%%%%%%%%%%%%%
%% hyperref configuration
%\PassOptionsToPackage{usenames,dvipsnames}{color} % color is loaded by hyperref
%\hypersetup{unicode=true,
%	colorlinks=true,
%	linkcolor=blue,
%	citecolor=blue,
%	anchorcolor=blue,
%	urlcolor=blue,
%	breaklinks=true}
%\urlstyle{same}

%% caption configuration
%\captionsetup{margin=10pt,font=small,labelfont=bf}

%%%%%%%%%%%%%%%%%%%%%%%%%%%%%%%%%%%%%%%%%%%%%%%%%%%%%%%%
%% other scaling, separation configuration, do not modify
\makeatletter
\def\maxwidth{\ifdim\Gin@nat@width>\linewidth\linewidth\else\Gin@nat@width\fi}
\def\maxheight{\ifdim\Gin@nat@height>\textheight\textheight\else\Gin@nat@height\fi}
\makeatother
% Scale images if necessary, so that they will not overflow the page
% margins by default, and it is still possible to overwrite the defaults
% using explicit options in \includegraphics[width, height, ...]{}
\setkeys{Gin}{width=\maxwidth,height=\maxheight,keepaspectratio}
\setlength{\emergencystretch}{3em}  % prevent overfull lines

\setcounter{secnumdepth}{5}

%%%%%%%%%%%%%%%%%%%%%%%%%%%%%%%%%%%%%%%%%%%%%%%%%%%%%%%%
%% para spacing
\newcommand\paraspace{\vspace*{1ex}}

\providecommand\parab[1]{\paraspace\noindent\textbf{#1}}
\providecommand\parae[1]{\paraspace\textbf{\textit{#1}}}

%%%%%%%%%%%%%%%%%%%%%%%%%%%%%%%%%%%%%%%%%%%%%%%%%%%%%%%%
%% Change these to control figure/float spacing
\setlength{\textfloatsep}{3pt plus 1pt minus 1pt}
\setlength{\abovecaptionskip}{3pt plus 1pt minus 1pt}
\setlength{\belowcaptionskip}{3pt plus 1pt minus 1pt}

%%%%%%%%%%%%%%%%%%%%%%%%%%%%%%%%%%%%%%%%%%%%%%%%%%%%%%%%
%% Change these to control equation spacing
\apptocmd\normalsize{%
	\abovedisplayskip=5pt
	\abovedisplayshortskip=5pt
	\belowdisplayskip=5pt
	\belowdisplayshortskip=5pt
}{}{}

%%% Local Variables:
%%% mode: latex
%%% TeX-master: t
%%% End:

%%%%%%%%%%%%%%%%% Add paper-specific macros here
\newcommand{\sysname}{{\small\sf Soroush}\xspace}

\newcommand{\systitle}{{\sf Soroush}\xspace}

% \newcommand{\trafficthroughput}{\theta}
% \newcommand{\topothroughput}{\theta^\ast}
% \newcommand{\throughputbound}{\theta_b}
% \newcommand{\splitratio}{\beta}
% %\newcommand{\hoseset}{\set{T}}
% \newcommand{\sathoseset}{\set{T}}
% \newcommand{\permuset}{\hat{\set{T}}}
% \newcommand{\torset}{\set{K}}
% \newcommand{\switchset}{\set{S}}
% \newcommand{\linkset}{\set{E}}
% \newcommand{\pathset}{\set{P}}
% \newcommand{\numlink}{E}
% \newcommand{\numserver}{H}
% \newcommand{\shortpathlength}{L}
% \newcommand{\radix}{R}
% \newcommand{\totalserver}{N}
% \newcommand{\traffic}{\set{T}}
% \newcommand{\actionTraffic}{\set{T}_{a}}
% \newcommand{\shortflowtraffic}{\set{T}_{s}}
% \newcommand{\longflowtraffic}{\set{T}_{l}}
% \newcommand{\topology}{G}
% \newcommand{\actionTopology}{G_{a}}
% \newcommand{\mitigationAction}{\set{M}}
% \newcommand{\switchset}{\set{V}}
% \newcommand{\linkset}{\set{E}}
% \newcommand{\linkcap}{\set{C}}
% \newcommand{\linkstate}{\set{P}_{l}}
% \newcommand{\switchstate}{\set{P}_{s}}
% \newcommand{\futureFailureSet}{\set{D}}
% \newcommand{\confidenceLevel}{\alpha}
% \newcommand{\impactFlows}{\beta}
% \newcommand{\epochsize}{\zeta}
% \newcommand{\runDuration}{\set{I}}
% \newcommand{\listcurrentflows}{\set{K}}
% \newcommand{\timeInterval}{I}
% \newcommand{\throughput}{\theta}

\newcommand{\weightedroutingmatrix}{\mymatrix{\Gamma}}
\newcommand{\routingmatrix}{\mymatrix{A}}
\newcommand{\capacity}{\vect{c}}
\newcommand{\rate}{\vect{f}}
\newcommand{\fairsharevector}{\boldsymbol{\zeta}}
\newcommand{\fairshare}{\zeta}
\newcommand{\setflows}{\set{F}}
\newcommand{\setactiveflows}{\set{D}_a}
\newcommand{\setdemands}{\set{D}}

%%% Latin abbrevs

\newcommand{\ie}{\emph{i.e.,}\xspace}
\newcommand{\eg}{\emph{e.g.,}\xspace}
\newcommand{\etal}{\emph{et al.}\xspace}

%%% Sections, figures etc. Please use these for uniformity
\newcommand{\secref}[1]{\S\ref{#1}}
\newcommand{\figref}[1]{Fig.~\ref{#1}}

\newcommand{\tabref}[1]{Table~\ref{#1}}
\newcommand{\algoref}[1]{Alg.~\ref{#1}}
\newcommand{\eqnref}[1]{Eqn.~\ref{#1}}

\newcommand{\theoremref}[1]{Theorem~\ref{#1}}

\newenvironment{Itemize}%
{\begin{itemize}%[leftmargin=1em]%
%\setdefaultleftmargin{1em}{1em}{}{}{}{}%
\setlength{\leftmargin}{1em}%
\setlength{\itemsep}{0in}%
\setlength{\topsep}{-.1in}%
\setlength{\partopsep}{-.1in}%
\setlength{\parsep}{-.1in}%
\setlength{\parskip}{-0in}}%
{\end{itemize}}

{\begin{enumerate}%[leftmargin=1em]%
%\setdefaultleftmargin{1em}{1em}{}{}{}{}%
\setlength{\leftmargin}{1em}%
\setlength{\itemsep}{0in}%
\setlength{\topsep}{-.1in}%
\setlength{\partopsep}{-.1in}%
\setlength{\parsep}{-.1in}%
\setlength{\parskip}{-0in}}%
{\end{enumerate}}

\newcommand{\cut}[1]{}

%%%a squished list that saves space
\newcommand{\squishlist}{
	\begin{list}{$\bullet$}
		{ \setlength{\itemsep}{0pt}
			\setlength{\parsep}{3pt}
			\setlength{\topsep}{3pt}
			\setlength{\partopsep}{0pt}
			\setlength{\leftmargin}{1.5em}
			\setlength{\labelwidth}{1em}
			\setlength{\labelsep}{0.5em} } }
	
	%%%cycled number
	\newcommand{\pgftextcircled}[1]{
		\setbox0=\hbox{#1}%
		\dimen0\wd0%
		\divide\dimen0 by 2%
		\begin{tikzpicture}[baseline=(a.base)]%
			\useasboundingbox (-\the\dimen0,0pt) rectangle (\the\dimen0,1pt);
			\node[circle,draw,outer sep=0pt,inner sep=0.1ex] (a) {#1};
		\end{tikzpicture}
	}

	\newcommand{\squishend}{
\end{list}  }

%%% Camera-ready macros
% \ifcameraready
%\newcommand{\hl}[1]{\textcolor{blue}{#1}}

\ifmarkchanges
	
	\newcommand{\crdel}[1]{\sout{\textcolor{red}{#1}}}
\else
	\ifshowadditions
		
	\else
		
	\fi
	\newcommand{\crdel}[1]{}
\fi

%%Mathematics Notations

\newcommand{\set}[1]{\mathcal{#1}}

\newcommand{\vect}[1]{\mathbf{#1}}
\newcommand{\indicator}[1]{\mathbbm{1}\left[{#1}\right]}
\newcommand{\norm}[1]{\lvert{#1}\rvert}
\newcommand{\mymatrix}[1]{\boldsymbol{\mathit{#1}}}

\newtheorem{theorem}{Theorem}

\newtheorem{lemma}{Lemma}

\SetCommentSty{algcmt}
\SetKwComment{tcp}{}{}
\newcommand{\algcomment}[1]{\tcp*{#1}}

%%% Make changes to title, author list
\begin{document}
	
	\pagestyle{empty}

	\let\textcircled=\pgftextcircled
	
	\ifshowrr
		<<<<<<< HEAD

=======
>>>>>>> 5ddc071ac21189a5b3673287b9b8fe83dffd3947
\section*{Review Response}

\newcommand{\quotereview}[1]{\medskip\begin{center}\fbox{\parbox{0.40\textwidth}{\it {#1}}}\end{center}\medskip}

We thank the reviewers for their time and insightful feedback. We substantially overhauled our submission and unfortunately marking up the edited portions will mark up almost the entire text. Thus, instead, we summarize our key changes below and point to specific changes in the submission when responding to the meta review and individual reviews~(\secref{rr:meta} and~\secref{rr:individual} respectively).

%we have made several changes, which we indicate in blue. 

\subsection{Summary of Key Changes}
\label{rr:change_list}
Here are the list of major changes:
\begin{Itemize}
	\item Since the previous submission, one of our allocators -- {\sf GeoBinner}-- has been deployed as the default in a large cloud provider's traffic engineering pipeline. We report results from production. See~\secref{rr:production}.
	\item We conduct new experiments to compare with NCFlow and POP. Note that neither consider max-min fair allocations explicitly. NCFlow, in fact, only considers maximizing the total flow and has poor performance when used to obtain fair allocations. POP does better but has no worst-case approximation guarantee unlike our allocators. More details are in~\secref{sss:mr_c2}.
	\item We have simplified and improved the description of our graph-based model for multi-resource max-min fair allocation problems. We also explain which problems fit (and do not fit) in this model. See~\secref{sss:mr_c4}. 
	\item We have simplified and rewritten the design section to better connect the different allocators. See~\secref{sss:mr_c1}.
%	\item We explain how users can apply our methods to any problem expressed within our model (including TE and CS) without additional complexity (see~\secref{sss:mr_c4}).
	\item We have rewritten portions of the paper to prove the available guarantees (or lack thereof) of our allocators. See~\secref{sss:mr_c5}.
	\item We intuitively explain why our single-shot LPs are faster than a sequence of LPs. See~\secref{sss:mr_c6}.
	\item We added new examples and pseudocode to more carefully explain our generalization of waterfilling to the multipath case (see~\autoref{f:example:water} and the algorithms in~\secref{s:genwaterfilling}).
\end{Itemize}

\subsection{Production Deployment}
\label{rr:production}
We have successfully integrated \sysname into a production traffic engineering controller that manages a public cloud's wide area network with over $1000$ nodes and edges. 
Our measurements over the past few months, on thousands of unique traffic demand matrices from production, demonstrate up to $5.4\times$ speed up ($2.4\times$ on average) compared to the previous production iterative max-min fair solver. These results are in~\figref{fig:traffic-engineering:integration} in \secref{subsec:TE}.

\subsection{Meta Review}
\label{rr:meta}

\subsubsection{Comment 1}
\label{sss:mr_c1}
\quotereview{Unnecessary formalism in \secref{s:unified-formulation} \& \secref{sec::opt}.}

We have refactored the formalism so that only essential details remain in the main body of the paper.

\parab{$\blacktriangleright$ Formalism in \secref{s:unified-formulation}.}
To clarify, all our optimization-based allocators (EB and GB) share the ${\sf FeasibleAlloc}$ constraints that ensure their allocations meet capacity and rate constraints as well as other affine functions such as heterogeneous utilities and differing resource requirements. The full formulas are now in appendix (A.1) and our model in~\secref{s:unified-formulation} only explains the concerns and intuition in text.

\parab{$\blacktriangleright$ Formalism in \secref{sec::opt}.} We rewrote this portion to clarify the connection between the single-shot exact formulation and the approximate variants. 

To clarify, both our optimization-based allocators (EB and GB) use the $\epsilon$ rewarding from the exact single-shot formulation. Furthermore, GB and EB address the limitations of the exact single-shot method in different ways: GB uses fixed bin boundaries that are geometrically increasing (\ie $[0, U), [U, \alpha U), \ldots [U\alpha^{b-1}, U\alpha^b), \ldots$) and hence does not require sorting network constraints and achieves the same $\alpha$-approx fairness guarantee as SWAN.  EB allows bin boundaries to dynamically evolve within the optimization but, again  explicitly orders them, (\ie $\ell_{b-1} \leq \ell_{b}$ in~\secref{s:comb_ext}) and therefore also does not need sorting network constraints. EB is empirically fairer but analyzing the worst-cast guarantee is more involved.

Also, to clarify, our generalized waterfiller allocators owe nothing to the single-shot exact formulation. They are parallelizable combinatorial algorithms as shown in Algo.~\ref{alg:one_path_approx_wt_waterfilling} and~\secref{s:genwaterfilling}.

\subsubsection{Comment 2}
\label{sss:mr_c2}
\quotereview{Comparison with NCFlow~\cite{Abuzaid-ncflow} \& POP~\cite{Narayanan-POP}}
POP and NCFlow use decomposition techniques to scale TE and CS. But neither explicitly support max-min fair resource allocation. NCFlow only maximizes the total flow and the authors mention in the paper that it is hard to extend it to max-min fairness objective. Similarly the POP paper maximizes total flow and maximum concurrent flow (\ie the smallest fractional allocation) but does not provide any results on max-min fairness objective. 

We conduct new experiments that apply POP for the max-min fairness objective. Specifically, we created new POP-inspired variants for both the exact max-min fair (Eqn.~\ref{eq:seq-max-min-flow}) and the SWAN-like approximate max-min fair~(Eqn.~\ref{eq:seq-approx-max-min-flow}) optimizations. In these adaptations, we use the same partitioning as recommended by the POP paper. These POP-ped variants do not have any worst-case guarantees on fairness (unlike GB) and empirically perform poorly relative to our new allocators. Results for POP-ped variants on two different topologies, two load factors and two traffic distributions are in \secref{s:exp:other} (\figref{fig:decomp:pop}) and \secref{sec:POP-eval-extended} (\figref{fig:decomp:pop:app}). Our results suggest the quality of POP's solution depends on the traffic distribution, whereas \sysname consistently performs better in these experiments. Fundamentally, POP only achieves per partition max-min fairness which is different from and can be arbitrarily worse than the global max-min fairness. We discuss this further in the paper.

\subsubsection{Comment 3}
\label{sss:mr_c3}
\quotereview{Multi-resource Terminology Clarification}
We have updated the model in~\secref{s:unified-formulation}, introduction and many other places to clarify this issue.

In short, the multiple resources are different links in traffic engineering problems and different kinds of resources -- such as GPUs, CPUs, memory-- on different servers in the cluster scheduling case.  Notice that a demand must receive rate {\em jointly} on a group of resources which we will call a path. For example, a worker task must receive some GPU, CPU and memory to execute. Furthermore, a demand can receive allocations on multiple paths, \ie different {\em groups} of {\em resources}. In cluster scheduling, worker tasks can be assigned to different servers.

Our model subsumes both of the above cases in an intuitive graph-based model.

Waterfilling does not apply to the multipath case.

%We define multi-resource allocation as one in which demands can receive a *fraction* of one or more resources at the same time. These resources can have the same type (links on different paths in TE) or different types (CPUs or GPUs in CS). links are resources in TE and so traffic engineering is an instance of the multi-resource allocation problem~---~we split and route demands over multiple paths. In~\secref{s:unified-formulation}, we introduce a graph-based model to express multi-resource max-min fair allocation problems. We discuss in details why TE and CS are instances of this model. 

\subsubsection{Comment 4}
\label{sss:mr_c4}
\quotereview{Balancing TE and CS in the main body.}
We only discuss TE and CS separately in~\secref{s:unified-formulation} and in~\secref{s:comb_ext}. The core of the allocator description only references the common model and will apply to any problem (including TE and CS) that can be specified in our graph-based model. 

In~\secref{s:comb_ext}, we also added how our core allocators can extend to handle weighted max-min fairness and specific variants that arise in CS such as resource requirements not being the same across all resources.

%In~\secref{s:unified-formulation}, we clarify why TE and CS are instances of the same problem: they share the same set of constraints, which only differ in their coefficients (see~\tabref{t:notation}). In~\secref{s:sysn-solv-engine}, we modified the text to generalize the discussion such that it applies to both problems. We also added~\secref{ss:extension-cs} which describes how our methods can extend to handle weighted max-min fairness and cases where resource usage is not the same across all demands (common in CS).

\subsubsection{Comment 5}
\label{sss:mr_c5}
\quotereview{Clarifying theoretical guarantees.}
We modified all the figures and claims to clarify that \sysname {\em empirically} pareto-dominates the baselines (see~\figref{fig:intro:qualitative-comparison} for an example). In the introduction (\secref{sec:intro}), we have (a) added~\tabref{tab:alloc_prop} to summarize all the theoretical and empirical results, and (b) updated~\figref{fig:intro:qualitative-comparison} to explicitly show empirical Pareto-dominance and theoretical guarantees.

Theoretically, GB has a worst-case fairness guarantee. AdaptiveWaterfiller comes close but our guarantees are weaker. EB likely has stronger guarantees and our empirical findings echo this but mathematical analysis is more involved.

\subsubsection{Comment 6}
\label{sss:mr_c6}
\quotereview{Single vs Multiple LPs}
In~\secref{s:motivation}, we added a paragraph to explain why it is faster to solve a single optimization compared to multiple. We also include~\secref{ss:expected-run-time} in which we analyze the expected speed up of \sysname~---~the speed up depends on: (1) the number of optimizations in the iterative approach, and (2) the number of variables in the single optimization compared to those in each optimization of the iterative solution. Since \sysname only solves one optimization and adds only a small number of variables, it is faster. 
%\behnaz{The main body says sum of all the variables across all of the LPs == our one shot, the two claims do not match.,}

\subsection{Other Individual Reviews}
\label{rr:individual}

\subsubsection{Reviewer A}
\quotereview{Confusing notation in~\figref{fig:intro:qualitative-comparison}}
We updated~\figref{fig:intro:qualitative-comparison} and replaced the notations.

\quotereview{"rank order of rates"}
Fixed, thanks!

\quotereview{say earlier that the range of problems you can solve is graph-based multi-resource allocations}
Clarified in the first sentence of~\secref{s:unified-formulation} and the title of the paper.

\subsubsection{Reviewer B}
\quotereview{Theorem 1}
Fixed, thanks!

\quotereview{Running allocators in parallel}
We have added a new paragraph in \secref{s:sysname-overview} to acknowledge that we can run these allocators in parallel. However, we believe that in cases where we have strict run-time requirements, running them in parallel may result in violations since we need to wait for all the allocators to finish.

\quotereview{Fairness metric}
To clarify, this is an offline fairness metric. We have computed the optimal max-min fair allocation for each scenario either using Danna (optimal in TE) or Gavel with waterfilling (optimal in CS). Then, we have measured the distance between each allocations obtained from each method and the optimal max-min fair allocation. We have added a sentence in the evaluations to mention this.

\quotereview{Bars in~\figref{fig:traffic-engineering:all-samples:efficiency}}
In~\figref{fig:traffic-engineering:all-samples:efficiency:low}, the bars are small because all the demands are lightly loaded, and therefore, all the algorithms can satisfy almost all the demands (note that fairness is also close to 1 for these cases). We added a sentence in the caption of the figure to clarify this.

\subsubsection{Reviewer C}

\quotereview{Bar in~\figref{fig:traffic-engineering:all-samples:efficiency:low}}
In~\figref{fig:traffic-engineering:all-samples:efficiency:low}, the bars are small because all the demands are lightly loaded, and therefore, all the algorithms can satisfy almost all the demands (note that fairness is also close to 1 for these cases). We added a sentence in the caption of the figure to clarify this.

\quotereview{\figref{fig:traffic-engineering:example} \& \figref{fig:cluster-scheduling:example}}
These figures are intended to demonstrate the trade-off (including the wall-clock run-times) for one example traffic scenario. We have modified the captions to make this clear.

\quotereview{CS experiment details}
We added more details in~\secref{sec::CS_eval_extended} including the various type of jobs we used and how we generate them.

\subsubsection{Reviewer D}

\quotereview{Practical Value}
We would like to thank the reviewer for sharing their experience and their concern regarding practicality and potential simplifying assumptions that can make the work different from reality. However, we want to emphasize that we demonstrated the practical value of our work by integrating it into one of the largest cloud operator's TE controller. And it has improved runtime by $2.4\times$ on average without any impact on fairness or efficiency.

\quotereview{ML in Traffic Engineering}
In~\secref{s:related-work}, we added an explanation on why it is not easy to extend recent ML-based TE solutions~\cite{DOTE,LearningToRoute,Teal} to max-min fairness.

To sum, recent work in ML for TE focuses on objectives such as max flow that require just one LP. Training with stochastic gradient requires that the objectives be convex or quasi-convex~\cite{DOTE}. We are unaware of any work that considers end-to-end training on a sequence of LPs, which is the exact form for max-min fairness. In fact, it may be more tractable for ML+TE methods to learn our GeoBinner which is a single LP and likely convex. All that said, there is no direct counterpart in ML+TE to compare with {\sysname}. We defer to future work to explore how ML can benefit max-min fairness and \sysname. We believe combinatorial algorithms with known guarantees may still have some advantage over unexplainable ML-based solutions.

\subsubsection{Reviewer E}
We have addressed all the reviews as part of our answer to Meta review but we want to clarify the properties of GB.

\parab{Geometric Binner (GB)} is an approximation method that offers per flow fairness guarantees: this means (and we proved) it produces the same solution as SWAN (which Microsoft has been using for a long time) and it does so while being fundamentally faster. \sysname also includes other methods such as EB that are empirically fairer but that does not imply GB is not fair (it always meets its $\alpha$-approximate guarantee in practice).

		\clearpage
		\setcounter{section}{0}
		\renewcommand*{\theHsection}{chY.\the\value{section}}
		\setcounter{figure}{0}
	\fi
	
	\title{Solving Max-Min Fair Resource Allocations Quickly on Large Graphs }
	\author{
{\rm Pooria Namyar$^{\dag\ddag}$, Behnaz Arzani$^{\dag}$, Srikanth Kandula$^{\dag}$, Santiago Segarra$^{\dag\star}$,} \\  \rowcolor{whiteRowColor}
{\rm Daniel Crankshaw$^{\dag}$\thanks{The author contributed to this work while at Microsoft.}~, Umesh Krishnaswamy$^{\dag}$, Ramesh Govindan$^{\ddag}$, Himanshu Raj$^{\dag}$} \vspace{1mm}\\ \rowcolor{whiteRowColor}
$^{\dag}$Microsoft, $^{\ddag}$University of Southern California, $^\star$Rice University
}
	\date{}
	\maketitle
	
	%% Uncomment the following line for camera-ready to add keywords,
	%% acknowledgements etc.
	\noindent
{\bf Abstract--} 
We consider the max-min fair resource allocation problem.
The best-known solutions use either a sequence of optimizations or waterfilling, which only applies to a narrow set of cases. 
These solutions have become a practical bottleneck in WAN traffic engineering and cluster scheduling, especially at larger problem sizes.
We improve both approaches: (1) we show how to convert the optimization sequence into a single fast optimization, and (2) we generalize waterfilling to the multi-path case. 
We empirically show our new algorithms Pareto-dominate prior techniques: they produce faster, fairer, and more efficient allocations.
Some of our allocators also have theoretical guarantees: they trade off a bounded amount of unfairness for faster allocation. 
We have deployed our allocators in Azure's WAN traffic engineering pipeline, where we preserve solution quality and achieve a roughly $3\times$ speedup.

	\section{Introduction}
\label{sec:intro}

Multi-resource fair allocators have become essential for cloud operators as multi-tenancy, availability, and efficiency grow in importance.
These allocators divide the resources fairly among different requests (applications, users, or network flows). 
Operators use them to meet customer expectations, especially during congestion and for network neutrality.

Recent works present fair allocators in settings such as WAN traffic engineering~\cite{danna-practical-max-min,SWAN,B4,Umesh-blastshield} and GPU scheduling~\cite{Gandiva-fair-Chaudhary,Allox-Tan,Gavel-Deepak}. We show these allocators achieve fairness at the cost of {\em speed} (crucial for maintaining high utilization as loads change~\cite{Abuzaid-ncflow}) and {\em efficiency}\footnote{In this paper, we use efficiency and utilization interchangeably.} (essential for profit).

We aim to achieve a better balance between fairness, efficiency, and speed, and our novel algorithms offer operators greater flexibility to control the trade-off between them. We focus on max-min fairness~---~where we cannot allocate more to one request without reducing the allocation of another with {\em an equal or smaller value}~---~because it is simple, commonly used in practice~\cite{SWAN,B4,onewan,Gavel-Deepak}, and can promote efficiency.\footnote{We defer extending to other notions of fairness to future work.}

\begin{figure}
	\centering
	\includegraphics[width=0.95\linewidth]{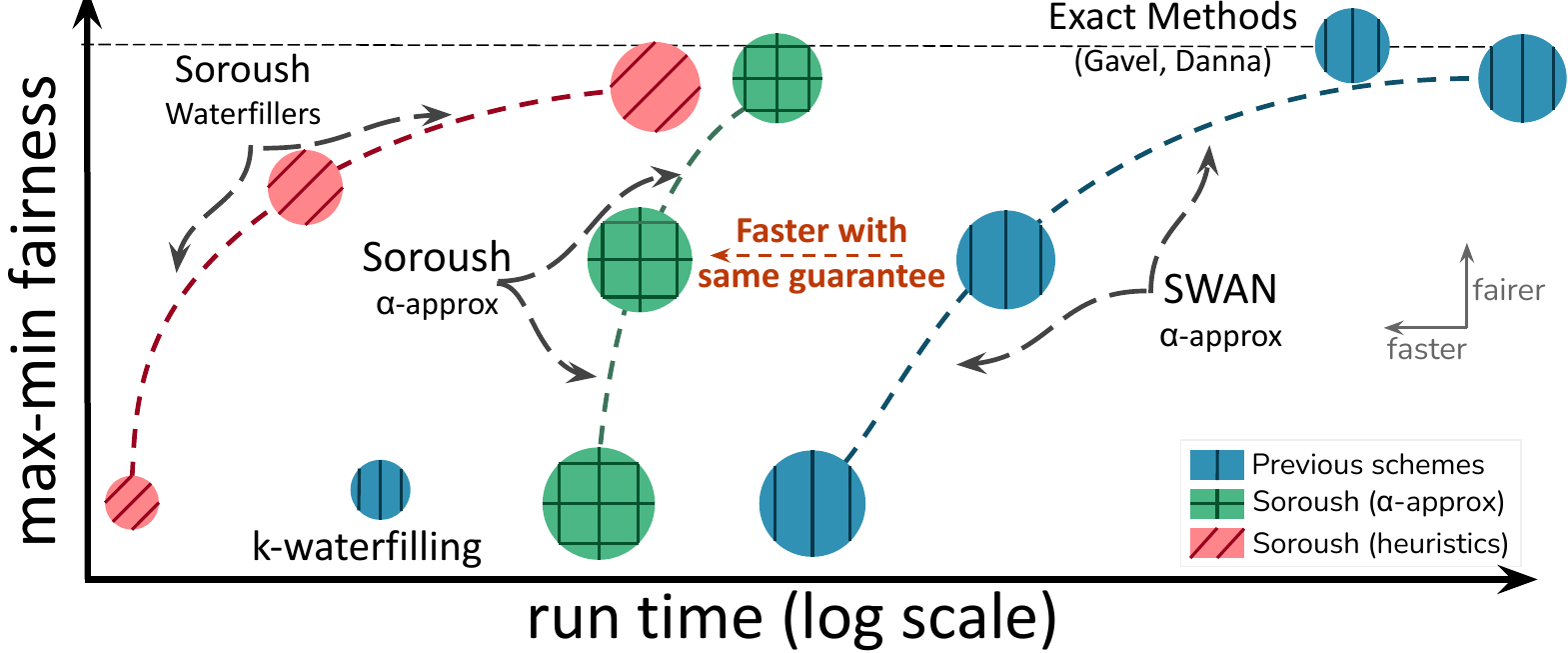}
	\caption{\textbf{Comparing the new allocators with state-of-the-art.} {\sysname} offers parameterizable max-min fair allocators. The axes are fairness and speed; the marker size corresponds to efficiency (larger is more efficient). Our new allocators empirically Pareto-dominate other schemes, and some of them have theoretical guarantees on fairness (\secref{sec:eval}).}
	\label{fig:intro:qualitative-comparison}
\end{figure}

The definition of max-min fairness naturally leads to iterative solutions that prioritize smaller requests over larger ones and assign rates in order from smallest to largest.  In multi-resource settings, these solutions solve either mixed-integer or linear optimizations~\cite{Dritan-Dual-blocking,Pioro-Efficient-Max-Min-Fair} at {\em each} step. Their scalability depends on the size of each individual optimization and the number of iterations~--~typically a function of the number of resources and requests.

%Operators invoke resource allocators when workloads change or failures occur but today's exact solutions~\cite{danna-practical-max-min} are too slow for production scale~---~WANs with 100s of routers that serve millions of flows or clusters with 1000s of jobs. The recent, faster solutions (\eg that trade-off fairness for speed~\cite{SWAN}) can still take \textit{tens of minutes to hours} (\secref{sec:eval}).
%Recent work to speed up these solutions, some of which trade-off fairness for speed~\cite{danna-practical-max-min,SWAN,B4}, can still take \textit{tens of minutes to hours}~(\secref{sec:eval}). 

Operators need to invoke resource allocators when workloads change or failures occur. However, today's exact~\cite{danna-practical-max-min,Gavel-Deepak} or approximate solutions (\eg that trade-off fairness for speed~\cite{SWAN}) are too slow in reacting to these events at the production scale. They take \textit{tens of minutes to hours} (\secref{sec:eval}) on WANs with 100s of routers that serve millions of flows or clusters with 1000s of jobs.

%We ask: are {\em iterative} optimizations necessary for max-min fair resource allocation? Current solutions iterate because they must maximize the allocation for the smallest request before they assign more capacity to other larger requests but do not know the sorted order of max-min fair allocations apriori. One idea that we use adds (1) sorting network constraints~\cite{Sorting-Networks-Batcher} into the optimization to {\em dynamically discover} the sorted order of max-min fair allocations; and (2) uses a linear weighted objective that explicitly incentivizes the optimization to allocate more rates to requests that have smaller indices in the sorted order. The result is a single-shot optimization for max-min fair resource allocation.

We ask: are {\em iterative} optimizations necessary for max-min fair resource allocation? Max-min fair algorithms must maximize smaller allocations before assigning more capacity to larger ones. Current solutions iterate because they do not know the sorted order of these rate allocations apriori. One of our ideas is to (1) use sorting networks~\cite{Sorting-Networks-Batcher} to discover the sorted order of max-min fair allocations {\em dynamically within the optimization} and (2) use a linear weighted objective that explicitly incentivizes the optimization to allocate more rates to requests with smaller indices in the sorted order. The result is a single-shot optimization for max-min fair allocation.

%\vspace{0.05in}
%The above single-shot optimization idea is not always practical because the sorting network constraints can lead to a large optimization and
The above single-shot optimization is not always practical as modeling sorting networks within an optimization can significantly increase its size, and
the linear weighted objective can cause double-precision issues when there are many requests.
To develop a practical solution, we combine this idea with an approximate max-min fair allocator from SWAN~\cite{SWAN}. This combination results in a new allocator, {\sf GeometricBinner} (or {\sf GB}), which is fundamentally faster than SWAN, does not need sorting networks, has no double precision issues, and provides the same fairness guarantees as SWAN.

%Some specialized resource allocation algorithms, such as waterfilling~\cite{Bertsekas-data-netwroks}, can be faster than black-box optimizations. However, waterfilling applies only when each request seeks rates on a single path.
%In a broad class of problems~\cite{Gandiva-fair-Chaudhary,danna-practical-max-min,SWAN,B4,Umesh-blastshield,Allox-Tan,Gavel-Deepak}, requests ask for allocation from multiple paths. 
%In these cases, the global fair share is not locally fair along each path, and waterfilling does not apply.
%Our solution, {\sf AdaptiveWaterfiller} (or {\sf AW}), extends waterfilling to multi-path settings. 
%{\sf AW} is faster than {\sf GB} but does not have a worst-case fairness guarantee.
%We prove {\sf AW} can converge to a small set of allocations that contain the optimal.

Waterfilling-based algorithms~\cite{Bertsekas-data-netwroks} can be faster than black-box optimizations, but they are specialized for cases where each request seeks rates on a single path.
In a broad class of problems~\cite{Gandiva-fair-Chaudhary,danna-practical-max-min,SWAN,B4,Umesh-blastshield,Allox-Tan,Gavel-Deepak}, requests ask for allocation from multiple paths. 
In these cases, the global fair share is not locally fair along each path, and waterfilling does not apply.
Our solution, {\sf AdaptiveWaterfiller} (or {\sf AW}), extends waterfilling to multi-path settings. 
{\sf AW} is faster than {\sf GB} but does not have a worst-case fairness guarantee.
We prove {\sf AW} can converge to a small set of allocations that contain the optimal.

Our third algorithm, which is empirically the fairest, combines the above approaches.
We apply {\sf GB} with one change: use the allocations from {\sf AW} to spread requests more uniformly among bins (instead of the fixed geometrically increasing bin sizes in {\sf GB}). This allocator, {\sf EquidepthBinner} (or {\sf EB}), is slower than both {\sf AW} and {\sf GB} (executes each once) but intuitively improves fairness for the same reason that equi-depth binning reduces histogram approximation error~\cite{jagadish1998optimal}.

\sysname\footnote{Our code is available at \href{https://github.com/microsoft/Soroush}{https://github.com/microsoft/Soroush}} is the collection of these algorithms, each providing a different trade-off among fairness, efficiency, and speed. Operators can use our
simple decision process to choose the allocator (and its hyper-parameters) that achieves their desired trade-off. \tabref{tab:alloc_prop} lists our allocators, their theoretical and empirical properties, and their parameters.

\begin{table}[t]
	\scriptsize
	\centering
	\begin{tabular}{p{0.5in}ll} \hline
		\textbf{\textit{Allocator}} & \textbf{\textit{Properties}} & \textbf{\textit{Parameters}} \\ \hline \hline
		\textbf{Geometric} & ({\bf T}) $\alpha$-approx fairness guarantee & $\alpha$\\ 
		\textbf{Binner} & ({\bf E}) Faster than other $\alpha$-approx methods & $\epsilon$ \\ \hline \hline
		\textbf{Adaptive} & ({\bf T}) Solution in a small set containing optimal & \#iterations\\
		\textbf{Waterfiller} &  ({\bf E}) Fastest & \\ \hline\hline
		\textbf{Equi-depth} & ({\bf T}) Better than Adaptive Waterfiller & \#iterations\\ 
		\textbf{Binner} & ({\bf E}) Fairest and fast & \#bins, $\epsilon$ \\ \hline\hline
	\end{tabular}
	\caption{The {\sysname} allocators, their properties ({\bf T}heoretical and {\bf E}mpirical), and their parameters. These allocators provide different trade-offs among fairness, efficiency, and speed.\label{tab:alloc_prop}}
	
\end{table}

%We present a graph-based model for multi-resource, max-min fair resource allocation problems where edges model resources and paths capture groups of resources the allocator must assign together. Requests (demands) can then ask for resources on any choice of multiple paths. This compact and general model subsumes problems from at least two domains: traffic engineering (TE) and cluster scheduling (CS). \sysname can solve any future max-min fair allocation problem if the user can specify it in this model.

To show \sysname is general, we introduce a graph model for multi-resource, max-min fair resource allocation problems where edges model resources and paths capture groups of resources the allocator must assign together. Requests (demands) can then ask for resources on any choice of multiple paths. This compact and general model subsumes problems from at least two domains: traffic engineering (TE) and cluster scheduling (CS). \sysname can solve any future max-min fair allocation problem if the user can specify it in this model.

Our extensive evaluation in both TE and CS (which we summarize in~\figref{fig:intro:qualitative-comparison}) show the new allocators Pareto-dominate the state-of-the-art in fairness, speed, and efficiency.

We deployed {\sf GeometricBinner} in the production TE pipeline at Azure where it provides a $2.4\times$ average speedup (up to $5.4\times$ in some cases) without any impact on fairness and efficiency compared to the previous allocator.

%. Our results from a months-long deployment on this WAN (which has thousands of nodes) indicate an average speedup of $2.4\times$ (up to $5.4\times$ in some cases) with no impact on fairness or efficiency compared to the provider's previous max-min fair allocator.

	\section{Motivation and Overall Approach}
\label{sec:design}
\label{s:motivation}

\begin{figure}[t]
	\centering
	\includegraphics[width=0.9\linewidth]{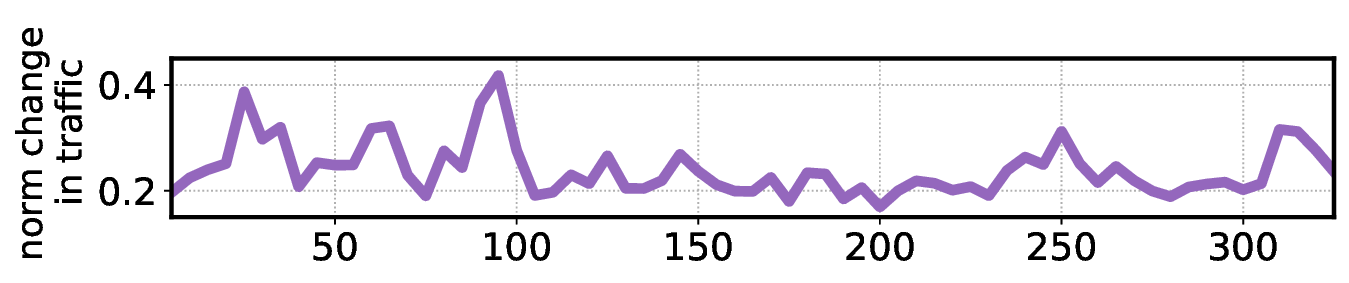}
	\includegraphics[width=0.9\linewidth]{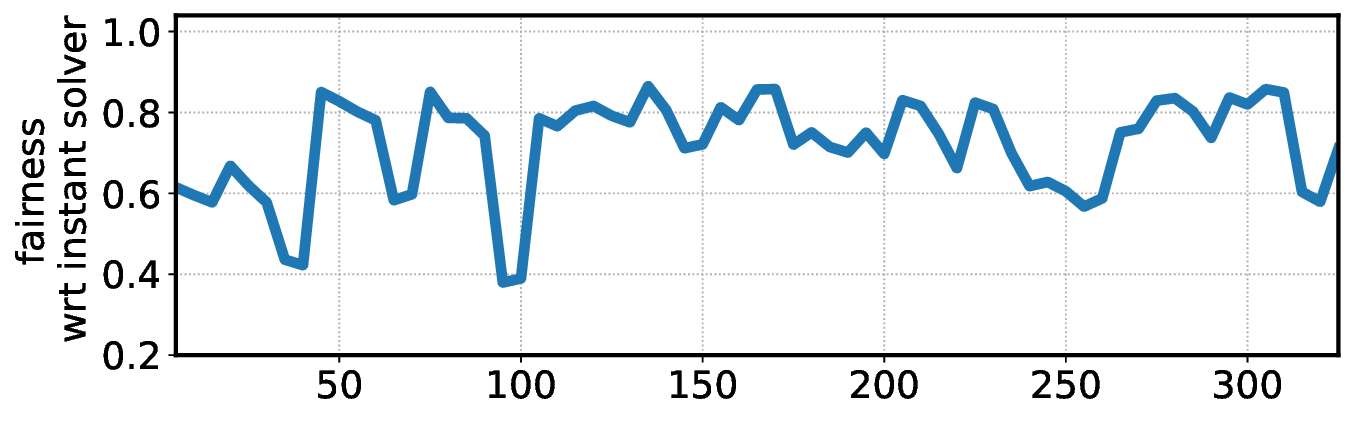}
	\includegraphics[width=0.9\linewidth]{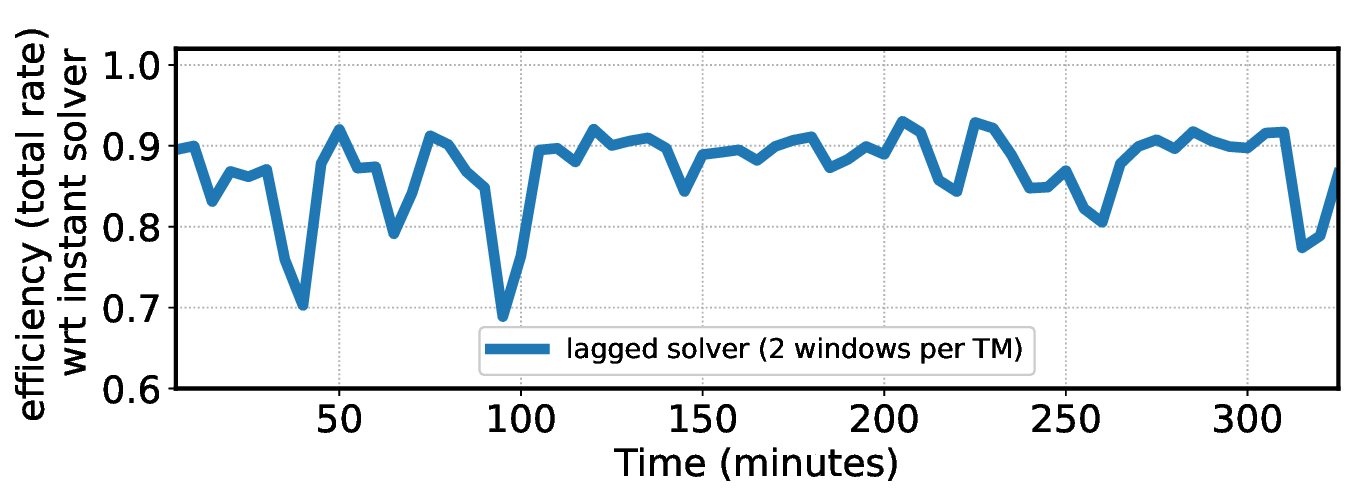}
	\caption{\textbf{Slow max-min fair resource allocators cause under-utilization and unfairness.} We compare two instances of the SWAN solver~--~one computes the allocations instantly while the other needs two windows~--~on a $5$-hour trace from Azure's WAN. The results indicate a large gap between the two solutions in fairness and efficiency.}
	\label{fig:example-slow-max-min-fair}
\end{figure}
Faster workload dynamics~\cite{Abuzaid-ncflow} and higher availability requirements~\cite{FFC} have made fast resource allocation a necessity. Operators of multi-tenant clouds further require solutions that ensure fairness and maintain high efficiency~\cite{SWAN, B4}. Prior work (\cite{Abuzaid-ncflow, FFC, Narayanan-POP, Singh-RADWAN} in TE or~\cite{Gavel-Deepak} in CS) fails to meet one or more of these requirements.

Efficient and fair solvers~\cite{danna-practical-max-min, SWAN, Gavel-Deepak} cannot adapt quickly to conditions that frequently change. Some production environments~\cite{Abuzaid-ncflow} use the most recent previous allocation when the solver cannot allocate resources within a fixed time window.
%If a solver is slow and cannot allocate resources within a fixed window, the resource allocation pipelines in some production environments~\cite{Abuzaid-ncflow} use the most recent previous allocation.
This is sub-optimal: nodes that increase their demands in the new window do not get enough resources, and others who request less may receive more than they need.

\begin{figure}[tb]
	\centering
	\subfigure{\includegraphics[width=0.47\linewidth]{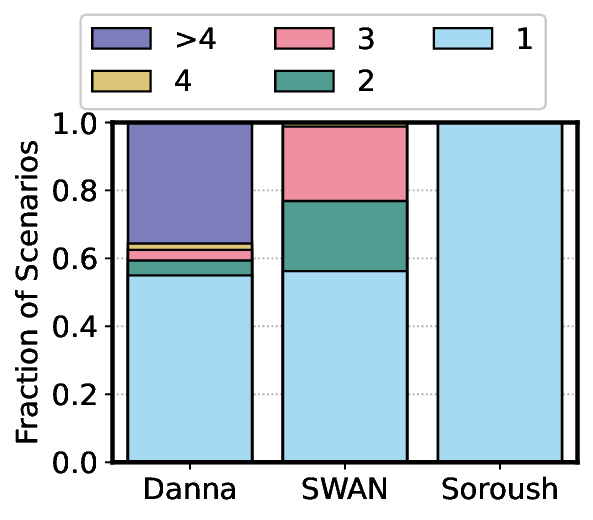}} \hfil
	\subfigure{\includegraphics[width=0.37\linewidth]{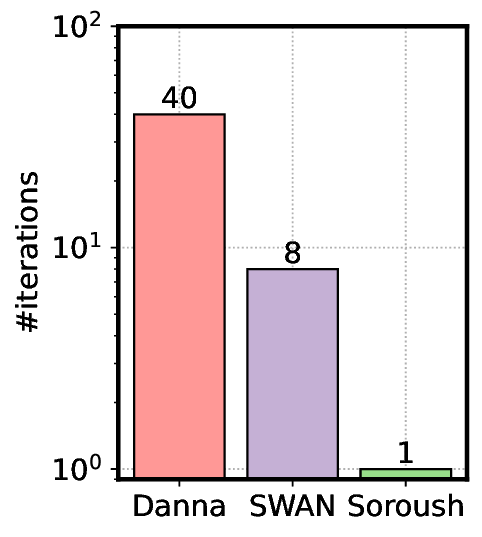}}
	\caption{\textbf{State-of-the-art methods can not keep up with frequently changing demands.} We capture the number of windows (left) and the number of iterations (right) each approach needs. To keep up with demands, they must finish within a single 5-minute window. However, SWAN~\cite{SWAN} and Danna \etal~\cite{danna-practical-max-min} often need more than one window and miss their deadline. The results are on a topology with $\sim$200 nodes and $\sim$500 edges. Left captures 160 different scenarios. Right is a highly loaded scenario~\cite{Abuzaid-ncflow} -- the results hold across all the algorithms in \sysname.}
	\label{fig:dem-change}
\end{figure}

We quantify the impact of this strategy in the TE setting using a $5$-hour trace from Azure's production WAN (\figref{fig:example-slow-max-min-fair}), which uses a $5$-minute window. We observe a solver that needs two windows ($10$~minutes) to allocate resources reduces fairness by $20$\%~--~$60$\% and efficiency by $10$\%~--~$30$\% relative to a solver that completes within one window. %The impact is larger when traffic changes significantly.

How often do solvers miss their deadline? We use traces from~\cite{Abuzaid-ncflow} to show the distribution of the number of windows an exact solver (Danna \etal~\cite{danna-practical-max-min}) and Microsoft's approximate solver (SWAN~\cite{SWAN}) need to compute max-min fair allocations. For nearly half of the traffic trace, these solvers exceed the $5$-minute window and often need $2$ to $3$ windows to finish (\figref{fig:dem-change}, left). This is because these approaches invoke expensive optimizations multiple times (\figref{fig:dem-change}, right). 

\sysname invokes at most one optimization and always completes within a single window. Whether a one-shot optimization is faster than an iterative approach that solves multiple optimizations depends on two factors: (a) the number of optimizations in the iterative approach, and (b) the size of the optimization in the one-shot approach compared to those in the iterative solution\footnote{LP solver latency is polynomial in the problem size~\cite{boyd2004convex,solvingLP}.}. Our one-shot optimizations are faster than previous solutions~\cite{danna-practical-max-min, SWAN, Gavel-Deepak} because we only add a small number of variables to convert the problem into one that can be solved in one shot, and we avoid the overhead in solving multiple optimizations. See~\secref{ss:expected-run-time} for a detailed analysis.

While examples here are from TE, CS resource allocators are similar~\cite{Gavel-Deepak}. We omit the details for brevity.

\subsection{Our model}
\label{s:unified-formulation}
We model the max-min fair resource allocation problem as a capacitated graph\footnote{We present the formulation of this model in~\secref{sec:general-form}}. Each {\em edge} represents a different resource, and {\em edge capacities} show the amount of available resources. {\em Paths} in the graph encode a collection of resources we must allocate together (\eg GPU and memory), and {\em demands} can request resources on a subset of these paths. Our model also supports other affine constraints over graph variables (\eg \textcolor{maroon}{text in maroon} below). \sysname solves any max-min fair resource allocation problem we can specify in this model.

Our model takes as input:
\vspace{-0.05in}
\begin{Itemize}
\item A set of resources $\mathcal{E}$, each with capacities $c_e, e \in \mathcal{E}$.
\item A set of paths $\mathcal{P}$ where each path is a group of dependent resources that we must allocate together.
\item A set of demands $\mathcal{D}$ where each demand $k \in \mathcal{D}$:
	\begin{Itemize}
	\item Requests some rate $d_k$.
	\item Has weight $w_k$ (for weighted max-min fairness).
	\item Can be routed over a set of multiple paths $P_k \in \mathcal{P}$.
	\item \textcolor{maroon}{Consumes $r_k^e$ of the capacity on edge $e$ for each unit rate we assign.}
	\item \textcolor{maroon}{Has utility $q_k^p$ on path $p$ for each unit rate.}
	\end{Itemize}
\end{Itemize}

A max-min fair allocator assigns rates to demands such that the weighted ratios $\{\frac{f_k}{w_k}\}$ are max-min fair: to increase the allocation of any demand, we have to reduce the allocation of another demand with a smaller ratio.

{\tiny
	\begin{table}[t!]
		\centering
		{\footnotesize
			\begin{tabular}{l p{2.6in}}
				\renewcommand{\arraystretch}{1.3}
				{\bf Term} & {\bf Interpretation}\\    \hline \hline
				$\mathcal{E}, \mathcal{D}, \mathcal{P}$ & sets of resources, demands and paths \\\hline
				$c_e$ & capacity of resource $e \in \mathcal{E}$\\\hline
				$\mathbf{f}, f_k^p$ & rate allocation vector and rate for demand $k$ on path $p$\\\hline
				$d_k, w_k$ & requested rate and weight for demand $k \in \mathcal{D}$\\\hline
				$r^e_k, q^p_k$ & scaling resource consumption and rate utility for demand $k$\\\hline
			\end{tabular}
		}
		\caption{Notation for our max-min fair resource allocation model. (more details in \tabref{t:notation-general})\label{t:notation}}
	\end{table}
}

\begin{figure}[t!]
	\centering
	\includegraphics[width=0.8\linewidth]{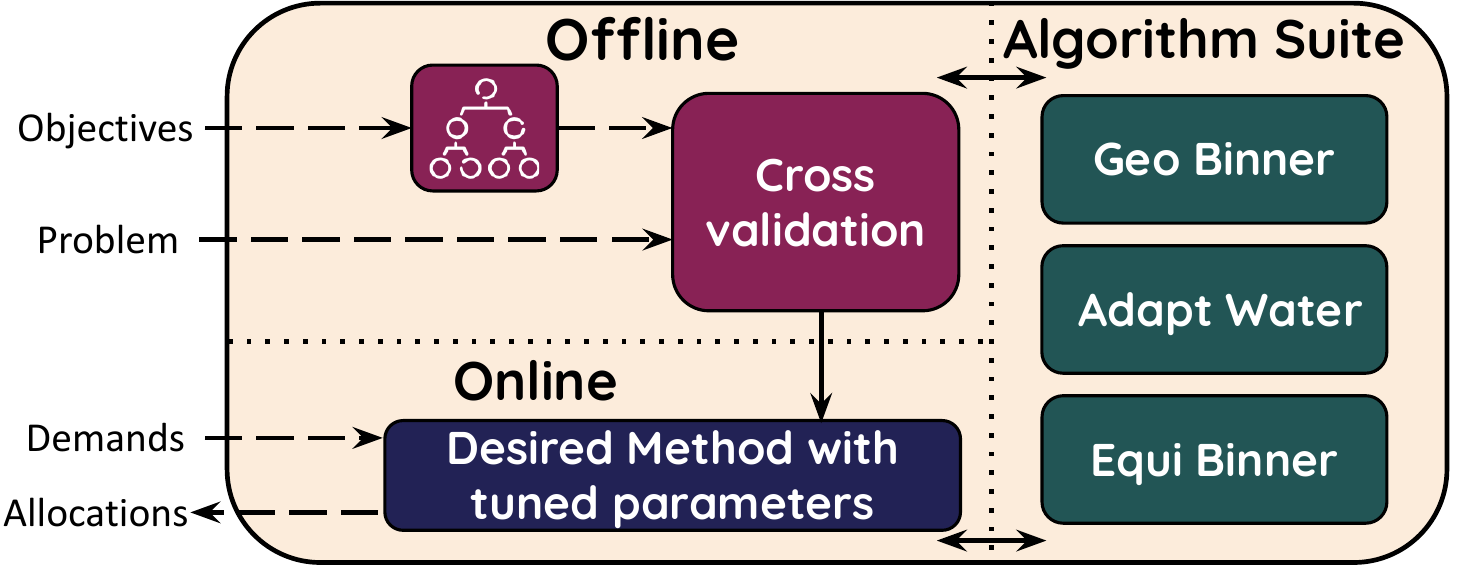}
	\caption{Choosing allocators (and their parameters).}
	\label{fig:design-overview}
\end{figure}

%It is hard to solve this general model of max-min fairness: when we must allocate resources along multiple paths (multiple {\em groups} of {\em resources}), local fairness does not imply global fairness, so single-path solutions~\cite{s-PERC-Lavanya} are ineffective.
%%%%%%% we can use single path solution but they are ineffective.

%\vspace{0.02in}
We can use this model to specify max-min fair allocation problems in traffic engineering (TE)~\cite{SWAN,B4} and cluster scheduling (CS)~\cite{Gavel-Deepak, drf, tetris}. 
%The terms (edge, path, and demand) have natural meanings in TE. We apply the model to CS to demonstrate that it is broad.
%Although the edge, path, and demand terms have a natural meaning in TE, the model is broad~---~we apply it to CS to demonstrate this.

\parab{TE.} The actual links in the network are the resources, and demands are services that require a specific rate between nodes in the network. The TE scheme picks the paths for each demand. Weights describe how the operator wants to divide the rates (\eg split rate between search and ads services).

\parab{CS.} Each path corresponds to a server and contains multiple edges. Each edge models a different type of resource on each server (\eg CPU, memory, or GPU). Demands are jobs that require a number of workers. We model heterogeneity (workers may progress at different rates on different servers) with the utility term  $q_k^p$ and scale how much of each resource the worker uses with $r_k^e$.\footnote{We can also use these terms to model similar aspects in TE.} We also support extensions, such as jobs with varying resource requirements~\cite{drf,tetris}.

%We cannot represent all problems as graphs~\cite{gersht1990joint}, but we are unaware of any model that is as general as ours for max-min fair resource allocation.

%We are unaware of any other model that is as general as ours for max-min fair resource allocation. Future work can focus on extending \sysname to problems not representable as graphs~\cite{gersht1990joint}.

\vspace{0.02in}
We are unaware of any model as general as ours for max-min fair allocation. However, solving this general model is hard: when we must allocate resources along multiple paths ({\em groups} of {\em resources}), local fairness does not imply global fairness, so single-path solutions~\cite{s-PERC-Lavanya} are ineffective.
In \sysname, we focus on this model and leave the extension to problems that we cannot model as graphs~\cite{gersht1990joint} for future work.

%%%%%% we can use single path solution but they are ineffective.

\subsection{\systitle Overview}
\label{s:sysname-overview}

\sysname offers a suite of \textit{allocators} that produce approximate solutions for graph-based max-min fair allocation problems. An allocator is either an algorithm or optimization (or a combination of both) that assigns max-min fair rates that meet the demand and capacity constraints. 

\tabref{tab:alloc_prop} lists our allocators, their key properties, and salient parameters that let the user trade-off between fairness, speed, and efficiency (see~\secref{s:sysn-solv-engine} for more details). We have many choices here and na\"{i}vely running multiple allocators in parallel can waste computational resources.
We suggest a simple decision process along with a hyperparameter search to help practitioners choose a suitable allocator (\figref{fig:design-overview} and \ref{fig:decision-tree}).
A sensitivity analysis in~\secref{sec:eval} indicates that this decision process is robust.
We do not claim any credit for it but use it as evidence that we have effective mechanisms to make the choice.

The {\sf GeometricBinner}~(\secref{s:oneshotopts}) is the only allocator in \sysname with worst-case fairness guarantees: for a given $\alpha > 1$, it provably assigns rates to each demand within $[\alpha^{-1}, \alpha]$ times its optimal max-min fair rate. {\sf EquidepthBinner}~(\secref{s:comb_ext}) is the best choice for users who prefer fairness and efficiency but do not need formal worst-case guarantees. {\sf AdaptiveWaterfiller}~(\secref{s:genwaterfilling}) is suitable for users who prefer speed over efficiency.

\begin{figure}[t!]
	\includegraphics{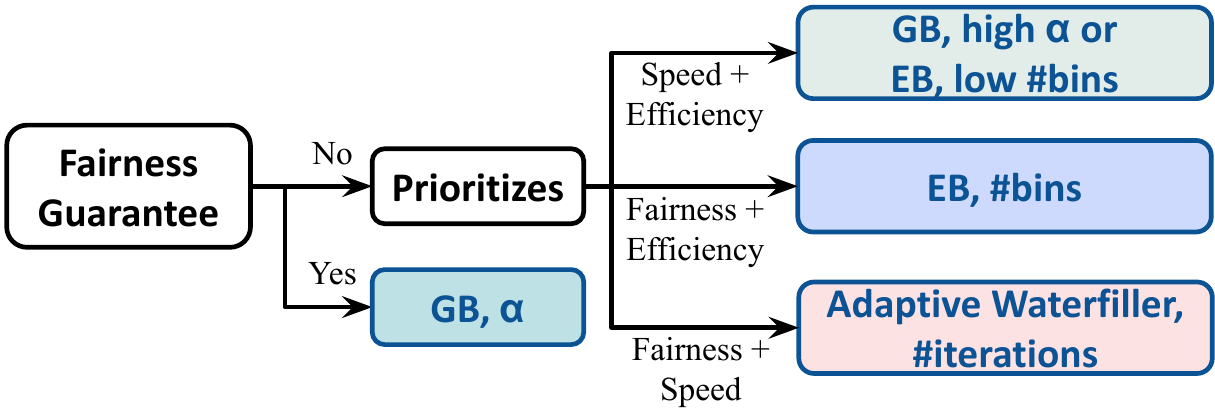}
	\caption{An example of how to pick the right allocator.}
	\label{fig:decision-tree}
\end{figure}

	\section{Max-Min Fair Resource Allocators}
\label{s:sysn-solv-engine}

We present two novel kinds of multi-resource max-min fair resource allocation algorithms.

\subsection{One-shot Optimizations}
\label{s:oneshotopts}
\label{sec::opt}

\noindent {\bf Overview.} We can think of max-min fair resource allocation as an optimization with a prioritized list of objectives: first, we maximize the smallest allocation, then the second smallest, and so on. This intuition naturally leads to a sequence of linear optimization problems (LPs)~\cite{danna-practical-max-min, SWAN}.

Prior exact methods are slow since they solve nearly as many LPs as the number of unique resources~\cite{danna-practical-max-min}~(number of edges in a network or machines in a cluster).

We will show how to linearize a prioritized list of objectives such that we can solve one optimization instead of a (long) sequence. The optimization we arrive at is analytically interesting but can encounter double-precision errors, requires a sorting network to sort allocations, and is consequently slower in practice~---~we instead linearize an approximate version.

%SWAN~\cite{SWAN} uses an approximate solution where it uses fewer LPs and geometrically increases the total rate it allocates to each demand to guarantee the rates it allocates will be within $\alpha\times$ their optimal fair allocations. Users pick an $\alpha$~(e.g., SWAN~\cite{SWAN} uses $\alpha=2$). A large $\alpha$ requires fewer LPs but leads to less fair allocations.\footnote{The number of LPs is $\log_\alpha Z$ where $Z$ is the ratio between the largest and the smallest rate.}
%Microsoft has used SWAN in production for many years now~\cite{onewan}. 

SWAN~\cite{SWAN} uses an approximate solution that needs to solve fewer LPs. It gradually and geometrically increases the maximum possible rate for each demand and guarantees the final allocations are within $\alpha\times$ their optimal fair rates. Users pick an $\alpha$ based on their requirements for fairness and speed~(e.g., $\alpha=2$ in SWAN). A larger $\alpha$ requires fewer LPs but results in less fair allocations.\footnote{The number of LPs is $\log_\alpha Z$ where $Z$ is the ratio between the largest and the smallest request.}
Microsoft has been using SWAN in production for many years~\cite{onewan}. 

We develop an approximate one-shot optimization by linearizing SWAN's approximate geometric method. Our idea is to define ``bins'' that capture the geometric rate increase at each iteration and introduce new variables to model each flow's allocation from each bin instead of the cumulative total. This combination of techniques is novel and has the same worst-case fairness guarantees as SWAN. 
By linearizing at the granularity of bins, we no longer encounter double precision issues, do not need a sorting network, and achieve an empirically faster solution.

We flesh out the details next (Tables~\ref{t:notation} and~\ref{t:notation-soroush} show our notations).
We discuss why this one-shot optimization is fundamentally smaller and faster than SWAN's sequence of optimizations. We present results from our production deployment in~\secref{subsec:TE}.

\parab{Max-min fair allocation as a sequence of LPs.} If we have $n$ demands, we can use $n$ LPs to compute max-min fair allocations~---~the $i^{\mbox{\tiny {th}}}$ LP in the sequence maximizes the $i^{\mbox{\tiny {th}}}$ smallest rate. Let $t_i$ be the $i^{\mbox{\tiny {th}}}$ smallest rate, then:

\vspace{-0.15in}
{\small
	\begin{align}
	\label{eq:seq-max-min-flow}
	& \mbox{\sf MaxMin}_{i}(\mathcal{E}, \mathcal{D}, \mathcal{P})~\triangleq && \hspace{-3mm} \arg\max~~ t_{i} \\
	& \hspace{4mm}\mbox{s.t.} \quad && \hspace{-17mm} (t_1, \dots, t_{i-1}) \in \mbox{\sf MaxMin}_{i-1}(\mathcal{E}, \mathcal{D}, \mathcal{P}), \nonumber \\
	&&& \hspace{-17mm} f_k \geq t_{i},\quad\quad \forall k ~~\mbox{whose rate is not yet frozen} \nonumber \\ 	 
	&&& \hspace{-17mm} \mathbf{f} \in {\sf FeasibleAlloc}(\mathcal{E}, \mathcal{D}, \mathcal{P}).\nonumber
	\end{align}
}

%{\small
%\begin{argmaxi}[3]
%	{}{t_i}
%	{\label{eq:t}}
%	{\mbox{\sf MaxMin}_{i}(\mathcal{E}, \mathcal{D}, \mathcal{P})=}
%	\addConstraint{(t_1, \dots, t_{i-1})}{\in \mbox{\sf MaxMin}_{i-1}(\mathcal{E}, \mathcal{D}, \mathcal{P})}
%	\addConstraint{f_k}{\geq t_{i},~ \forall k ~~\mbox{whose rate is not yet frozen}}
%	\addConstraint{\mathbf{f}}{\in  {\sf FeasibleAlloc}(\mathcal{E}, \mathcal{D}, \mathcal{P})}
%\end{argmaxi}
%}
Note that the algorithm freezes demands that can not receive more than $t_i$ in each iteration. These demands will not receive any allocations in later iterations.
{\tiny
	\begin{table}[t!]
		\centering
		{\footnotesize
			\renewcommand{\arraystretch}{1.3}
			\begin{tabular}{l p{2.3in}}
				{\bf Term} & {\bf Meaning}\\
				\hline
				\hline
				$\mathbf{t}$, $t_i$ & sorted rate vector and the $i^{\mbox{\tiny {th}}}$ smallest rate\\
				\hline
				$N_\beta, \mathcal{D}_b$ & number of bins and set of demands in bin $b$\\	\hline
				${\ell}_b, s_b$ & boundary and slackness of bin $b$\\
				\hline
				$\mathbf{f}_b$, $f_{kb}$ & bin allocation vector and rate of demand $k$ in bin $b$ \\
				\hline
				%	\hline
			\end{tabular}
		}
		\caption{Additional notation for {\sysname}. \label{t:notation-soroush}}%\pooria{Need to generalize the Table 1 notation}
	\end{table}
}

\parab{Our one-shot optimal max-min fair solution.} We change~\eqnref{eq:seq-max-min-flow} (changes are in color) to a single optimization by (1) using a sorting network~\cite{Sorting-Networks-Batcher, pretium,FFC} to sort the rates as part of the optimization (\figref{fig:sorting-network-example}), and (2) using a linear weighted objective where the weight of a demand decreases based on its rank in the sorted order~---~these weights incentivize the optimization to increase the smaller rates.

\eqnref{eq:seq-max-min-flow} does not need sorting because each LP maximizes the next smallest rate. The one-shot optimization, however, must explicitly sort the allocations in order to weight them appropriately in the objective. Let $\epsilon < 1$, then:

\vspace{-0.1in}
{\small
	\begin{align}
	\label{eq:opt_max_min_flow}
	& \mbox{\sf OneShotOpt}(\mathcal{E}, \mathcal{D}, \mathcal{P}) ~ \triangleq && \hspace{-10mm} \arg\max_{\mathbf{f}} \textcolor{maroon}{\sum_{i=1}^{n} \epsilon^{i-1}} t_i \\
	& \hspace{4mm} \mbox{s.t.} \quad && \hspace{-28mm}\textcolor{maroon}{({t_1},\dots,{t_n}) = \text{\sf sorted rates} (\mathbf{f}),} \nonumber \\
	&&& \hspace{-28mm} \mathbf{f} \in {\sf FeasibleAlloc}(\mathcal{E}, \mathcal{D}, \mathcal{P}).\nonumber
	\end{align}
}

We prove {\sf OneShotOpt} leads to max-min fair rates:
\begin{theorem}
	\label{th:epsilonOneShot}
	There exist values of $\epsilon$ for which the optimization in~\eqnref{eq:opt_max_min_flow} yields the same max-min fair rate allocations as the sequence of optimizations shown in~\eqnref{eq:seq-max-min-flow}.
\end{theorem}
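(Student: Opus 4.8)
The plan is to show that for a suitable choice of $\epsilon$, a maximizer of the scalar objective $\sum_{i=1}^n \epsilon^{i-1} t_i$ (over feasible allocations, with $(t_1,\dots,t_n)$ the sorted rate vector) is exactly a max-min fair allocation, i.e. its sorted rate vector equals the one produced by the sequence in~\eqnref{eq:seq-max-min-flow}. The key structural fact I would rely on is that the sorted max-min fair vector is the unique leximax-optimal point: among all feasible sorted rate vectors $\mathbf{t}$, it is the one that is lexicographically largest when we compare entries in increasing order of index (maximize $t_1$, then $t_2$, and so on). This is standard and is exactly what~\eqnref{eq:seq-max-min-flow} computes, so I would first state it (citing the sequence-of-LPs characterization already in the excerpt) as the target to hit.

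Next I would argue that a weighted sum with rapidly decaying weights implements this lexicographic comparison. Fix the feasible region $\mathcal{F}$ of allocations; its image under "sorted rates" is a bounded set $S \subseteq \mathbb{R}^n$ (bounded because demands and capacities are finite). The heart of the argument is: there exists $\epsilon_0 \in (0,1)$ such that for all $\epsilon \in (0,\epsilon_0)$, $\arg\max_{\mathbf{t}\in S}\sum_{i=1}^n \epsilon^{i-1} t_i$ is the leximax-optimal vector $\mathbf{t}^\star$. I would prove this by a finite case analysis / compactness argument rather than a single clean threshold. One clean route: since $S$ is the continuous image of a polytope it is compact, and the leximax optimum $\mathbf{t}^\star$ is unique; for any $\mathbf{t}\in S$ with $\mathbf{t}\neq\mathbf{t}^\star$, let $j$ be the first index where they differ, so $t_j < t^\star_j$ (by leximax optimality of $\mathbf{t}^\star$) and the gap is $t^\star_j - t_j$. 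Then
\[
\sum_{i=1}^n \epsilon^{i-1}(t^\star_i - t_i) = \epsilon^{j-1}(t^\star_j - t_j) + \sum_{i>j}\epsilon^{i-1}(t^\star_i - t_i) \ge \epsilon^{j-1}(t^\star_j - t_j) - \epsilon^{j}\,\frac{M}{1-\epsilon},
\]
where $M$ bounds $|t^\star_i - t_i|$ uniformly over $S$. The main subtlety is that the gap $t^\star_j - t_j$ can be made arbitrarily small by choosing $\mathbf{t}$ close to $\mathbf{t}^\star$, so a fixed $\epsilon$ does not obviously dominate the tail for every competitor simultaneously — this is the point I expect to be the real obstacle. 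I would handle it by noting that $S$, being a polytopal image, is a finite union of polytopes, so the set of achievable sorted vectors has finitely many "faces"; the relevant quantity is the smallest positive value of $t^\star_j - t_j$ over the finitely many vertices/facets where a first-difference index $j$ can occur, and on the interior of faces the objective is linear so the optimum is attained at a vertex. Restricting attention to the finitely many vertices of (the pieces of) $S$ makes the gap $t^\star_j - t_j$ bounded below by a positive constant $\delta$ whenever it is nonzero, and then any $\epsilon < \min_j \delta/(\delta + M)$ (a strict, explicit threshold) makes the displayed quantity strictly positive, so $\mathbf{t}^\star$ strictly beats every vertex competitor, hence every feasible competitor.

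Finally I would tie this back to the optimization as written: a maximizer $\mathbf{f}$ of~\eqnref{eq:opt_max_min_flow} has sorted rate vector in $\arg\max_{\mathbf{t}\in S}\sum \epsilon^{i-1}t_i = \{\mathbf{t}^\star\}$, which by the characterization of~\eqnref{eq:seq-max-min-flow} is precisely the sorted max-min fair rate vector; since the freezing procedure in~\eqnref{eq:seq-max-min-flow} pins down each demand's individual rate (not just the sorted multiset) in the generic case, and in any case the theorem only claims the same rate \emph{allocations} in the leximax sense, we are done. I would flag two care points for the write-up: (i) argue that the sorting-network constraints in~\eqnref{eq:opt_max_min_flow} exactly encode "$(t_1,\dots,t_n)$ = sorted rates of $\mathbf{f}$", so optimizing over $(\mathbf{f},\mathbf{t})$ jointly is the same as optimizing the composed objective over $\mathbf{f}$; and (ii) the existence claim in the theorem is only "there exist values of $\epsilon$", so I do not need the sharpest constant — the compactness/finite-vertex argument giving some explicit $\epsilon_0>0$ suffices.
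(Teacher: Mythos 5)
Your proposal is correct, and it takes a genuinely different route from the paper. The paper's proof sketch is a direct comparison-plus-limit argument: it observes that the max-min fair vector $\mathbf{t}^*$ is feasible for~\eqnref{eq:opt_max_min_flow}, so the one-shot optimum $\mathbf{t}^\dag$ has at least as large an objective value; rearranging gives $t^*_1 - t^\dag_1 \leq \epsilon\cdot(\text{bounded tail})$, which combined with $t^*_1 \geq t^\dag_1$ forces the first coordinates to agree as $\epsilon \to 0$, and the remaining coordinates follow by induction. You instead invoke the leximax characterization of max-min fairness and then discretize: the sorted image of the feasible polytope is a finite union of polytopes, the linear objective attains its maximum at one of finitely many vertices, the first-difference gap $t^\star_j - t_j$ over those vertices is bounded below by some $\delta > 0$, and any $\epsilon < \delta/(\delta+M)$ makes $\mathbf{t}^\star$ strictly dominate every competitor. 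The trade-off is instructive. The paper's argument is shorter and more elementary, but read literally it only establishes agreement \emph{in the limit} $\epsilon \to 0$; since the theorem asserts that some fixed positive $\epsilon$ yields \emph{exactly} the max-min fair allocation, one needs precisely the kind of uniform positive-gap argument you supply to upgrade ``in the limit'' to ``for all $\epsilon \in (0,\epsilon_0)$'' (and the paper's inductive step quietly relies on the same upgrade at each coordinate). Your version buys an explicit, computable threshold $\epsilon_0$ and correctly isolates the one real obstacle --- competitors arbitrarily close to $\mathbf{t}^\star$ --- which the paper's sketch does not address. The remaining polish items you flag (uniqueness of the leximax optimum over a convex feasible set, and the step from ``same sorted vector'' to ``same allocation'') are standard for polytopal feasible regions and do not affect the substance.
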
 

\begin{proof}[Proof Sketch]
Let ${\mathbf{t^\dag}}$ be the rate vector solution from~\eqnref{eq:opt_max_min_flow}.
Notice that the optimal max-min fair rate vector, say ${\mathbf{t^*}}$, is a feasible solution to~\eqnref{eq:opt_max_min_flow}. 
Thus, $\sum_i \epsilon^{i-1} t^\dag_i \geq \sum_i \epsilon^{i-1}t^*_i$ (otherwise $\mathbf{t^\dag}$ is not the optimal solution to \eqnref{eq:opt_max_min_flow}).
We can re-arrange and get $t^*_1 - t^\dag_1 \leq {\epsilon}\left(\sum_{i >1} \epsilon^{i-1} t^\dag_i - \sum_{i > 1} \epsilon^{i-1}t^*_i\right)$.

By definition of max-min fairness, we have $t^*_1 \geq t^\dag_1$ since $t^*_1$ is the smallest allocation in the optimal max-min fair solution. If we can find a feasible assignment where the smallest rate $t^\dag_1$ is higher, then $t^*$ cannot be max-min fair~---~we can increase the smallest rate without hurting any other demand with an smaller allocation (because no such demand exists).
These statements together imply the smallest rates must match as $\epsilon \rightarrow 0$.
The rest follows by induction.
\end{proof}

{\sf OneShotOpt} is not practical. We need a small $\epsilon$ to find an optimal solution (see proof), but we will encounter double precision errors if the smallest weight ($\epsilon^{n-1}$) is too small. In this case, the formulation may have to sacrifice optimality and use a large $\epsilon$ to solve the one-shot optimization (especially when there are many demands). Even with a large $\epsilon$, we find that solving LPs with a full sorting network is slow~\cite{pretium,FFC} since sorting networks add $O(nlog^2(n))$ additional constraints.

\parab{Our one-shot {\sf \bf GeometricBinner}}~(or {\sf GB}) linearizes the following approximate max-min fair technique. Compared to~\eqnref{eq:seq-max-min-flow},~\eqnref{eq:seq-approx-max-min-flow} shows a shorter sequence of LPs inspired by SWAN~\cite{SWAN} (changes in color) but also differs from SWAN~\footnote{\eqnref{eq:seq-swan} in~\secref{sec:swan_form} shows SWAN's formulation for comparison.} in one crucial way (it introduces new variables to track the {\em increase} in the allocation of each demand in each iteration):

\vspace{-0.15in}
{\small
	\begin{align}
	\label{eq:seq-approx-max-min-flow}
	&\mbox{\sf ApproxMaxMin}_{\textcolor{maroon}{b}}(\mathcal{E}, \mathcal{D}, \mathcal{P})~\triangleq && \hspace{-1mm}\arg\max \textcolor{maroon}{\sum_{k \in \mathcal{D}} {f_{k}}} \\
	& \hspace{4mm} \mbox{s.t.} \quad && \hspace{-25mm} \textcolor{maroon}{f_{k} = \sum_{\mbox{bins}~j \leq b} f_{kj},} \quad & \hspace{-1mm} \textcolor{maroon}{\forall k \in \mathcal{D}} \nonumber\\
	&&& \hspace{-25mm} \textcolor{maroon}{f_{k1} \le U,} & \hspace{-1mm} \textcolor{maroon}{\forall k \in \mathcal{D}} \nonumber\\
	&&& \hspace{-25mm} \textcolor{maroon}{f_{kb} \leq U(\alpha^{b-1} - \alpha^{b-2}),} & \hspace{-1mm} \textcolor{maroon}{\forall b > 1, \forall k \in \mathcal{D}} \nonumber \\ 
	&&& \hspace{-25mm} \textcolor{maroon}{f_{kb}  = 0 \quad\quad \mbox{if}~\sum_{j < b} f_{kj} < U\alpha^{b-2},} & \hspace{-1mm} \textcolor{maroon}{\forall b>1, \forall k \in \mathcal{D}} \nonumber \\
	&&& \hspace{-25mm} (\mathbf{f}_{1}, \dots, \mathbf{f}_{b-1}) \in \mbox{\sf ApproxMaxMin}_{b-1}, \nonumber \\
	&&& \hspace{-25mm} \mathbf{f} \in {\sf FeasibleAlloc}(\mathcal{E}, \mathcal{D}, \mathcal{P}).\nonumber
	\end{align}
}

The changes cause the $b^{\mbox{\tiny{th}}}$ LP, where index $b$ begins at $1$, to allocate only {\em up to} $U(\alpha^{b-1} - \alpha^{b - 2})$ for $b > 1$ and up to $U$ for $b=1$. The algorithm also freezes any demand that does not receive the full rate from the previous iteration. $\alpha$ and $U$ are input parameters that control the fairness guarantee and the minimum rate, respectively. Observe that each LP in the sequence allocates rates unfairly, but the unfairness is bounded as each LP allocates rates only within a small range. 
%The number of LPs (or bins) in this method is $\log_\alpha Z$, where $Z$ is the ratio of the maximum rate to the minimum rate.

\begin{figure}[t!]
	\centering
	\includegraphics[width=0.85\linewidth]{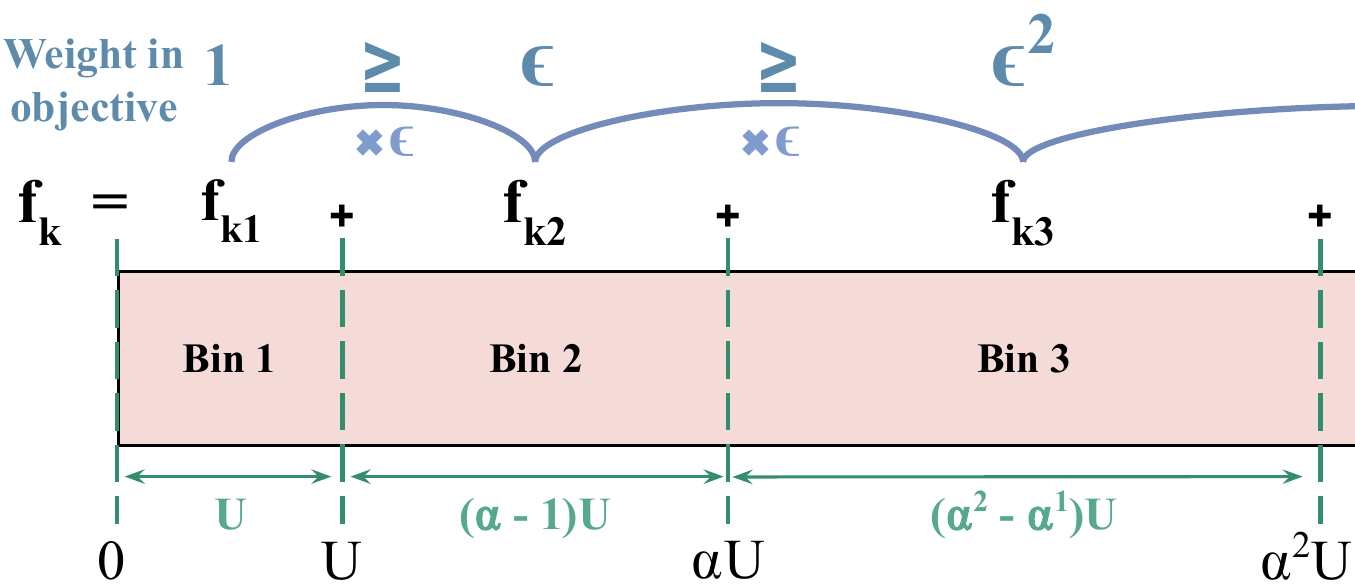}
	\caption{\textbf{Geometric Binning: } Approximate one-shot max-min fair allocations. We can model the problem in one shot because we add a new variable to track the allocation to each demand from each bin. With this idea, we can then change the objective to incentivize the optimization to make sure its allocation saturates smaller bins before allocating from subsequent bins.}
	\label{f:geo-binning-example}
\end{figure}

\vspace{0.03in}
\figref{f:geo-binning-example} shows the key idea behind our one-shot geometric binner.
If we consider each allocation as the sum of contributions from different, geometrically-sized bins, we can use $\epsilon$-weighting per bin to incentivize the optimization to allocate more from the smaller bins. 
The resulting formulation is:

\vspace{-0.1in}
{\small
	\begin{align}
	\label{eq:new_geobinning}
	& \mbox{\sf GeoBinning}(\mathcal{E}, \mathcal{D}, \mathcal{P})~\triangleq && \hspace{-1mm} \arg\max_{\mathbf{f}} \sum_{k \in \mathcal{D}}~\textcolor{maroon}{\sum_{\mbox{bins} ~ b} \epsilon^{b-1} f_{kb}}\\
	& \hspace{4mm} \mbox{s.t.} \quad && \hspace{-17mm} f_{k} = \sum_{\mbox{bins} ~ b} f_{kb}, & \hspace{-9mm}\forall k \in \mathcal{D} \nonumber\\
	&&& \hspace{-17mm} f_{k1} \le U, & \hspace{-9mm} \forall k \in \mathcal{D} \nonumber\\
	&&& \hspace{-17mm} f_{kb} \le U(\alpha^{b-1} - \alpha^{b-2}), & \hspace{-9mm} \forall b > 1,\forall k \in \mathcal{D} \nonumber\\
	&&& \hspace{-17mm} \mathbf{f} \in {\sf FeasibleAlloc}(\mathcal{E}, \mathcal{D}, \mathcal{P}).\nonumber 
	\end{align}
}

The geometric binner (\eqnref{eq:new_geobinning}):
\begin{Itemize}
	\item Applies to various bin choices beyond the geometric ones we used here. We use a similar intuition from equi-depth binning in databases~\cite{jagadish1998optimal} to show in \secref{s:comb_ext} that we can choose bin boundaries to improve fairness.
	
	\item Offers the same fairness guarantee as SWAN~\cite{SWAN} when using the same (geometric) bins. {\sf GeoBinning} allocates rates within an $\alpha$ ratio of the optimal max-min fair rates for any demand.
	\begin{theorem}
		\label{th:binFilling}
		\eqnref{eq:new_geobinning} assigns resources to a demand $k$ in bin $b$ only if it has assigned demand $k$ the full rate from all of the larger-weighted bins. \end{theorem}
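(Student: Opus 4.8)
The plan is to prove Theorem~\ref{th:binFilling} by a local exchange argument on an optimal solution of the linear program~\eqnref{eq:new_geobinning}. First I would record two preliminaries. (i) The feasible region is nonempty (the all-zero allocation satisfies every constraint) and the objective is bounded above (each $f_{kb}$ is bounded through the capacities in ${\sf FeasibleAlloc}$), so an optimal solution $\mathbf{f}^\star$ exists; I will in fact show the structural property holds for \emph{every} optimizer. (ii) Since $\epsilon<1$, the bins are ordered by \emph{decreasing} weight as the index $b$ grows, so ``the larger-weighted bins'' relative to $b$ are exactly the bins $j$ with $j<b$. Write $\bar f_{k1}=U$ and $\bar f_{kb}=U(\alpha^{b-1}-\alpha^{b-2})$ for $b>1$ for the per-bin caps (the ``full rate of bin $b$''); each is strictly positive because $U>0$ and $\alpha>1$.

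The core step is a swap. Suppose, for contradiction, that in an optimal $\mathbf{f}^\star$ there are a demand $k$ and bins $j<b$ with $f^\star_{kb}>0$ but $f^\star_{kj}<\bar f_{kj}$. Let $\delta=\min\{f^\star_{kb},\ \bar f_{kj}-f^\star_{kj}\}$, which is strictly positive by assumption together with nonnegativity, and define $\mathbf{f}'$ to agree with $\mathbf{f}^\star$ everywhere except $f'_{kj}=f^\star_{kj}+\delta$ and $f'_{kb}=f^\star_{kb}-\delta$. I would then check feasibility: the per-bin caps still hold ($0\le f'_{kb}=f^\star_{kb}-\delta$ since $\delta\le f^\star_{kb}$, and $f'_{kj}=f^\star_{kj}+\delta\le\bar f_{kj}$ since $\delta\le\bar f_{kj}-f^\star_{kj}$; all other bin variables are untouched), and crucially the move leaves every per-demand total $f_k=\sum_b f_{kb}$ unchanged, so the constraint $\mathbf{f}\in{\sf FeasibleAlloc}(\mathcal{E},\mathcal{D},\mathcal{P})$ --- which constrains the allocation only through these totals (and any path-level variables realizing them, which need not be perturbed) --- remains satisfied. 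Hence $\mathbf{f}'$ is feasible.

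Finally I would evaluate the objective change. Only two terms of $\sum_{k\in\mathcal{D}}\sum_b \epsilon^{b-1}f_{kb}$ move, giving a net change $\delta(\epsilon^{j-1}-\epsilon^{b-1})$; since $j-1<b-1$ and $0<\epsilon<1$ this is strictly positive, so $\mathbf{f}'$ strictly beats $\mathbf{f}^\star$, contradicting optimality. Therefore every optimal solution satisfies: $f^\star_{kb}>0$ implies $f^\star_{kj}=\bar f_{kj}$ for all $j<b$. Summing the caps over $j<b$ telescopes to $U\alpha^{b-2}$, so equivalently demand $k$ must already have received the full cumulative rate from the lower (larger-weighted) bins before any rate is placed in bin $b$ --- precisely the constraint $f_{kb}=0$ whenever $\sum_{j<b}f_{kj}<U\alpha^{b-2}$ that~\eqnref{eq:seq-approx-max-min-flow} imposes explicitly, which is the content of the theorem.

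The only step that needs care, and the main place the argument could break, is justifying that the exchange does not disturb ${\sf FeasibleAlloc}$ (including any underlying per-path allocation): this rests entirely on the fact that the bin variables enter that constraint only via the per-demand totals $f_k$, which the swap preserves. The second subtlety worth stating is that the improvement must be \emph{strict} --- hence $\delta>0$, and hence the conclusion applies to all optimal solutions rather than merely one --- but that is immediate from $\epsilon<1$.
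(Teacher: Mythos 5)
Your proposal is correct and is essentially the paper's own argument: the paper proves Theorem~\ref{th:binFilling} by exactly this exchange step, moving $\delta$ rate from bin $b$ to an unfilled bin $j<b$, noting feasibility is preserved and the objective strictly improves since $\epsilon<1$. Your write-up merely fills in the details the paper's sketch leaves implicit (the choice of $\delta$, and the observation that ${\sf FeasibleAlloc}$ sees only the per-demand totals, which the swap preserves).
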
 
	\begin{proof}[Proof Sketch]
		Assume otherwise: \eqnref{eq:new_geobinning} has assigned a non-zero rate to some demand $k$ in some bin $b$ without assigning the full rate from some other bin $j < b$. Then, we can move some $\delta$ rate from bin $b$ to $j$ and not violate any constraints yet improve the objective value.\footnote{Smaller indexed bins have larger weights because $\epsilon <1$.}
	\end{proof}
	We can combine Theorem~\ref{th:binFilling} with the proof technique of Theorem~\ref{th:epsilonOneShot} to prove~\eqnref{eq:new_geobinning} will allocate the same rates as~\eqnref{eq:seq-approx-max-min-flow}, so the approximation ratio proof from~\cite{SWAN} applies directly.
	\item Lets users adjust $\alpha$ to balance the trade-off between fairness approximation guarantee and the solver time.
	\item Is less likely to run into precision issues compared to~\eqnref{eq:opt_max_min_flow} since there are fewer bins than demands.
	\item Requires no sorting constraints unlike~\eqnref{eq:opt_max_min_flow}.
	\item Is only slightly larger in size compared to {\em each of the optimizations} in~\eqnref{eq:seq-approx-max-min-flow} (see~\secref{ss:expected-run-time} for more details). The key difference is that we can now run one LP instead of many. The one-shot optimization is empirically faster, likely due to redundant computation between the LPs in the sequence of optimizations in~\eqnref{eq:seq-approx-max-min-flow}.
\end{Itemize}

\subsection{Multi-path Waterfilling}
\label{s:genwaterfilling}

We also generalize the classical waterfilling algorithm for max-min fair allocation over multiple resources.
We present parallelizable combinatorial algorithms (and not optimizations) with better empirical performance compared to~\secref{s:oneshotopts} but weaker fairness guarantees.
% but we use them later to set the bin sizes and reduce the approximation error in~\eqnref{eq:new_geobinning}. This class of algorithms h.}

%Waterfilling is a well-known method: it iterates, freezes requests where it assigned the full demand in the previous iteration, and allocates the remaining resources {\em fairly} among the demands that remain unfilled~\cite{csfqorig}.

%Waterfilling is a well-known method based on the intuition that all the demands bottlenecked on a resource must receive the same fair share. It achieves this by iteratively finding bottleneck resources and splitting their capacities {\em fairly} among the demands~\cite{csfqorig}.
%
%Waterfilling applies to particular scenarios where all the demands are {\em unconstrained} and require resources on {\em exactly one path}~\cite{s-PERC-Lavanya}. Under these conditions, we can achieve max-min fairness by visiting and allocating resources in the ascending order of their fair share (see~\algoref{alg:one_path_wt_waterfilling}\footnote{Notice weighted max-min fair allocation is roughly the same with one minor change: we relatively weigh the rate we allocate to each flow.}). 

Waterfilling is a well-known method that applies to scenarios where all the demands are {\em unconstrained} and require resources on {\em exactly one path}~\cite{s-PERC-Lavanya}. Under these conditions, we can achieve max-min fairness by visiting resources in the ascending order of their fair share and splitting their capacities {\em fairly} among the demands~\cite{csfqorig} (see~\algoref{alg:one_path_wt_waterfilling}\footnote{Notice weighted max-min fair allocation is roughly the same with one minor change: we relatively weigh the rate we allocate to each flow.})

We extend this approach to constrained demands by adding a virtual edge with a capacity equal to the requested rate for each demand. This augmented topology ensures that demands receive at most what they asked for. For small requests, the virtual edge becomes the bottleneck and limits the allocation.

%%%%%% Algorithm from submission (line 10 missing, line 5 may produce NaN, line 7 should only consider active flows)
%{
%\RestyleAlgo{ruled}
%\begin{algorithm}[t!]
%	\begin{small}
%		\DontPrintSemicolon
%		\LinesNumbered
%		\caption{Waterfilling algorithm to compute single-path weighted max-min fair rates.}
%		\label{alg:one_path_wt_waterfilling}
%		\KwIn{$\weightedroutingmatrix_{M \times K}$: $\weightedroutingmatrix[m,k]$ is the weight of demand $k$ on link $m$.}
%		\KwIn{$\capacity_{M \times 1}$: link capacities.}
%		\KwOut{$\rate_{K \times 1}$: max-min rate of each demand.}
%		
%		$\setactiveflows \leftarrow \mathbbm{1}_{K \times 1}$ \algcomment{set of active demands}
%		$\rate_{K \times 1} \gets 0$ \algcomment{initial rates}
%		
%		\While{$|\setactiveflows| > 0$}{
%			$N \gets \weightedroutingmatrix \setactiveflows$ \algcomment{total weight per link.}
%			$\fairshare \gets \frac{\capacity}{N}$ \algcomment{vector division, fairshare per link.}
%			$l \gets \mathop{\mathrm{arg\,min}_e} \fairshare_e$ \algcomment{link with minimum fair share}
%			\ForEach(\loopalgcomment{demands on link $l$}){$f:  \weightedroutingmatrix[l, f] > 0$}{
%				$\rate_f \gets \fairshare_l \weightedroutingmatrix[l, f]$ \algcomment{Fix rate by tightest}
%			}
%			\algcomment{deduct capacities at all links allocated above}
%			$\setactiveflows \gets \setactiveflows \setminus \{f:  \weightedroutingmatrix[l, f] > 0\}$	\algcomment{update active set}
%		}
%		\Return{$\rate$}
%	\end{small}
%\end{algorithm}
%}

%%%%% Updated algorithm to fix the issues
{
	\RestyleAlgo{ruled}
	\begin{algorithm}[t!]
		\begin{small}
			\DontPrintSemicolon
			\LinesNumbered
			\caption{Waterfilling algorithm to compute single-path weighted max-min fair rates.}
			\label{alg:one_path_wt_waterfilling}
			\KwIn{$\weightedroutingmatrix$ where $\weightedroutingmatrix[e, k]$ is the weight of single-path demand $k$ on link $e$.}
			\KwIn{$\capacity$: link capacity vector.}
			\KwOut{$\rate$: max-min rate allocation vector.}
			
			$\setactiveflows \leftarrow [0, \dots, K-1]$ \algcomment{initial list of active demands}
			$\rate \gets \mathbf{0}$ \algcomment{initial rate vector}
			
			\While{$\norm{\setactiveflows} > 0$}{
				$\vect{n} \gets \weightedroutingmatrix \mathbf{1}$ \algcomment{total weight per link}
				$\vect{\fairsharevector} \gets \frac{\capacity}{\vect{n}}$ \algcomment{vector division, fairshare per link}
				$e \gets \mathop{\mathrm{arg\,min}}~~\vect{\fairsharevector}$ \algcomment{link with minimum fair share}
				$\setdemands_e \gets \{ k: \weightedroutingmatrix[e, k] > 0 \}$ \algcomment{active demands on link $e$}
				\ForEach{$k \in \setdemands_e$}{
					$\rate[\setactiveflows[k]] \gets \fairsharevector[e] \weightedroutingmatrix[e, k]$ \algcomment{fix to weighted fair share}
					\ForEach{$l: \weightedroutingmatrix[l, k] > 0$}{
						$\capacity[l] \gets \capacity[l] - \rate[\setactiveflows[k]]$ \algcomment{deduct allocations}
					}
				}
%				$\capacity \gets \capacity - \routingmatrix \rate[\setactiveflows]$ \algcomment{update link capacities}
				$\capacity \gets \capacity[\setminus \{e\}]$ \algcomment{remove link $e$}
				$\weightedroutingmatrix \gets \weightedroutingmatrix[\setminus \{e\}; \setminus \setdemands_e]$ \algcomment{remove link $e$ and its demands}
				$\setactiveflows \gets \setactiveflows \setminus \setdemands_e$	\algcomment{update set of active demands}
%				\algcomment{deduct capacities at all links allocated above}
%				$\setactiveflows \gets \setactiveflows \setminus \{f:  \weightedroutingmatrix[e, f] > 0\}$	\algcomment{update active set}
			}
			\Return{$\rate$}
		\end{small}
	\end{algorithm}
}

It is harder to generalize waterfilling to multi-path settings because the local max-min fair allocation at individual resources is not globally fair. \figref{f:example:water} shows a simple example where the blue demand~---~which has access to more paths~---~must receive a locally \emph{unfair} share on the common link in order to produce a globally max-min fair solution. We next modify waterfilling to produce approximate, globally max-min fair rates in the general multi-resource setting.

\renewcommand{\arraystretch}{1.1}
\begin{figure}[t]
	\centering
	\subfigure[Global and Local (per-link) max-min fairness are different.\label{f:global_mmf_example}]{\includegraphics[width=0.85\linewidth]{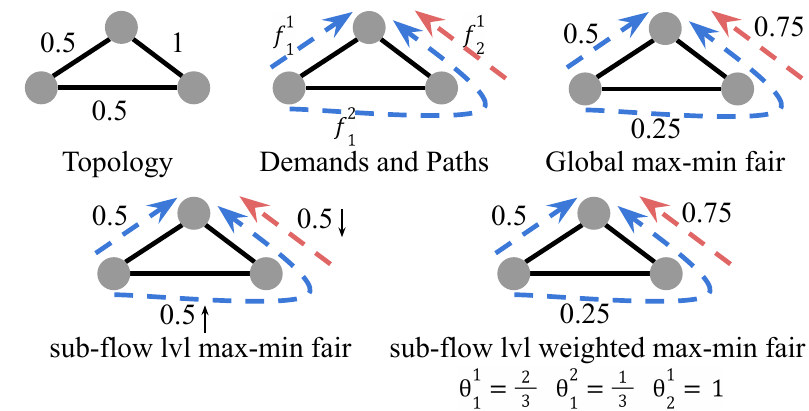}}
	\subfigure[{\sf AW}'s weight multipliers and allocations converge to global max-min fair allocation.\label{f:iter_example}]{
		%		\begin{table}
		%			\centering
%		\adjustbox{scale=0.7\textwidth}{
		\begin{small}
		\begin{tabular}{l|cccccccc}
			
			& \multicolumn{8}{l}{\bf \# iteration $t\longrightarrow$}\\\hline
			
			$\theta^1_1$ 	& $\frac{1}{2}$ & $\frac{3}{5}$ 	& $\frac{7}{11}$ 	& $\frac{15}{23}$ 	& $\frac{31}{47}$ 	& $\frac{63}{95}$ & $\ldots$ & $\frac{2}{3}$ \\
			
			$\theta^2_1$ 	& $\frac{1}{2}$ & $\frac{2}{5}$ 	& $\frac{4}{11}$ 	& $\frac{8}{23}$	& $\frac{16}{47}$	& $\frac{32}{95}$ & $\ldots$ & $\frac{1}{3}$\\
			
			$\theta^1_2$ 	& $1$ 			& $1$				& $1$ 			 	& $1$				& $1$				& $1$ & $\ldots$ & $1$\\

			$f^1_1$ 		& $\frac{1}{2}$ & $\frac{1}{2}$		& $\frac{1}{2}$ 	& $\frac{1}{2}$		& $\frac{1}{2}$		& $\frac{1}{2}$ & $\ldots$ & $\frac{1}{2}$\\

			$f^2_1$ 		& $\frac{1}{3}$ & $\frac{2}{7}$		& $\frac{4}{15}$ 	& $\frac{8}{31}$	& $\frac{16}{63}$	& $\frac{32}{127}$ & $\ldots$ & $\frac{1}{4}$\\

			$f^1_2$ 		& $\frac{2}{3}$ & $\frac{5}{7}$		& $\frac{11}{15}$ 	& $\frac{23}{31}$	& $\frac{47}{63}$	& $\frac{95}{127}$ & $\ldots$ & $\frac{3}{4}$\\
			
		\end{tabular}
		\end{small}
	}
		%		\end{table}
%	}
	\caption{An example to illustrate the difficulty in extending waterfilling to multi-path settings and how our {\sf AdaptiveWaterfiller} effectively tackles the issue. Waterfilling (single-path) computes local fair shares and is ineffective in multi-path settings as it ignore the dependencies between different paths of a single demand.\label{f:example:water}}
\end{figure}
\renewcommand{\arraystretch}{1}

\parab{ApproxWaterfiller}~(or {\sf aW}): For each demand, {\sf aW} creates several ``subdemands'', each going through one of the demand's paths. Subdemands of each demand pass through a shared virtual edge to ensure the algorithm does not allocate more than the requested rate. We then use waterfilling to assign rates to these subdemands. This algorithm simply ignores the coupling between multiple paths and does not reach global max-min fair rates, but we use it as the starting point to generalize waterfilling. As a solution, it is fast, and we also use a variant of~\algoref{alg:one_path_wt_waterfilling} to speed it up further (\algoref{alg:one_path_approx_wt_waterfilling}). 

The new algorithm simplifies~\algoref{alg:one_path_wt_waterfilling} by retaining the initial order of the links in subsequent iterations. In each iteration, it only recomputes the fair share for the link under consideration and fixes the rates for the demands bottlenecked by that link. It is approximate (even in the one-path case~\cite{Janus-Alipourfard,swarm}) but is faster and more parallelizable.

\vspace{0.02in}
Global max-min fairness assigns lower rates to subdemands that are on congested paths but their corresponding demand can get enough allocation from its other paths. For example, the blue demand in~\figref{f:global_mmf_example} should receive a lower allocation on the path through the congested link. Intuitively, the allocator can get closer to global max-min fair assignments by moving each demand's allocation from more congested paths to less congested ones. We can achieve this by iteratively seeking more rates from subdemands that have received higher rates (\ie on less-contended paths) in previous iterations.

{
	\RestyleAlgo{ruled}
	\begin{algorithm}[t]
		\DontPrintSemicolon
		\LinesNumbered
		\begin{small}
			\caption{Our Approx. Waterfilling algorithm to compute single-path weighted max-min fair rates.}
			\label{alg:one_path_approx_wt_waterfilling}
			\KwIn{$\weightedroutingmatrix$ where $\weightedroutingmatrix[e, k]$ is the weight of single-path demand $k$ on link $e$.}
			\KwIn{$\capacity$: link capacity vector.}
			\KwOut{$\rate$: max-min rate allocation vector.}
			
			${\rate} \gets \boldsymbol{\infty}$ \algcomment{initial max-min rate set to $\infty$}
			$\vect{n} \gets \weightedroutingmatrix \mathbf{1}$ \algcomment{total weight per link}
			$\set{L} \gets \mathop{\mathrm{arg\,sort}}~\frac{\capacity}{\vect{n}}$ \algcomment{vector division, sort links in ascending order}
			\ForEach{$e \in \set{L}$}{
				$\setdemands_e \gets \{ k: \weightedroutingmatrix[e, k] > 0 \}$ \algcomment{demands on link $e$}
				%$\hat{\rate} \gets 0_{|\setflows_l|}$ \algcomment{initial fair share for each sub-flow}
				\While{$\setdemands_e \neq \emptyset$}{
					% \tcc{compute link and sub-flow fair shares.}
					$\fairshare \gets \frac{\capacity[e]}{\sum_{k \in \setdemands_e} \weightedroutingmatrix[e, k]}$ \algcomment{fair share of link $e$}
					%$\hat{\rate}_{\setflows_l} = \fairshare \weightedroutingmatrix[l, \setflows_l]$ 
					%\tcc{compuate set of flows requesting less than fair share}
					$\set{B} \gets \{k \in \setdemands_e: \rate[k] < \fairshare \weightedroutingmatrix[e,k]\}$ \;
					\uIf(\tcp*[f]{if no flows bottlenecked elsewhere,}){$\set{B} = \emptyset$}{
						% \tcc{if all the sub-flows are requesting more than fair share, set their rate and break.}
						$\rate[\setdemands_e] \gets \fairshare \weightedroutingmatrix[e, \setdemands_e]$ \algcomment{fix rate to weighted share.}
						\text{break} \; 
					}\Else(\tcp*[f]{otherwise, remove those bottlenecked elsewhere.}){
						% \tcc{otherwise, remove the sub-flows requesting less and recompute the fair share.}
						%				$\hat{\rate}_{\set{U}} \gets \rate_{\set{U}}$ \;
						$\capacity[e] \gets \capacity[e] - \sum_{k \in \set{B}} \rate[k]$ \;
						$\setdemands_e \gets \setdemands_e \setminus \set{B}$
					}
				}
			}
			\Return{$\rate$}
		\end{small}
	\end{algorithm}
}
\parab{AdaptiveWaterfiller}~(or {\sf AW}):  Motivated by this intuition, {\sf AW} uses a weighted version of {\sf aW} (using \algoref{alg:one_path_wt_waterfilling} or~\algoref{alg:one_path_approx_wt_waterfilling}) and adjusts the input weights ($\weightedroutingmatrix$) to seek more rate from subdemands on less congested paths. 

Let $\theta^p_{k}$ be the weight multiplier for the subdemand of demand $k$ on path $p$. {\sf AW} initializes these multipliers as $\theta^p_{k} = \frac{1}{\|\{p \in \mathcal{P}_k\}\|}$. In each iteration, {\sf AW} first computes $\weightedroutingmatrix$ directly from $\theta$. $\weightedroutingmatrix[e, k_p]$ is the weight of the subdemand $k_p$ (demand $k$ on path $p$) on link $e$\footnote{Note that waterfilling requires each demand to be on a single path. We use the notation $k_p$ to show the single-path subdemands.}. Following the definition, $\weightedroutingmatrix[e, k_p] = \theta^p_k~\indicator{e \in p}$. {\sf AW} then invokes one of the waterfilling algorithms\footnote{We use~\algoref{alg:one_path_approx_wt_waterfilling} for our experiments since it is an order of magnitude faster with only a slight decrease in fairness (\figref{fig:traffic-engineering:all-samples:speedup-fairness}).} with these weights $\weightedroutingmatrix$. For iteration $t+1$, {\sf AW} sets $\theta^p_{k} (t + 1) = \frac{f^{p}_{k} (t)}{\sum_p f^p_{k}(t)}$ where $f^p_k (t)$ is the rate demand $k$ obtains from path $p$ in iteration $t$. We show how multipliers and rates evolve in our example in~\figref{f:iter_example}.

%\vspace{0.02in}
{\sf AW} converges when $\theta^{p}_{k} (t+1) = \theta^{p}_{k}(t)$. We can prove that adapting weight multipliers gets close to global max-min fairness: we say a rate assignment in the multi-path setting is \emph{bandwidth-bottlenecked} if for all demands $k$, (i) each of its subdemands $f^{p}_{k}$ is bottlenecked on some link $l$, and (ii) $f_k \ge f_j$, for all demands $j$ that have any subdemand on any such link $l$. We prove in~\S\ref{sec:proof_bottleneck} that:
  
\begin{theorem}\label{T:bandwith_bottleneck}
	{\em If} the adaptive waterfiller converges, it converges to a bandwidth bottlenecked assignment. 
\end{theorem}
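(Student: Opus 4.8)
The plan is to reduce the statement to two ingredients: (a) an algebraic identity that convergence of the multiplier update forces, and (b) the standard ``bottleneck'' characterization of the rates that weighted waterfilling produces. So the first thing I would do is characterize the fixed point. Let $\theta^p_k$ and $f^p_k$ denote the converged multipliers and subdemand rates, with $f_k=\sum_{p\in\mathcal{P}_k}f^p_k$. I would first check positivity of all these quantities: the initialization $\theta^p_k=1/\norm{\mathcal{P}_k}$ is strictly positive, and \algoref{alg:one_path_wt_waterfilling} run on the augmented topology (with the per-demand virtual edge $v_k$ of capacity $d_k$) freezes every positive-weight subdemand at a link of strictly positive fair share, so $f^p_k(t)>0$; by induction $\theta^p_k(t),f^p_k(t)>0$ for all $t$, hence in the limit. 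Then the update rule $\theta^p_k(t+1)=f^p_k(t)/\sum_{p'}f^{p'}_k(t)$ together with the convergence condition $\theta^p_k(t+1)=\theta^p_k(t)$ yields $\theta^p_k=f^p_k/f_k$, i.e. $f^p_k/\theta^p_k=f_k$ for every $p\in\mathcal{P}_k$. In words: at the fixed point, all subdemands of a single demand $k$ share one common weighted fair ratio, equal to $f_k$ itself.

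Next I would invoke the bottleneck characterization of waterfilling on the single-path instance of subdemands (weights $\weightedroutingmatrix[e,k_p]=\theta^p_k\,\indicator{e\in p}$). Since that instance's output is weighted max-min fair, every subdemand $k_p$ freezes at some link $l$ on its path (possibly the virtual edge $v_k$), $l$ is saturated, and $f^p_k/\theta^p_k\ge f^q_j/\theta^q_j$ for every subdemand $j_q$ that crosses $l$. If a self-contained derivation is preferred over citing this fact, it follows by showing the fair shares of the links, in the order \algoref{alg:one_path_wt_waterfilling} removes them, form a non-decreasing sequence (a one-line residual-capacity estimate: removing the minimum-fair-share link and freezing its demands can only raise the fair share of any surviving link) and then reading off the freezing link. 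The existence of such an $l$ for each $k_p$ is exactly clause (i) of bandwidth-bottleneckedness. For clause (ii), fix a demand $k$, a path $p\in\mathcal{P}_k$, its bottleneck link $l$, and any demand $j$ with a subdemand $j_q$ through $l$; combining the bottleneck inequality with the Step-1 identity gives $f_k=f^p_k/\theta^p_k\ge f^q_j/\theta^q_j=f_j$. When $l$ is the virtual edge $v_k$ the condition is vacuous, since only subdemands of $k$ traverse $v_k$. Therefore the converged assignment satisfies both clauses, i.e.\ it is bandwidth-bottlenecked.

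The main obstacle is the second step: pinning down precisely which bottleneck invariant is guaranteed by the waterfilling routine actually in use --- the exact \algoref{alg:one_path_wt_waterfilling} versus the faster approximate \algoref{alg:one_path_approx_wt_waterfilling} --- and handling uniformly the demands that reach their requested rate $d_k$ (those bottlenecked at a virtual edge rather than a real link). For the exact algorithm the monotonicity-of-fair-shares lemma closes this cleanly; for the approximate variant one has to carry through its weaker per-link invariant, but in either case the fixed-point identity $f^p_k/\theta^p_k=f_k$ from Step 1 is what does the real work and makes clause (ii) fall out immediately.
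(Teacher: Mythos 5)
Your proposal is correct and follows essentially the same route as the paper's proof: establish the fixed-point identity $\theta^p_k = f^p_k/f_k$ from the convergence condition, invoke the bottleneck inequality $f^p_k/\theta^p_k \geq f^{\hat p}_j/\theta^{\hat p}_j$ guaranteed by weighted waterfilling at each subdemand's bottleneck link, and substitute to conclude $f_k \geq f_j$. The extra care you take with positivity, virtual edges, and the exact-versus-approximate waterfilling variants goes beyond what the paper writes down but does not change the argument.
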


We prove the global max-min fair allocation is bandwidth-bottlenecked (see~\secref{sec:additional_results_waterfilling}). The converse is not true~---~not all bandwidth-bottlenecked allocations are max-min fair. However, the set of bandwidth-bottlenecked allocations is {\em significantly} smaller than the set of all feasible allocations. We also prove that {\sf AW} converges when its assignment is bandwidth-bottlenecked (\ie it stops iterating). Empirically, {\sf AW}'s allocations stabilize within $5$~--~$10$ iterations on average~(\secref{s:conv-sens-analys}).

Adaptive waterfiller produces allocations that belong to a constrained set containing the optimal max-min fair rates. It is slower than approximate waterfiller because it iterates and updates weight multipliers. It is faster than the Geometric Binner as it does not solve an LP. Users can tune the maximum number of iterations to trade-off between fairness and speed.

\subsection{Combinations and Extensions}
\label{s:comb_ext}
Empirically, we find that the geometric binner (\S\ref{s:oneshotopts}) is fairer than what its worst-case guarantee suggests
(recall, we prove the rates will be within $[\alpha^{-1}, \alpha]$ times the optimal max-min fair rate).
We can attribute most of the unfairness to bins that happen to contain many demands (\figref{fig:imbalanced-bins:app}): can we set the bin boundaries differently to improve fairness?

We use the generalized waterfillers (\S\ref{s:genwaterfilling})~---~which are fast but lack worst-case guarantees~---~to set bin boundaries in a way that spreads demands more uniformly across bins:

\parab{Equi-depth Binner}~(or {\sf EB}) applies {\sf GeoBinning} (\eqnref{eq:new_geobinning}) with a few changes: it uses the rate allocation from {\sf AdaptiveWaterfiller} to approximate the order across demands; distributes demands more uniformly over bins; and finds the bin boundaries as part of the optimization.
Specifically, {\sf EB} divides demands $\mathcal{D}$ into $N_\beta$ equi-sized sets ($\mathcal{D}_1 \ldots \mathcal{D}_n$) based on their increasing order of rates from {\sf AW}. In {\sf EB}, the demands in a set $D_b$ only receive rates from one bin with index $b$.  {\sf EB} dynamically chooses bin boundaries: $\forall k \in D_b, {\ell}_{b-1} \leq f_{k} < {\ell}_b + s(b)$ where $s(b)$ is a small constant that helps reducing the impact of inaccuracies from {\sf AW}. We provide a more formal definition of {\sf EB} in~\S\ref{ss:eb_formulation}.

{\sf EB} is slower than {\sf GB} and {\sf AW} because it executes both but we expect it to be fairer than {\sf GB}~---~it spreads demands more uniformly across bins. We empirically confirm this hypothesis. It is hard to formally analyze {\sf EB} but we suspect it also offers tighter worst-case guarantees. This is subject for future work.
% Moreover, note that {\sf EB} uses {\sf AW}'s output only as a guide to pick bin boundaries; its allocations are the solution from a {\sf GB}-like optimization with these bin boundaries.

%We conjecture that {\sf EB} may also offer tighter worst-case guarantees but the analyses is involved and we defer to future work.
%Analyzing {\sf EB} to prove worst-case guarantees is mathematically challenging but can be useful future work.

\parab{Extensions:} We did not explicitly account for weighted max-min fairness~(e.g., $w_k$ in~\secref{s:unified-formulation}) when describing our one-shot optimization. 
This extension is straightforward. For example, we can replace the first constraint in the geometric binner~({\sf GB}) in~\eqnref{eq:new_geobinning} with $f_{k}/w_k = \sum_{\text{bins} ~ b}f_{kb}$. Algorithm~\ref{alg:one_path_approx_wt_waterfilling}, which we use in our generalized waterfillers in~\secref{s:genwaterfilling}, already supports weights. We compute the per-edge per-subdemand weighted routing matrix as $\weightedroutingmatrix[e, k_p] = w_k \theta_{k}^p \indicator{e \in p}$.

We have also omitted the heterogeneous utilities, different resource consumption scales, and other affine functions in our model~(\S\ref{s:unified-formulation}) when describing the solutions. These extensions are also straightforward and we show them in~\S\ref{sec:general-form}. The key is to appropriately manipulate the constraints that determine when an allocation is feasible~({\sf FeasibleAlloc},~\eqnref{eq:feasible-alloc}).

	\section{Evaluation}
\label{sec:eval}
\noindent {\bf Implementation.} 
We implemented \sysname in Python and C\# using Gurobi 9.1.1~\cite{gurobi} as the solver.

\parab{Summary of results.}
We apply \sysname to traffic engineering (TE) and cluster scheduling (CS). We show \sysname captures the trade-off between speed, fairness, and efficiency. We also show the results from integrating \sysname with Azure's production TE system where it reduces the run-times by up to $5.4\times$ without any impact on efficiency and fairness.

In TE, all the allocators in {\sysname} are faster than both the optimal algorithm by Danna \etal~\cite{danna-practical-max-min} (referred to as Danna) and the more practical $\alpha$-approximate SWAN~\cite{SWAN}. 
{\sysname} contains algorithms that match or exceed the efficiency or fairness of these methods while running orders of magnitude faster.
{\sysname} can also trade-off (a little) fairness and efficiency for up to $3$ orders of magnitude speed up.

Our solution scales to one of the largest WAN topologies (over $1000$ nodes and $1000$ edges), which is significantly larger than those in~\cite{danna-practical-max-min, B4, SWAN, zhong2021arrow, bogle2019teavar} and matches the size of topologies in~\cite{Abuzaid-ncflow}. We also analyze the sensitivity of {\sysname} to demand variations and other relevant inputs.

In CS, we show \sysname outperforms two variants of Gavel~\cite{Gavel-Deepak}.
Our Equi-depth binner ({\sf EB}) has the same fairness and efficiency as the optimal variant of Gavel (the one with waterfilling), but is $2$ orders of magnitude faster. 

\subsection{Benchmarks and Metrics}
\label{sec::eval_setup}

\noindent{\bf Benchmarks.} We use state-of-the-art solutions in both WAN-TE and CS as benchmarks to evaluate \sysname:

\parae{WAN-TEs.} We use Danna~\cite{danna-practical-max-min}, SWAN~\cite{SWAN}, and a modified version of the k-waterfilling algorithm~\cite{s-PERC-Lavanya} as benchmarks. We also provide limited comparisons with B4~\cite{B4} for completeness (see~\S\ref{subsec:TE}). The k-waterfilling algorithm only applies to single-path, infinite-demand scenarios --- we extend it to multi-path, demand-constrained cases. 
We tune each benchmark for maximum speed (see~\S\ref{sec:benchmark-tuning}). Following~\cite{SWAN}, we set $\alpha=2$ for SWAN and {\sf GB} unless mentioned otherwise.

%We use traces and the topology from Azure's production WAN and the synthetic traffic generator from~\cite{Abuzaid-ncflow} on topologies from the Topology Zoo~\cite{topo-zoo}. We use K-shortest paths~\cite{YenKLoopLess} to find the paths between node pairs (K=16 unless mentioned otherwise).

\parae{CS.} We compare with two variants of Gavel~\cite{Gavel-Deepak}, the state-of-the-art max-min fair allocator in CS (with and without waterfilling).
We use Gavel's public implementation.

%We generate job requests from Gavel's job generator: we consider $3$ types of GPUs (V100, P100, K80) and uniformly sample jobs from the $26$ different job types available in Gavel (see \secref{sec::CS_eval_extended}).
%Jobs are heterogeneous: they require a different number of workers (which we derive from the Microsoft public trace~\cite{microsoft-philly}) and have different priorities (which we sample uniformly from the set $\{1, 2, 4, 8\}$). 

\begin{table}
	\footnotesize
	\centering
	\begin{tabular}{lcc}
		%		\hline
		Topology & \# Nodes & \# Edges \\
		\hline
		\hline
		WANLarge & $\sim$1000s  & $\sim$1000s\\ 
		WANSmall & $\sim$100s & $\sim$1000s\\
		%\href{http://www.topology-zoo.org/maps/Kdl.jpg}{KDL} & 754     & 1790 \\ 
		\href{http://www.topology-zoo.org/maps/Cogentco.jpg}{Cogentco} & 197 & 486 \\
		\href{http://www.topology-zoo.org/maps/UsCarrier.jpg}{UsCarrier} & 158 & 378 \\
		\href{http://www.topology-zoo.org/maps/GtsCe.jpg}{GtsCe}  & 149  & 386 \\
		\href{http://www.topology-zoo.org/maps/TataNld.jpg}{TataNld}  & 145 & 372 \\ 
		%		\hline                           
	\end{tabular}
	\caption{Topologies used for the evaluation of \sysname.}
	\label{tab:topologies}
\end{table}

%%%%%%%%%%%%% METRICS
\parab{Metrics.} We use the following metrics for comparisons:

%\parae{Fairness.} 
%Danna and Gavel (w waterfilling) compute the optimal max-min fair resource allocation in TE and CS respectively: we report fairness numbers relative to the outcomes they produce \footnote{These solutions are too slow to use in practice but we can run them to completion outside of a production environment}.

\parae{Fairness.} We report fairness of a particular allocation ($\mathbf{f}$) as its distance from the optimal max-min fair allocation ($\mathbf{f}^{*}$)\footnote{Danna and Gavel (w waterfilling) compute the optimal max-min fair allocations in TE and CS respectively. They are too slow for practice but we can run them to completion outside of a production environment.}. 
%We need a fairness distance to compute how far a particular allocation ($\mathbf{f}$) is, in terms of fairness, from the rates of the optimal allocator ($\mathbf{f}^{*}$). 
%We use the $q_\theta$ metric~\cite{lu2021pre,marcus2019neo} as our measure of fairness. 
For fairness distance, we use the $q_\vartheta$ metric~\cite{lu2021pre,marcus2019neo}. 
This metric is resilient to numerical instability and is computed as $\min\big(\frac{\max(f_k, \vartheta)}{\max(f^{*}_k, \vartheta)},\frac{\max(f^{*}_k,\vartheta)}{\max(f_k,\vartheta)}\big)$ for a given demand $k$. 
We report the geometric mean of $q_\vartheta$ across all the demands as the overall fairness measure (the geometric mean is less sensitive than the arithmetic mean to outliers).
For our evaluations, we use $\vartheta=0.01\%$  of the resource (link or GPU) capacities. 

\parae{Efficiency.} We measure efficiency in TE as the total rate allocated to flows relative to Danna (\ie $\frac{e}{e_{danna}}$).
For CS, we measure the effective throughput which is the progress rate of a job given an allocation. We report CS efficiency relative to Gavel (\ie $\frac{e}{e_{gavel}}$).

\parae{Runtime.} 
In most cases, we report speed up (\ie \emph{relative} runtime compared to a baseline  $\frac{s_{baseline}}{s}$).
Our runtimes consist of the time each algorithm needs to compute the allocations.
We measure runtimes on an AMD Operaton 2.4GHz CPU (6234) with 24 cores and 64GB of memory.

%\pooria{I think we should mention what we report for the run-time. For example, for the optimizations, we ignore the modeling time and only report the solver time. This is based on our discussion and the fact that models need to be created once at the beginning. (Same hold for the algorithm-based portions, where we ignore the time required to prepare the inputs and only report the execution time of the algorithm.)}

\begin{figure*}[t]
	\centering
	\subfigure{\centering\includegraphics[width=0.8\linewidth]{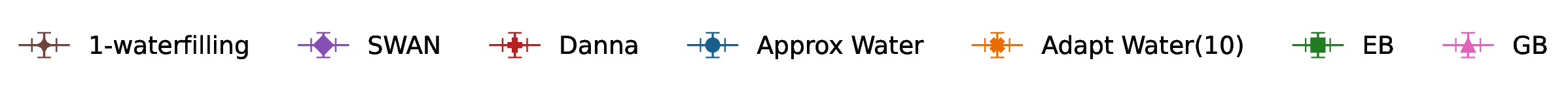}}
	\addtocounter{subfigure}{-1} \vspace{-4mm} \\
	\subfigure[High Load (scale factor $\in$ \{64, 128\})]{\centering
		\includegraphics[width=0.31\linewidth]{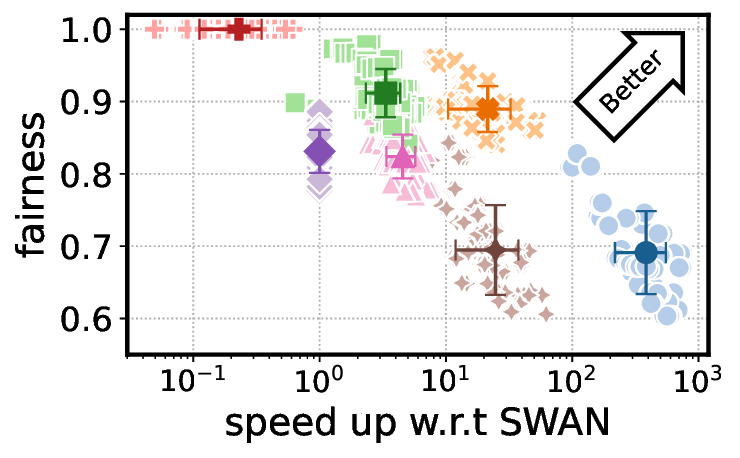}}
	\subfigure[Medium Load (scale factor $\in$ \{16, 32\})]{\centering
		\includegraphics[width=0.31\linewidth]{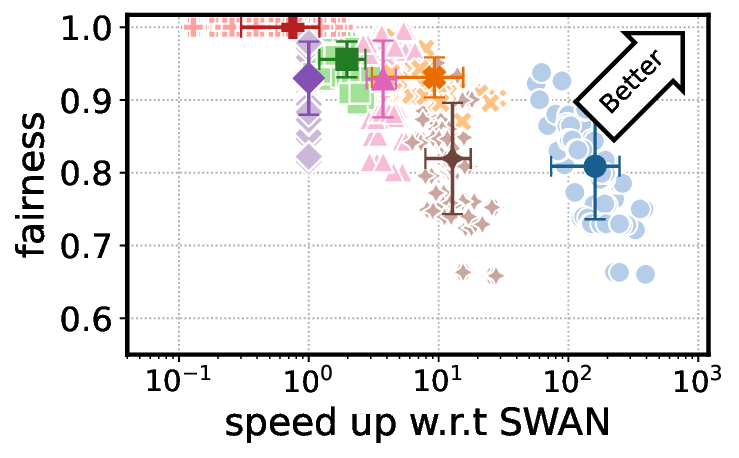}}
	\subfigure[Light Load (scale factor $\in$ \{1, 2, 4, 8\})]{\centering
		\includegraphics[width=0.32\linewidth]{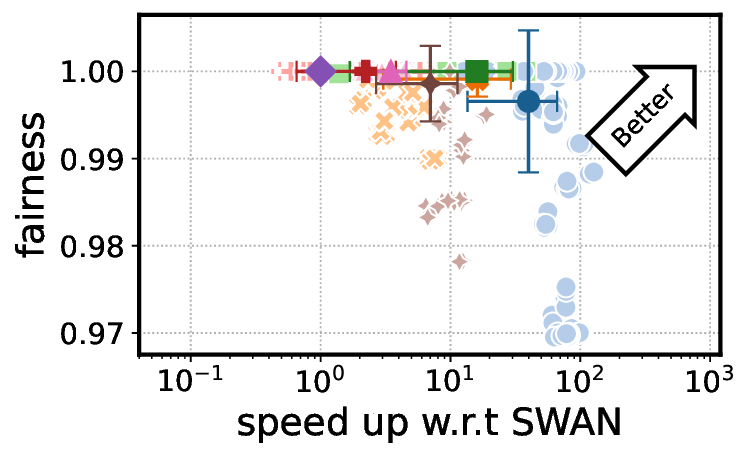}} 
	\caption{\textbf{The fairness vs speed trade-off across different approaches.} As in~\cite{Abuzaid-ncflow}, we use the scale-factor to denote the level of load. We observe even the slowest algorithm in \sysname is faster than SWAN and Danna. While 1-waterfilling is faster than most of the algorithms in \sysname, it has to sacrifice much more in terms of fairness (it is $30\%$ less fair than Danna in the high load case).} %[GB = Geometric Binner, EB = Equi-depth Binner]}
	\label{fig:traffic-engineering:all-samples:speedup-fairness}
	
	\subfigure[High Load (scale factor $\in$ \{64, 128\})]{\centering
		\includegraphics[width=0.32\linewidth]{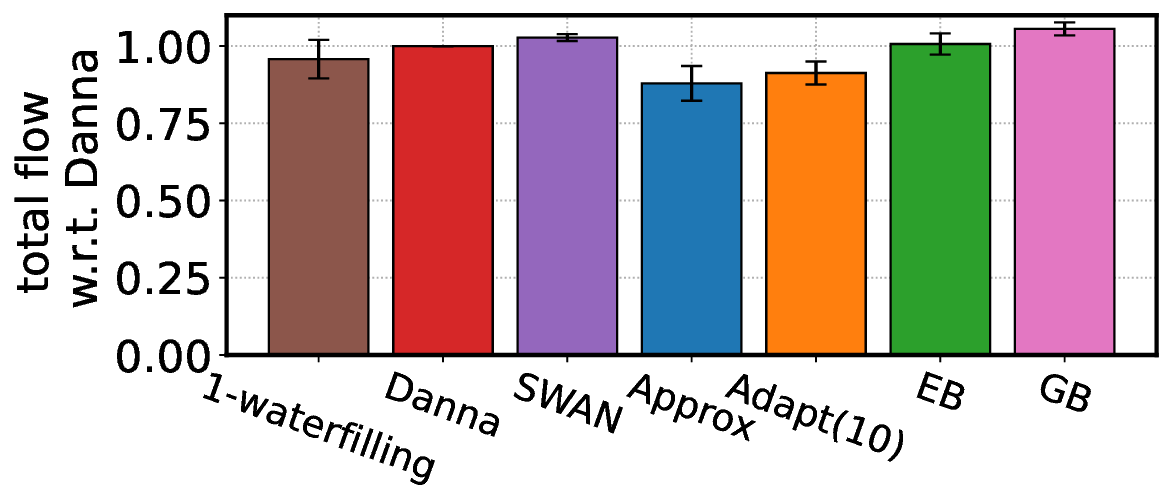}\label{fig:traffic-engineering:all-samples:efficiency:high}}
	\subfigure[Medium Load (scale factor $\in$ \{16, 32\})]{\centering
		\includegraphics[width=0.32\linewidth]{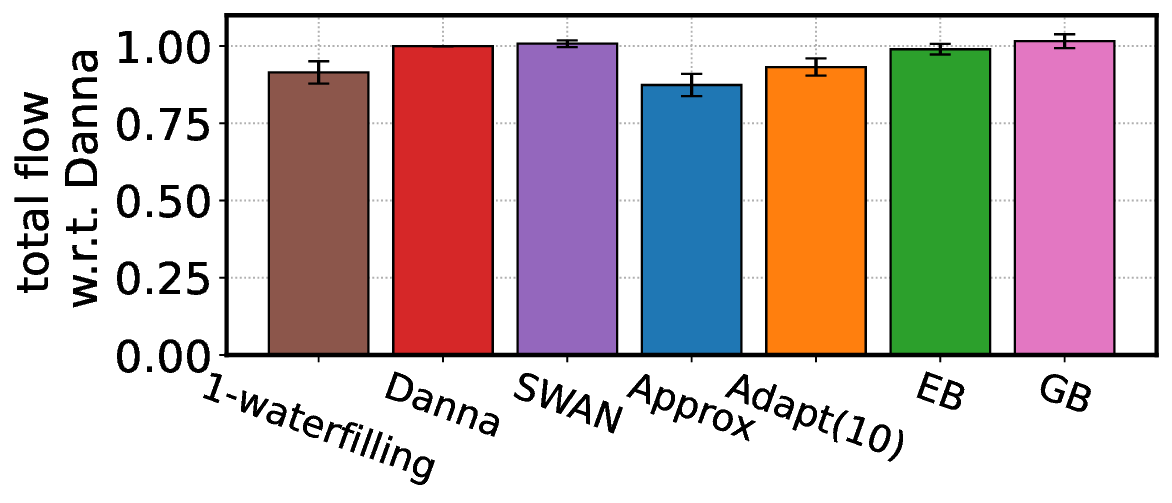}}
	\subfigure[Light Load (scale factor $\in$ \{1, 2, 4, 8\})]{\centering
		\includegraphics[width=0.32\linewidth]{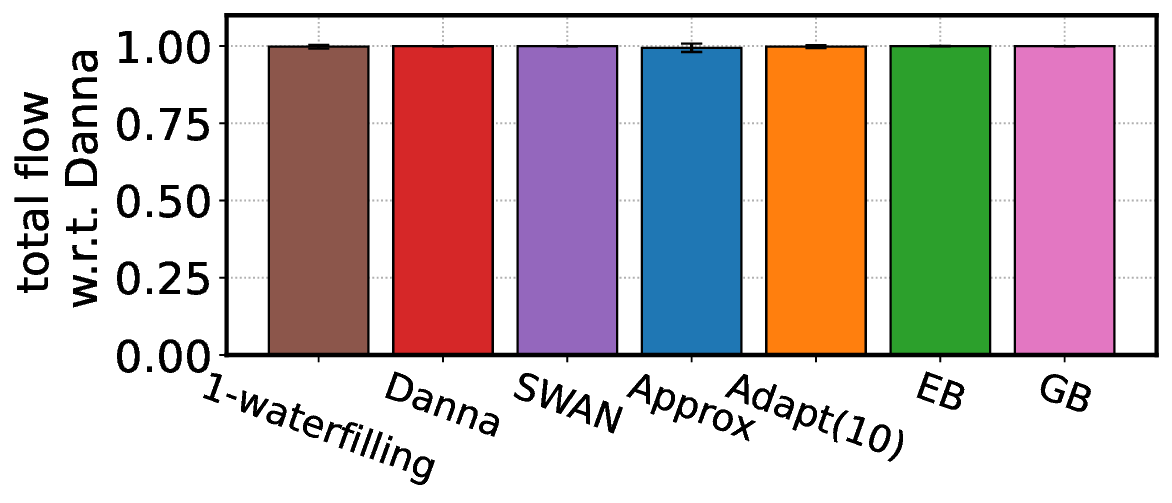}\label{fig:traffic-engineering:all-samples:efficiency:low}}
	\caption{\textbf{The efficiency of {\sysname}'s algorithms and our benchmarks.} We report numbers relative to Danna. Empirically, \sysname Pareto-dominates SWAN, 1-waterfilling, and Danna on the efficiency, agility, and fairness. In (c), the error bar is small because of light load~---~most solutions can satisfy all the demands (fairness is also close to one for most algorithms in these cases.)}
	\label{fig:traffic-engineering:all-samples:efficiency}
\end{figure*}

%\begin{figure}[t]
%	\centering
%	%	\caption{\textbf{\sysname vs \azure's TE Solver. We use load-factors as in~\cite{Abuzaid-ncflow}. The gains change as loads increase.}}
%	\subfigure[Speedup]{\includegraphics[width=0.6\linewidth]{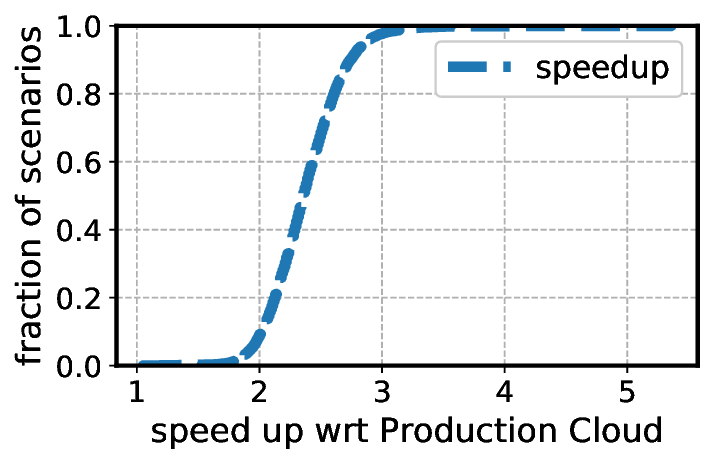} \label{fig:te:integration:month}} \\
%	%	\subfigure[Speed up]{\includegraphics[width=0.6\linewidth]{figs/speedup_w_prod-PhotoRoom.png-PhotoRoom.png} \label{fig:te:integration:month}} \\
%	%	\subfigure[total flow]{\includegraphics[width=0.45\linewidth]{}} \\
%	\subfigure[Impact of load]{\includegraphics[width=0.8\linewidth]{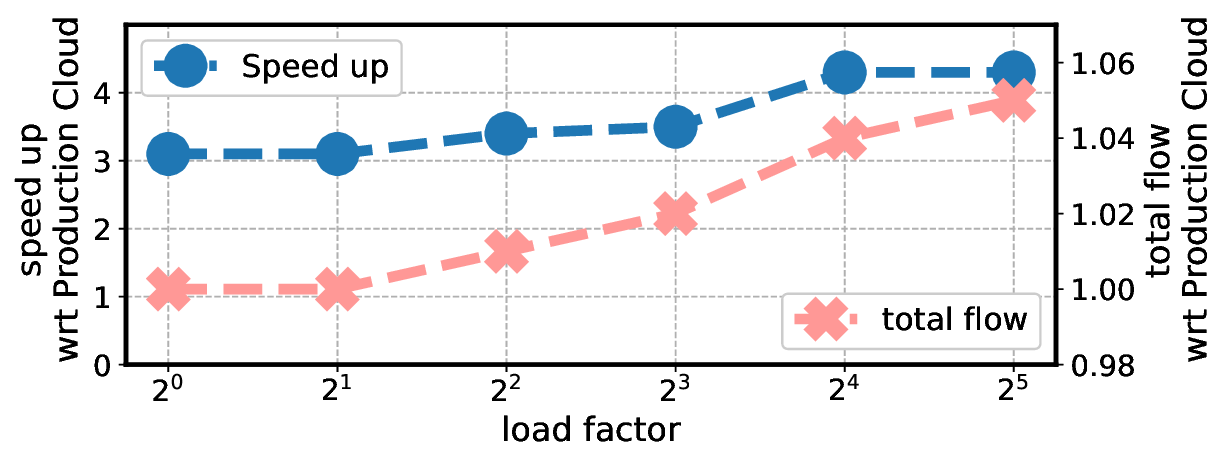} \label{fig:te:integration:load}}
%	\caption{\textbf{Results from deploying \sysname in production.} (a) Month-long measurements show substantial speedup with no impact on efficiency or fairness compared to the provider's previous max-min fair allocator. (b) Using production traces, we show the benefit of \sysname improves as loads~\cite{Abuzaid-ncflow} increase.}
%	\label{fig:traffic-engineering:integration}
%\end{figure}

\subsection{WAN Traffic Engineering}
\label{subsec:TE}

\noindent{\bf Experiment Setup.} \tabref{tab:topologies} summarizes the topologies in our evaluation. We show the results for both Azure's production WAN topology and the topologies from the Topology Zoo~\cite{topo-zoo}. We use K-shortest paths~\cite{YenKLoopLess} to find the paths between node pairs (K=16 unless mentioned otherwise). 

For topologies from Topology Zoo, we generate traffic using Poisson~\cite{traffic-model-ref1}, Uniform, Bimodal, and Gravity~\cite{traffic-model-ref1,gravity-model-ref2} distributions. We follow~\cite{Abuzaid-ncflow} and generate traffic at different \textit{scale factors}. Our traffic spans a range of loads: light (scale factors \{1, 2, 4, 8\}), medium (\{16, 32\}), and high (\{64, 128\}). At higher loads, more flows compete for traffic than at medium or light loads. We report results of over $640$ experiments, which capture different traffic and topology combinations.

%\parab{Production deployment (\figref{fig:traffic-engineering:integration}).} We have successfully deployed \sysname in the production TE pipeline of Azure. Microsoft opted for {\sf GB} as it has the same fairness guarantees as their existing TE solver. \figref{fig:te:integration:month} shows cumulative density function (CDF) of the relative speed up of \sysname compared to the provider's previous allocator. These measurements are over a month-long deployment in a WAN with thousands of nodes. \sysname reduces the run-time on average by $2.4\times$ (up to $5.4\times$) without impacting fairness or efficiency.
%
%We compare \sysname with the previous allocator on production demands at different loads (\figref{fig:te:integration:load}). \sysname's speedup increases with the load because the previous iterative solver invokes more optimizations at higher loads. \sysname's efficiency also increases because its $\epsilon$-trick can exploit minor fairness violations to improve efficiency. In all cases, \sysname is within 1\% of the previous solver's fairness.

\parab{Comparison to benchmarks (\figref{fig:traffic-engineering:all-samples:speedup-fairness} and \ref{fig:traffic-engineering:all-samples:efficiency}).} All algorithms in {\sysname} are faster than SWAN and Danna (\figref{fig:traffic-engineering:all-samples:speedup-fairness}).
Each approach is in a different color in this figure, and
each point corresponds to a single traffic demand on a single topology. The plot also shows the mean and standard deviations along the fairness and speedup axes.

We see the trade-off across these different max-min fair resource allocators: (a) Danna is optimal but also by far the slowest (on average taking $4.3\times$ longer than the second slowest algorithm, SWAN, under high-load); (b) 1-waterfilling is the fastest of the baselines but does not consider flow-level fairness ($30\%$ less fair than Danna on average but 4 orders of magnitude faster); (c) SWAN sits somewhere in between. It is faster than Danna (solves fewer optimizations), but slower than 1-waterfilling (1-waterfilling does not solve any optimization). It is fairer than 1-waterfilling but unlike Danna does not achieve optimal max-min fairness; (d) \sysname empirically Pareto-dominates these baselines as each of its algorithms provide a different point on the trade-off space.

Our algorithms are most effective under \emph{high loads} (arguably, speed \textit{and} fairness matter most). \sysname's Geometric Binner ({\sf GB}) is faster than SWAN by $4.5\times$ on average ($6\times$ in the $90^{th}$ percentile) because it only solves a single optimization. {\sf GB} also has worst-case fairness guarantees. The Equi-depth Binner ({\sf EB}) is faster than SWAN, slightly slower than {\sf GB}, and fairer than both. {\sysname}'s Approximate Waterfiller is even faster than 1-waterfilling (by an order of magnitude) with the same flow-level fairness. \sysname's Adaptive Waterfiller improves fairness ($19\%$ higher on average) at a slight speed reduction (still $21.4\times$ faster than SWAN on average). 

\figref{fig:traffic-engineering:all-samples:efficiency} compares the efficiency of different methods. Under low loads, all schemes are comparable. The differences become evident at higher loads, where {\sf EB} is approximately as efficient as Danna. {\sf GB} and SWAN are more efficient, likely because they sacrifice fairness.

\begin{figure}[t]
	\centering
	%\subfigure[Number of iterations for each method]{\centering
	%	\includegraphics[width=1.0\linewidth]{}} \vspace{-3mm} \\
	\subfigure[Fairness vs Run-time]{\centering
		\includegraphics[width=0.9\linewidth]{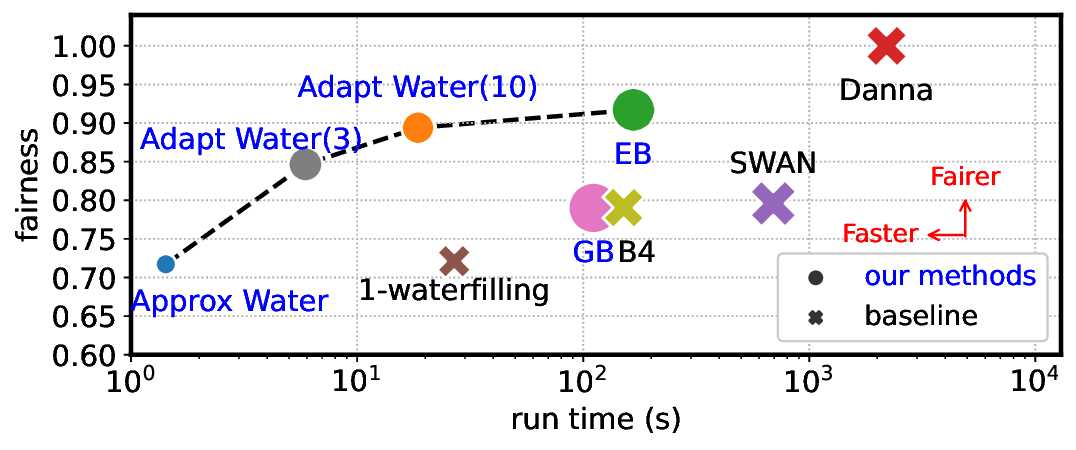}} \vspace{-3mm} \\
	\subfigure[Efficiency wrt Danna]{\centering
		\includegraphics[width=0.9\linewidth]{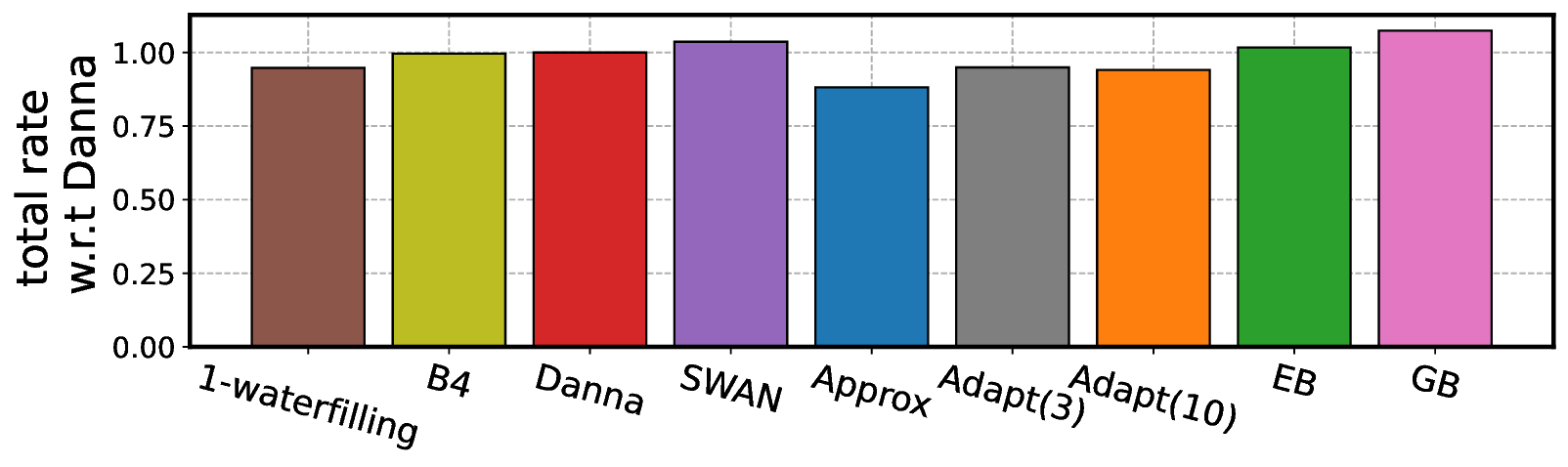}}
	\caption{\textbf{The empirical Pareto-dominance of \sysname over all of our baselines on an example topology (Cogentco) and an example workload with $64\times$ scale factor.} The size of the markers in (a) are in proportion to the efficiency of each algorithm --- we report exact comparisons in (b).}
	\label{fig:traffic-engineering:example}
      \end{figure}
      
We can see these differences more clearly when we focus on a single topology and workload in~\figref{fig:traffic-engineering:example}.
\sysname's allocators Pareto-dominate other approaches. The Approximate Waterfiller, Adaptive Waterfiller (number of iterations = 3 and 10), and {\sf EB} are faster than SWAN and Danna. Adaptive Waterfiller and {\sf EB} are also fairer than SWAN while having comparable efficiency. Operators can use {\sf GB} to get strong worst-case guarantees (at the cost of reduced fairness). 
B4~\cite{B4}'s TE algorithm is just as fast and fair as {\sf GB} (albeit slightly less efficient) but does not have fairness guarantees.
Note that we can control the fairness and runtime of {\sf GB} by tuning $\alpha$, whereas we can not control either in B4.
%\behnaz{@Pooria to move part a of the figure up into motivation.}

%\begin{figure}[t]
%	\centering
%%	\caption{\textbf{\sysname vs \azure's TE Solver. We use load-factors as in~\cite{Abuzaid-ncflow}. The gains change as loads increase.}}
%	\subfigure[Speed up]{\includegraphics[width=0.6\linewidth]{figs/traffic_engineering/integration/speedup_w_prod.eps} \label{fig:te:integration:month}} \\
%%	\subfigure[Speed up]{\includegraphics[width=0.6\linewidth]{figs/speedup_w_prod-PhotoRoom.png-PhotoRoom.png} \label{fig:te:integration:month}} \\
%%	\subfigure[total flow]{\includegraphics[width=0.45\linewidth]{figs/traffic_engineering/integration/totalflow_w_prod.eps}} \\
%	\subfigure[Impact of load]{\includegraphics[width=0.8\linewidth]{figs/traffic_engineering/integration/integration_w_prod.eps} \label{fig:te:integration:load}}
%	\caption{\textbf{Results from deploying \sysname in production.} (a) Month-long measurements show substantial speed up with no impact on efficiency or fairness compared to the provider's previous max-min fair allocator. (b) Using production traces, we show the benefit of \sysname improves as loads~\cite{Abuzaid-ncflow} increase.}
%	\label{fig:traffic-engineering:integration}
%\end{figure}

In summary, in settings where Danna is impractical, \sysname outperforms other TE algorithms (SWAN, 1-waterfilling, B4). Depending on the requirements, users can opt for Adaptive or Approximate Waterfillers, or {\sf EB} (or {\sf GB} if fairness guarantees are important). They can also customize the parameters in each allocator to further tune the balance.

\parab{Production deployment (\figref{fig:traffic-engineering:integration}).} We have successfully deployed \sysname in the production TE pipeline of Azure. Microsoft opted for {\sf GB} as it has the same fairness guarantees as their existing TE solver. \figref{fig:te:integration:month} shows cumulative density function (CDF) of the relative speed up of \sysname compared to the provider's previous allocator. These measurements are over a month-long deployment in a WAN with thousands of nodes. \sysname reduces the run-time on average by $2.4\times$ (up to $5.4\times$) without impacting fairness or efficiency.

We compare \sysname with the previous allocator on production demands at different loads (\figref{fig:te:integration:load}). \sysname's speedup increases with the load because the previous iterative solver invokes more optimizations at higher loads. \sysname's efficiency also increases because its $\epsilon$-trick can exploit minor fairness violations to improve efficiency. In all cases, \sysname is within 1\% of the previous solver's fairness.

\begin{figure}[t]
	\centering
	%	\caption{\textbf{\sysname vs \azure's TE Solver. We use load-factors as in~\cite{Abuzaid-ncflow}. The gains change as loads increase.}}
	\subfigure[Speedup]{\includegraphics[width=0.55\linewidth]{figs/traffic_engineering/integration/speedup_w_prod.eps} \label{fig:te:integration:month}} \vspace{-2mm} \\
	\subfigure[Impact of load]{\includegraphics[width=0.8\linewidth]{figs/traffic_engineering/integration/integration_w_prod.eps} \label{fig:te:integration:load}}
	\caption{\textbf{Results from deploying \sysname in production.} (a) Month-long measurements show substantial speedup with no impact on efficiency or fairness compared to the provider's previous max-min fair allocator. (b) Using production traces, we show the benefit of \sysname improves as loads~\cite{Abuzaid-ncflow} increase.}
	\label{fig:traffic-engineering:integration}
%\end{figure}
%
%\begin{figure}
	\centering
	\includegraphics[width=0.8\linewidth]{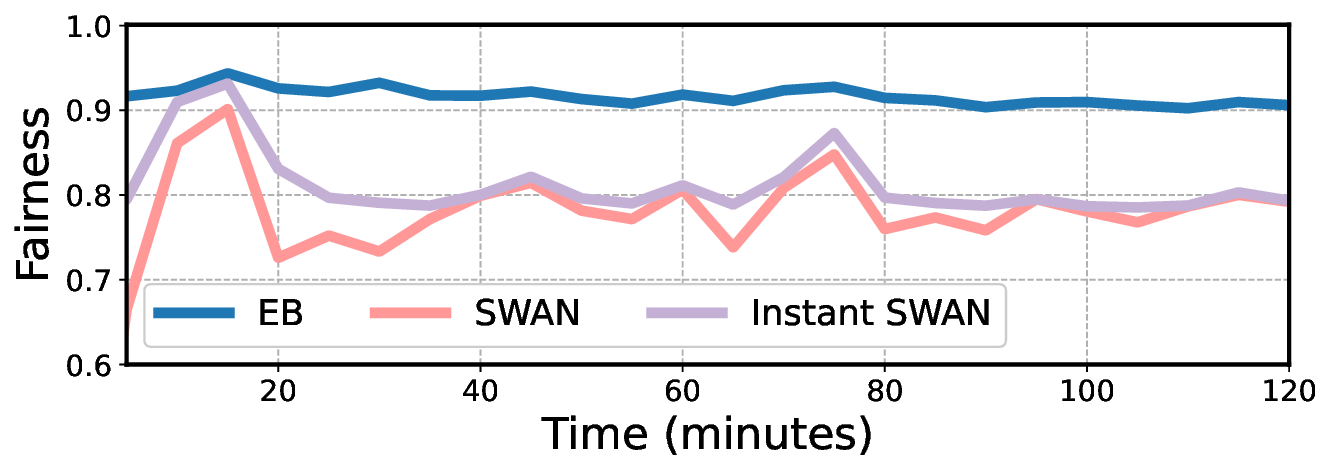}
	\caption{\textbf{Impact of solver runtimes on fairness when demands change.} SWAN can not react to the new demands quickly and faces another 10\% reduction in fairness whereas {\sf EB} can keep track of the changes. These results are on Cogentco following NCFlow's change distribution~\cite{Abuzaid-ncflow} on medium load traffics.}
	\label{fig:tracking-demands}
\end{figure}

%\parab{Production deployment.} We have successfully deployed \sysname in a production TE pipeline of a large cloud provider. Microsoft opted for {\sf GB} as it has the same fairness guarantees as their existing TE solver. \figref{fig:te:integration:month} shows cumulative density function (CDF) of the relative speed up of \sysname compared to the provider's previous max-min fair allocator. These measurements are over a month-long deployment in a WAN with thousands of nodes and consists of thousands of demands. On average, \sysname reduces the run-time by $2.4\times$ (upto $5.4\times$) without impacting fairness or efficiency.

%We also show how \sysname compares with the previous allocator on production demands at different loads (\figref{fig:te:integration:load}). \sysname's speedup increases with the load because the previous iterative solver invokes more optimizations at higher loads. \sysname's efficiency also increases because its $\epsilon$-trick can exploit minor fairness violations to improve efficiency. In all cases, \sysname is within 1\% of the previous solver's fairness.

\begin{figure}[t]
	\centering
	%\subfigure[Number of iterations for each method]{\centering \includegraphics[width=1.0\linewidth]{figs/traffic_engineering/example/num_iterations_cogentco_64x.eps}}
	\subfigure[Fairness vs Run-time]{\centering
		\includegraphics[width=0.9\linewidth]{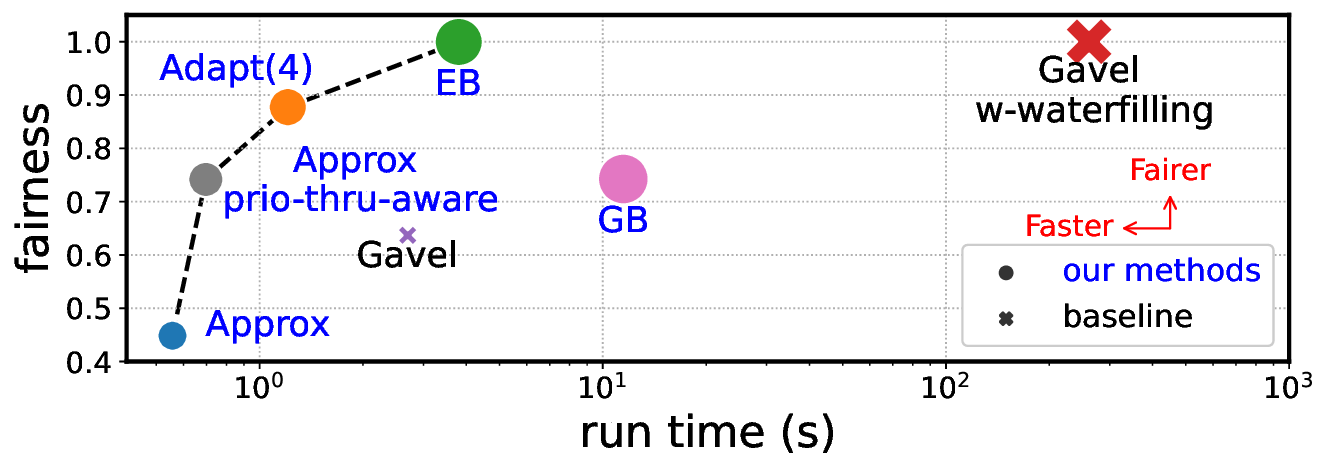}} \vspace{-3mm} \\
	\subfigure[Efficiency wrt Gavel w-waterfilling]{\centering
		\includegraphics[width=0.9\linewidth]{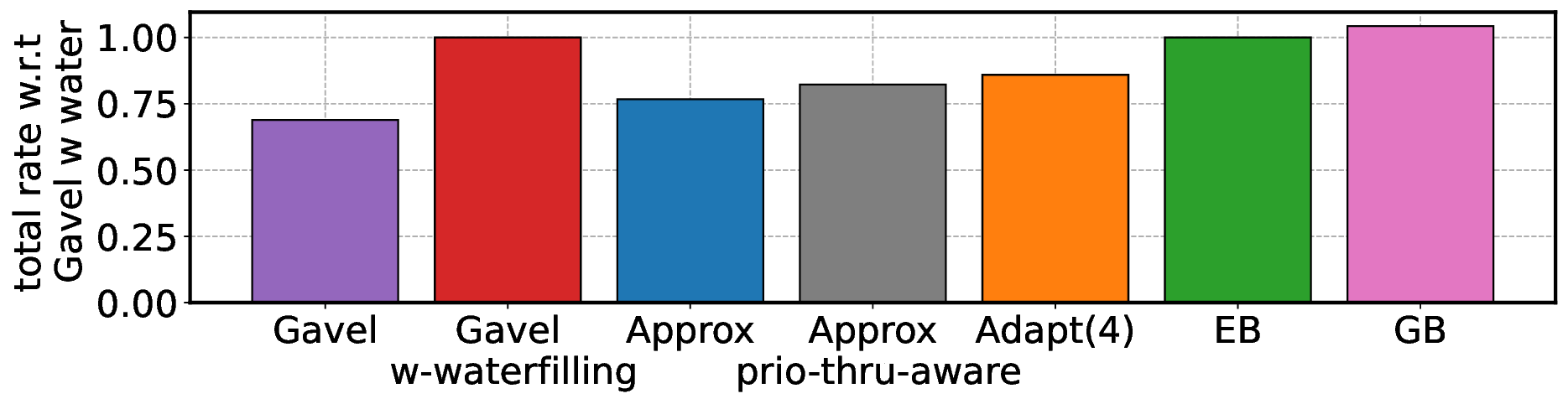}}
	\caption{\textbf{Trade-off between efficiency, fairness, and speed in CS on an example scenario (with 8192 jobs).} (a) shows the fairness vs run-time behavior of the different approaches; (b) shows the efficiency relative to the Gavel w waterfilling. Empirically, \sysname Pareto-dominates both variants of Gavel.}
	\label{fig:cluster-scheduling:example}
\end{figure}

\parab{Tracking Changing Demands (\figref{fig:tracking-demands}).} We evaluated each method on a sequence of traffic, arriving every five minutes (a window), starting from a medium load traffic demand. Our methodology is the same as NCFlow~\cite{Abuzaid-ncflow}. In this scenario, SWAN needs two windows to compute each allocation~--~it only computes allocations for half of the demands. This results in up to 10\% reduction in fairness compared to an instant SWAN (a hypothetical scheme that computes the allocation instantly). However, {\sf EB}\footnote{{\sf GB} is faster than {\sf EB}. If the latter can keep up, so can {\sf GB}. We have omitted an evaluation based on {\sf GB} for this reason.} reacts to changes quickly and meets all the deadlines. In general, SWAN's inability to keep track of demands leads to even higher unfairness than {\sf EB} (relative to what we reported in \figref{fig:traffic-engineering:all-samples:speedup-fairness}). Also, as we move from medium to high load, we expect the difference to be more as SWAN is even slower and needs to solve more optimizations.

% We further integrated \sysname's geometric binning method with a \azure's WAN Traffic Engineering Controller. We started with Geometric Binner since it was a drop and replacement of the existing solver used in \azure and we will incrementally integrate other parts of \sysname in future. \figref{fig:traffic-engineering:integration} shows the speed up and efficiency of the Geometric Binner (GB) compared to \azure's max-min fair TE Solver under different traffic loads. In summary, \sysname improves the run-time by 3.1x~--~4.3x while either matching or improving the total routed flow by 1\%~--5\%. The benefit of \sysname increases as we go to larger load factors matching our previous results.

\begin{figure}[t]
	\centering
	\subfigure[Convergence of the Adaptive Waterfiller]{\includegraphics[width=0.85\linewidth]{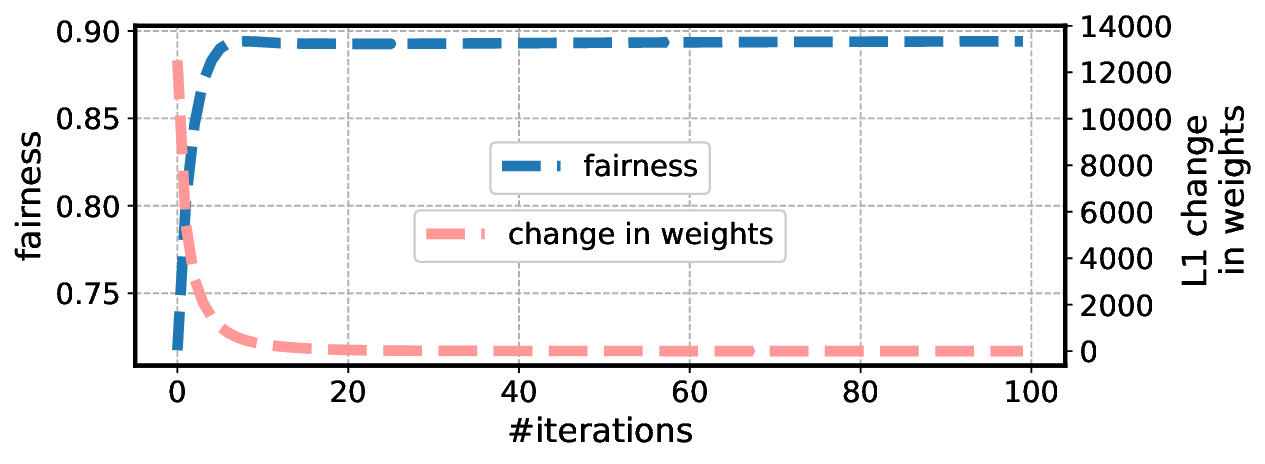} \label{fig:ablation:iter}} \vspace{-2mm} \\
	\subfigure[Impact of the number of bins on the fairness]{\includegraphics[width=0.8\linewidth]{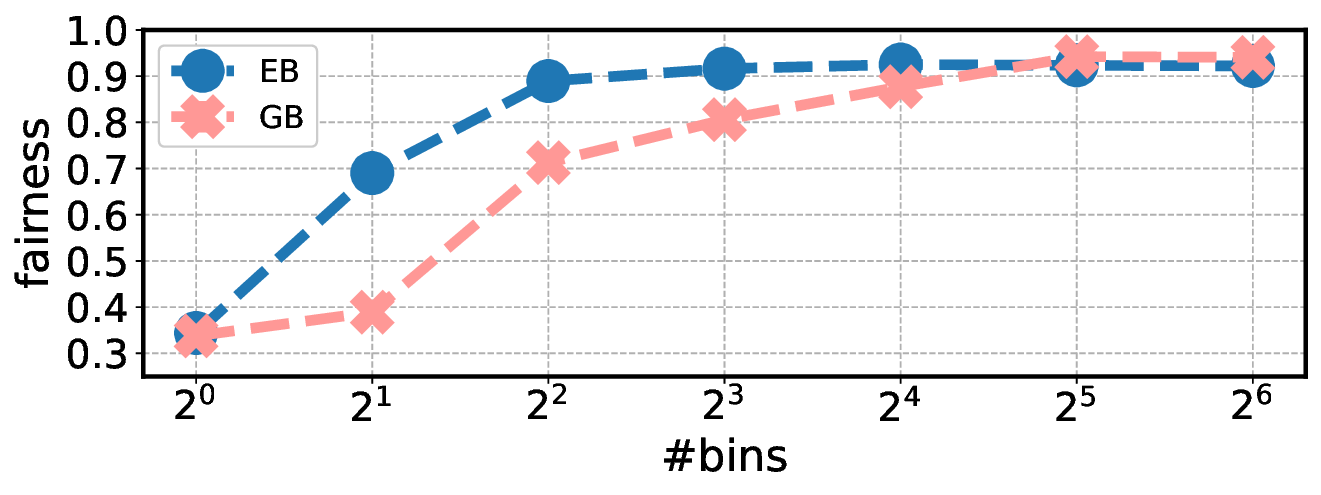} \label{fig:ablation:fairness}} \vspace{-2mm} \\
	\subfigure[Impact of the number of bins on the efficiency]{\includegraphics[width=0.8\linewidth]{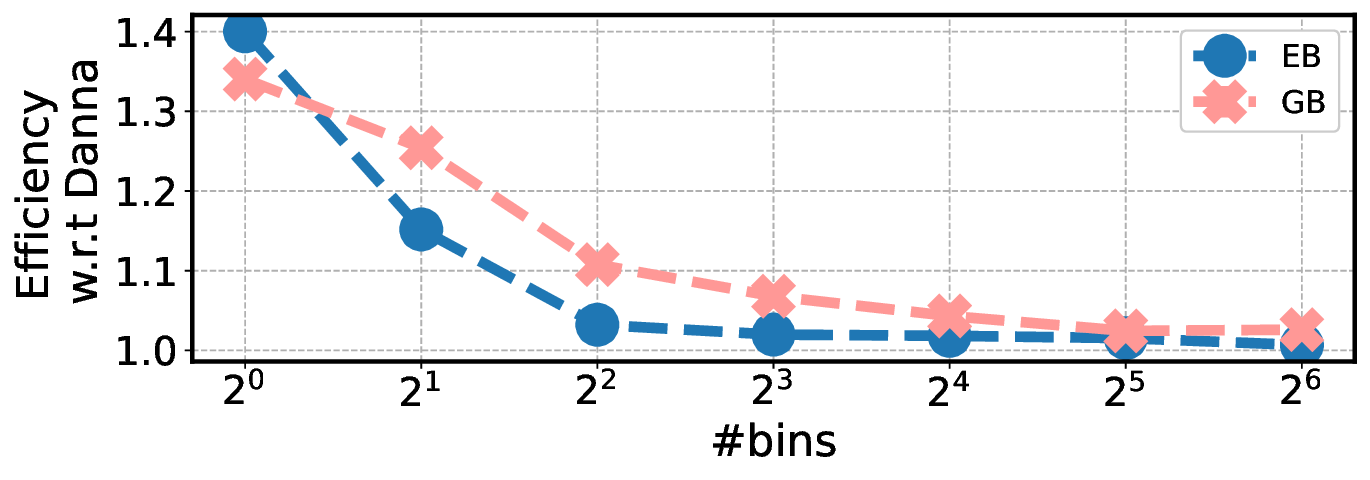} \label{fig:ablation:efficiency}} \vspace{-1mm} \\
	\caption{\textbf{Convergence and sensitivity analysis.} (a) Adaptive Waterfiller empirically converges within 5~--~10 iterations. (b, c) The number of bins controls the trade-off between fairness and efficiency in {\sf EB} and {\sf GB} (fewer bins lead to higher efficiency and lower fairness). Results are on the Cogentco topology and Gravity traffic distribution (scale factor = 64). (see~\figref{fig:bins:poisson} for Poisson)}
\end{figure}

\subsection{Cluster Scheduling}
\label{subsec:CS}

\noindent {\bf Experiment Setup.} We generate job requests from Gavel's job generator: we consider $3$ types of GPUs (V100, P100, K80) and uniformly sample jobs from the $26$ different job types available in Gavel (see \secref{sec::CS_eval_extended}).
Jobs are heterogeneous: they require a different number of workers (which we derive from the Microsoft public trace~\cite{microsoft-philly}) and have different priorities (which we sample uniformly from the set $\{1, 2, 4, 8\}$). 

\parab{Comparison to benchmarks.} We report results on over 40 different scenarios, which capture different number of available GPUs and competing jobs (see~\secref{sec::CS_eval_extended} for more details). Our results match our observations from WAN-TE; \sysname Pareto-dominates both Gavel and Gavel with waterfilling. We present these results in~\figref{fig:cluster-scheduling:all-samples:speedup-fairness} in~\secref{sec::CS_eval_extended} for space. 

We provide further insight into {\sysname}'s performance through an example scenario where 8192 jobs compete for resources (\figref{fig:cluster-scheduling:example}).
Adaptive Waterfiller outperforms standard Gavel in fairness, efficiency, and speed. For CS, {\sf GB} is slower than Gavel but fairer (more than $10\%$) and more efficient (more than $30\%$). We can augment Gavel with waterfilling~\cite{Gavel-Deepak} to improve it, but with a substantial slowdown.
In contrast, {\sf EB} provides comparable fairness and efficiency as Gavel with waterfilling and is $\sim$ 2 orders of magnitude faster.

\subsection{Convergence and Sensitivity Analysis}
\label{s:conv-sens-analys}

\noindent {\bf Convergence.} We empirically evaluate the convergence of the Adaptive Waterfiller. In~\secref{sec:app_proof}, we proved the algorithm in~\secref{s:genwaterfilling} only converges to and stops if it finds a bandwidth-bottlenecked allocation but may not converge if it does not find one.
We empirically find that Adaptive Waterfiller always converges. \figref{fig:ablation:iter} shows how its weights and fairness properties change with the number of iterations: the weights stabilize after 5 iterations. 

\parab{Impact of number of bins.} \figref{fig:ablation:fairness} and~\ref{fig:ablation:efficiency} show fairness and efficiency of binners ({\sf GB}/{\sf EB}) for different number of bins. \sysname uses this parameter to tune the trade-off between efficiency, fairness, and run-time. Using more bins increases fairness because the number of demands within each bin decreases but at the cost of higher run-time (more variables in the optimization). {\sf EB} is fairer than {\sf GB} for up to 16 bins because {\sf GB} suffers from bin-imbalance. However, {\sf GB} does not incur bin-imbalance for $\geq 32$ bins and both methods have roughly the same fairness. The slightly lower fairness of {\sf EB} is due to Adaptive Waterfiller making small mistakes when estimating the order of rates and influencing {\sf EB}'s binning.

\subsection{Other Experiments}
\label{s:exp:other}
\noindent {\bf Impact of number of paths.} We explore how sensitive our solutions are to the number of paths by varying this parameter and comparing our fairest methods (\ie Adaptive Waterfilling and {\sf EB}) to SWAN (\figref{fig:ablation:paths}). Increasing the number of paths improves the benefit of \sysname in both speedup and fairness. With more paths, each optimization of SWAN becomes more expensive, while Adaptive Waterfiller as well as {\sf EB} can exploit path diversity better to achieve higher fairness.

\begin{figure}[t]
	\centering
	\subfigure{\includegraphics[width=0.8\linewidth]{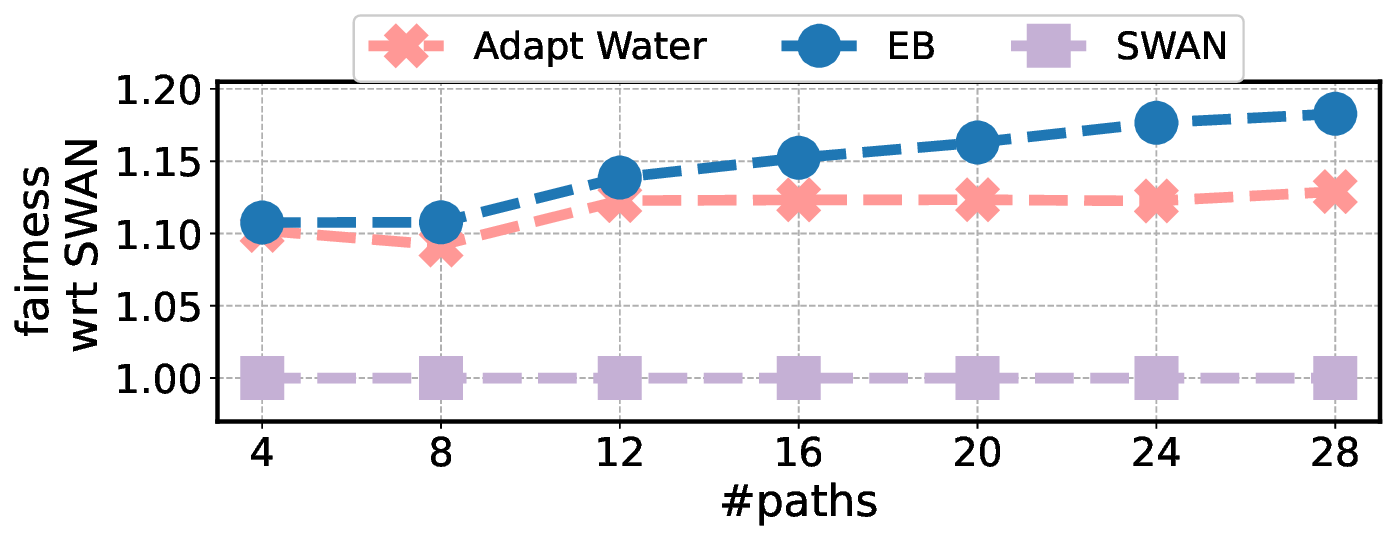}}
	\vspace{-5mm} \\
	\subfigure{\includegraphics[width=0.8\linewidth]{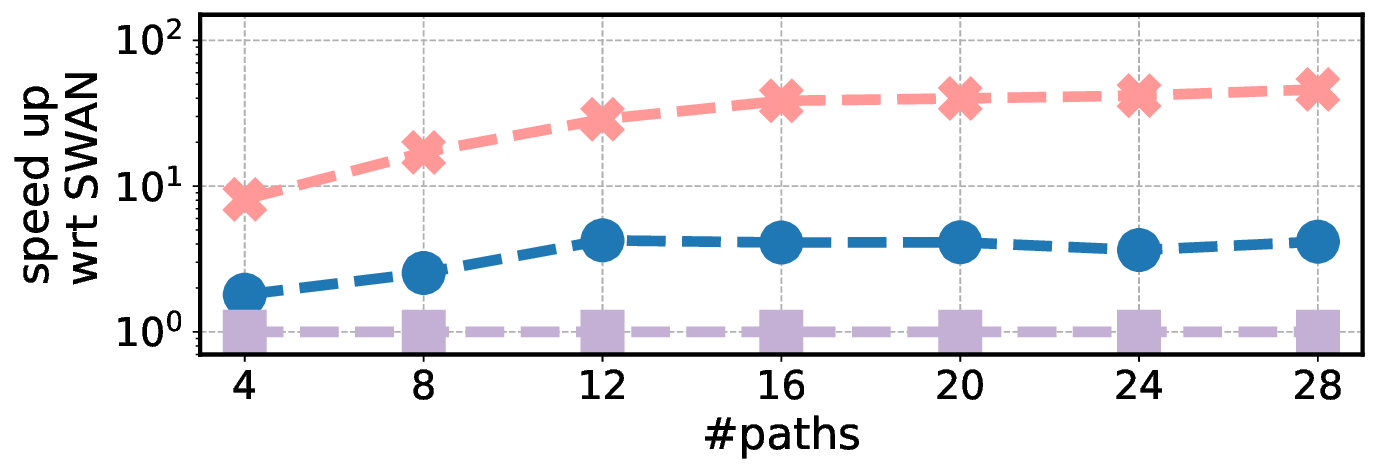}}
	\caption{\textbf{Increasing the number of paths improves the fairness and speedup of \sysname compared to SWAN.} Results are on the Cogentco topology and Gravity traffic distribution (scale factor = 64). (see~\figref{fig:ablation:paths:poisson} for Poisson)}
	\label{fig:ablation:paths}
%\end{figure}
%\begin{figure}[t]
	\centering
	\subfigure{\includegraphics[width=0.8\linewidth]{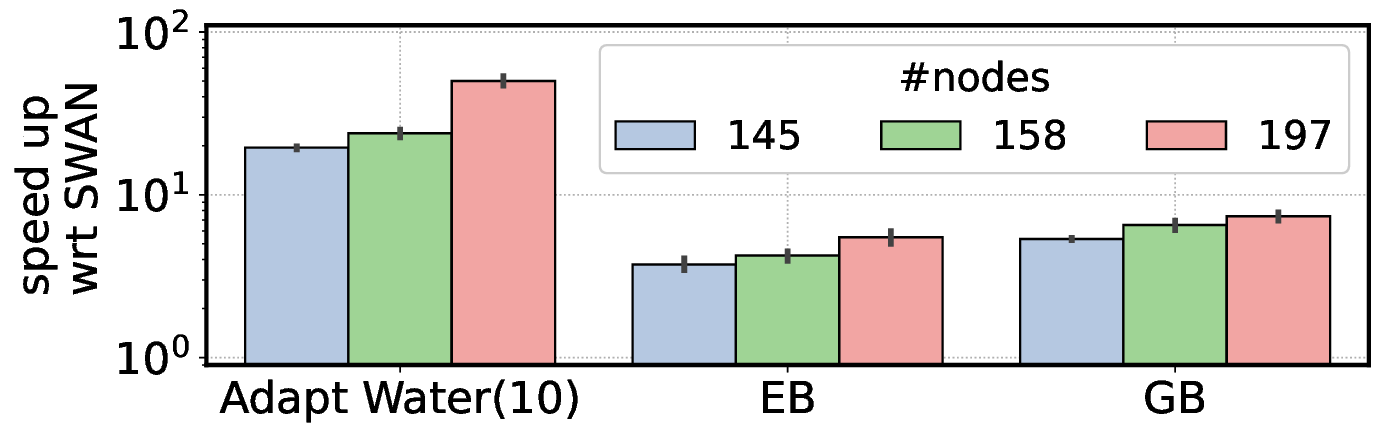}}
	\caption{\textbf{Impact of topology size.} \sysname's speed up relative to SWAN improves with the size of the topology.}
	\label{fig:ablation:toposize}
\end{figure}

\parab{Impact of topology size.} 
The benefit of \sysname's allocators increases with the topology size (\figref{fig:ablation:toposize}): SWAN needs to solve more optimizations for larger topologies while \sysname solves a fixed number of optimizations (=1 for {\sf EB}/{\sf GB} and =0 for adaptive waterfilling).

\parab{Comparison to NCFlow and POP.} NCFlow~\cite{Abuzaid-ncflow} and POP~\cite{Narayanan-POP} decompose the resource allocation problem to scale but do not directly address max-min fairness~\cite{TESurvery}. NCFlow only maximizes the total flow, and the authors mention in the paper that it is hard to extend it to max-min fairness objective~\cite{Abuzaid-ncflow}. Similarly, POP maximizes total flow and maximum concurrent flow (\ie the smallest fractional allocation) but does not provide any results on max-min fairness. To understand how POP compares to \sysname, we adapt both SWAN and \sysname to use it. We randomly divide demands (with client splitting as needed per POP's guidelines) among different partitions and run SWAN or \sysname in parallel on each partition (\figref{fig:decomp:pop}, extended evaluation in~\secref{sec:POP-eval-extended}).
	
We use {\sf GB} to ensure a fair comparison to SWAN: it has the same theoretical guarantees, is more than $10\times$ faster, and maintains the same level of fairness. When we apply POP to SWAN, we lose the worst-case guarantee~\cite{MetaOpt} and have to sacrifice over $10\%$ in fairness to achieve the same speed as \sysname. We also observe that applying POP to \sysname results in the same fairness as SWAN for the same number of partitions (but is also substantially faster).

\begin{figure}[t]
	\centering
%	\subfigure[Cogentco/Gravity/64x/no client splitting]{\includegraphics[width=0.8\linewidth]{}}
%	\vspace{-3mm} \\
	\subfigure{\includegraphics[width=0.77\linewidth]{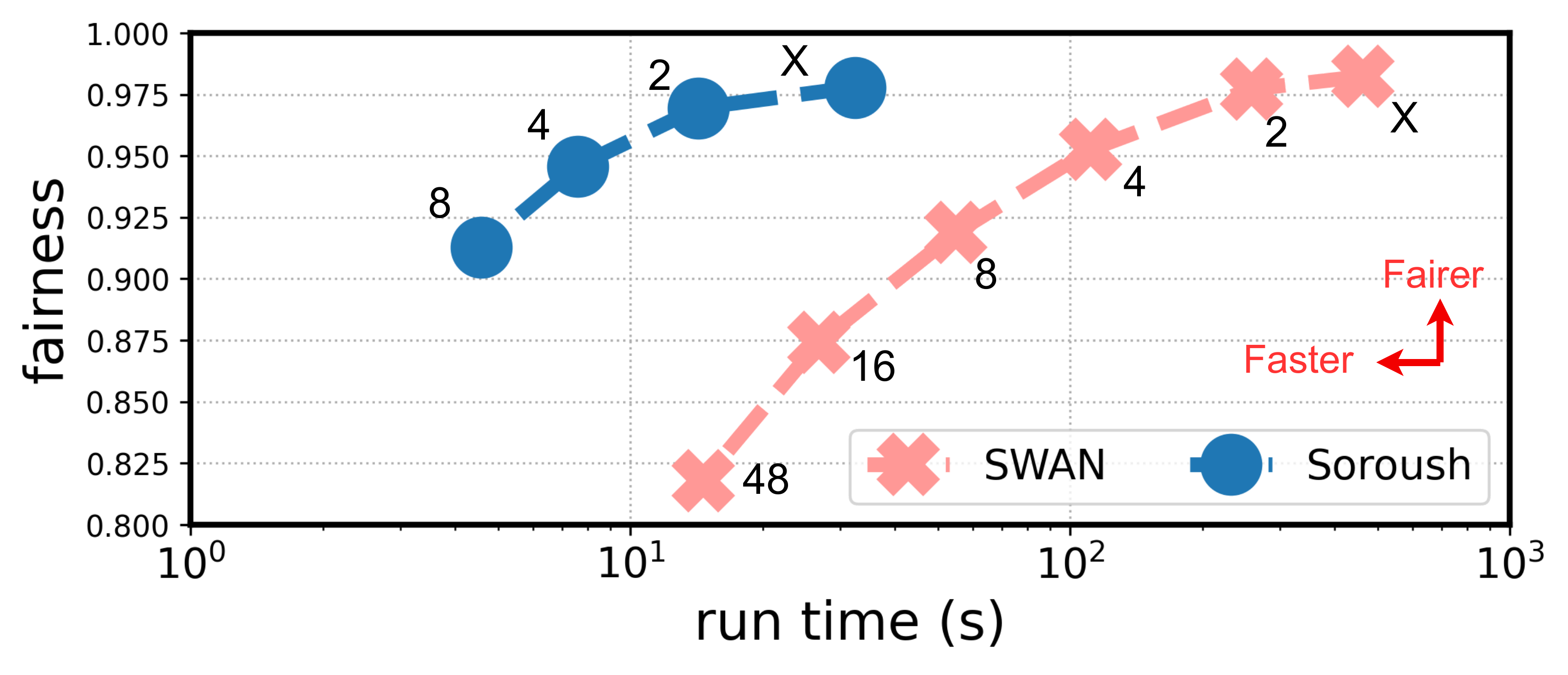}}
	\vspace{-2mm} \\
%	\subfigure[Cogentco/Poisson/16x/with client splitting]{\includegraphics[width=0.8\linewidth]{}}
	\caption{\textbf{Impact of POP~\cite{Narayanan-POP}.} The results are on 3 randomly generated traffic, following Poisson distribution with a scale factor 64, on the Cogentco topology. Consistent with POP, we use client splitting (ratio$=0.75$) for this traffic distribution. ["X" indicates that POP is not used, and "Numbers" = number of POP partitions.]}
	\label{fig:decomp:pop}
\end{figure}

	\section{Discussion}
\label{sec:discussion}

\sysname allows operators to adjust the trade-off between fairness, speed, and efficiency. We focus on multi-path allocations but our solutions apply to single-path settings too~\cite{Janus-Alipourfard,swarm}. Under this setting, our experiments show the Approximate Waterfiller is an order of magnitude faster than the fastest single-path allocator with only a slight decrease in efficiency. We defer the following to future work:

\parab{Other fairness metrics.} \sysname does not apply to other, less commonly used, fairness metrics~\cite{jain1999throughput, bonald2006queueing, bonald2004performance}.

\parab{Other problem domainss.} \sysname applies to any graph-based resource allocation problem which seeks to achieve max-min fairness. We demonstrate significant benefit of \sysname using examples from CS and WAN-TE. To use \sysname in other domains~\cite{Facebook-WAN-Risk-Xia,Janus-Alipourfard,s-PERC-Lavanya, schad2012max, mondal2022min, li2019max, gogu2014max,swarm}, users need to model the additional constraints in our graph model. We aim to provide tools to simplify this in future work.

\parab{Distributed extension.} \sysname applies to centralized resource allocation problems. Our future work aims to extend it to distributed settings~\cite{s-PERC-Lavanya,ros2001theory,cabinae,poseidon}.
	\section{Related Work}
\label{s:related-work}

\noindent{\bf TE and CS resource allocation.} Prior approaches to both TE and CS aim to produce fast and efficient allocations~\cite{SWAN,Gavel-Deepak,B4,danna-practical-max-min,Pioro-Efficient-Max-Min-Fair,s-PERC-Lavanya,ghodsi2013choosy,danna-upward,Kandula-texcp,Gandiva-fair-Chaudhary,Allox-Tan}. In~\secref{sec:eval}, we show \sysname outperforms the sate-of-the-art in multi-resource max-min fair allocation (SWAN, Danna, B4, waterfilling, and Gavel).

Prior work employs ML in TE~\cite{LearningToRoute,DOTE,Teal} to optimize objectives that are already solved using a single LP (\eg max flow). These objectives are either convex or quasi-convex~\cite{DOTE}. However, the exact from of max-min fairness is sequential and we are unaware of any work that considers end-to-end training on a sequence of LPs. In fact, it may be more tractable for ML methods to learn our {\sf GB} which is a single LP. Applying ML to further speed up \sysname is an interesting future direction.

Recent work~\cite{Abuzaid-ncflow,Narayanan-POP} uses decomposition techniques to scale resource allocation problems. However, they focus on simpler objectives such as max flow or max concurrent flow that require single LPs and do not explicitly support max-min fairness. Extending NCFlow to max-min fairness is non-trivial as the authors mentioned~\cite{Abuzaid-ncflow}. We have extended POP to support max-min fairness and empirically compare it with \sysname. Our results show POP's performance depends on the traffic distribution, whereas \sysname works consistently well. POP also does not have worst-case fairness guarantees~\cite{MetaOpt}, whereas \sysname has allocators that do ({\sf GB}).

%\parab{Applications~\cite{Facebook-WAN-Risk-Xia,Janus-Alipourfard,s-PERC-Lavanya, schad2012max, mondal2022min, li2019max, gogu2014max}.} We demonstrate significant benefit of \sysname in WAN-TE and CS. We believe the algorithms in \sysname apply to other domains where we need to do graph-based, centralized, max-min fair allocation but extending it to other domains is outside our scope.

\parab{Algorithms for computing max-min fair rates.} Prior work has expanded our understanding of max-min fair resource allocation~\cite{nguyen2017approximate, radunovic2007unified}. These are largely theoretical and do not provide a practical and fast solution. Bandit-based solutions~\cite{bistritz2020my} lack worst-case guarantees and do not allow users to control the trade-off between fairness, efficiency and speed.

	\section{Conclusion}
\label{s:conclusions}

%\sysname enables fast multi-resource max-min fair allocations for a class of graph-based problems that include traffic engineering and cluster scheduling. It provides a suite of allocators that span a range of speeds (some with useful theoretical properties) all of which are faster by orders of magnitude compared to the state-of-the-art, or are fairer and more efficient, or both. We have deployed \sysname in Azure's WAN traffic engineering pipeline. Future work can explore other applications and other notions of fairness.
\sysname enables fast max-min fair resource allocation for graph-based problems such as traffic engineering and cluster scheduling. It provides a suite of allocators that Pareto-dominate state-of-the-art in both of these domains. Some of the allocators in \sysname have theoretical guarantees, and all of them have parameters for users to control the trade-offs. We have deployed \sysname in Azure's WAN traffic engineering pipeline. Future work can explore other applications and other notions of fairness.

\vspace{0.05in}
\noindent
{\bf Acknowledgements:} We thank Ymir Vigfusson and the anonymous reviewers for their feedback on this paper. We also thank Luis Irun-Briz for his support of this work. This material is based upon work supported by Microsoft and the U.S. National Science Foundation under grant No. CNS-1901523.

	\clearpage
	\newpage
	\bibliography{cr/reference}
	\bibliographystyle{plain}
	
	\clearpage
	\appendix

\renewcommand\thefigure{A.\arabic{figure}}
\renewcommand{\thealgocf}{A.\arabic{algocf}}
\renewcommand\thetable{A.\arabic{table}}
\setcounter{algocf}{0}
\setcounter{figure}{0}
\setcounter{table}{0}

\begin{figure}
	\centering
	\includegraphics[width=1.0\linewidth]{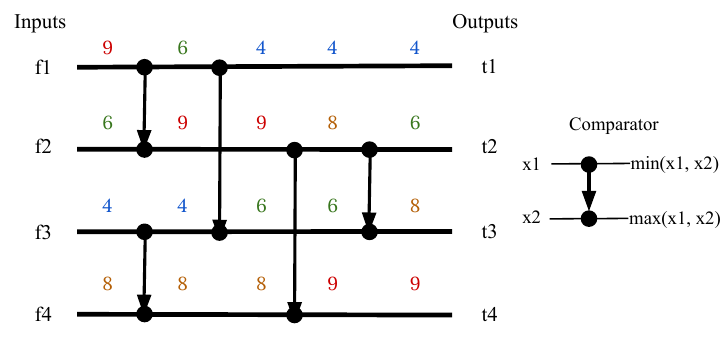}
	\caption{\textbf{Sorting Network Example.} }
	\label{fig:sorting-network-example}
	\vspace{2mm}
\end{figure}

\section{Max-min fair allocation optimization}
\label{sec:general-form}

\begin{table*}[t!]
	\centering
	{\footnotesize
		\renewcommand{\arraystretch}{1.5}
		\begin{tabular}{l p{2.5in} p{2.0in} p{1.3in}}
			{\bf Term} & {\bf Meaning} & {\bf CS} & {\bf WAN-TE}\\
			\hline
			\hline
			%			$\mathbf{a}, \mathbf{w}$ & bias and inverse of coefficient vector in the objective of fair(.) where $a_k$ and $w_k$ are for the $k$-th demand. & $a_{k} = 0$ \newline $w_{k}=$ priority of job $k$ multiplied by effective average throughput~\cite{Gavel-Deepak} & $a_{k}=0$  \newline $w_{k}=1$ \\
			
			$\mathbf{w}$ & inverse of coefficient vector in the objective of fair(.) where $w_k$ indicates the weight for the $k$-th demand and encodes the desired proportional max-min fair allocations. & priority of job $k$ (In~\cite{Gavel-Deepak} = user specified priority $\times$ effective average throughput $/$ number of workers) & priority of different services (\eg search and ads)\\
			\hline
			$q^p_{k}$ & the utility obtained by demand $k$ when assigned 1 unit on path $p$. & progress rate of the $k$-th job when assigned 1 unit on server $p$ & $ = 1$\\
			\hline
			$r^e_{k}$ & capacity consumed on resource $e$ (\ie link or GPU) when allocating 1 unit to demand $k$  & capacity consumed by job $k$ from resource $e$ (CPU, GPU or memory) & $ = 1$\\
			\hline
			$c_e$ & {capacity of resource $e \in \mathcal{E}$} & capacity of CPU, GPU or any other resources on a server & capacity of link $e$ \\
			\hline
			$d_k$ & {the resource requested by the $k$-th demand} & job $k$'s requested duration of time {($= 1$ in \cite{Gavel-Deepak})} & flow $k$'s requested rate \\
			\hline
			$f_{k}, f^p_{k}$ & $f_k: $ demand $k$'s total utility \newline $f^p_{k}: $ demand $k$'s obtained allocation from path $p$ & $f_k:$ job $k$'s total progress rate\newline $f^p_{k}$: fraction of time server $p$ is assigned to job $k$ & $f_{k}$: flow $k$'s total rate \newline $f^p_{k}$: flow $k$'s rate on path $p$ \\
			\hline\hline
			%	\hline
		\end{tabular}
	}
	\caption{Additional notation for the general multi-resource max-min fair formulation in \secref{sec:general-form}. \label{t:notation-general}}%\pooria{Need to generalize the Table 1 notation}
\end{table*}

\sysname offers a range of general algorithms that can solve any max-min fair resource allocation problem expressed using the model described in~\secref{s:unified-formulation}. In this section, we present the formulation behind this model.
\tabref{t:notation-general} describes all the notations, their meanings, and their mappings to WAN Traffic Engineering (WAN-TE) and Cluster Scheduling (CS). 

\parab{Feasible Allocation.} Given a set of demands and a set of paths over a group of resources, an allocation is feasible if it satisfies demand and capacity constraints. We define the feasible allocation as:
{\small
	\begin{alignat}{4}
		\multispan{4}\mbox{$\displaystyle \label{eq:feasible-alloc} {\sf FeasibleAlloc}(\mathcal{E}, \mathcal{D}, \mathcal{P}, \mathcal{Q}, \mathcal{R}) \triangleq \big\{\mathbf{f} \mid$}&\\
		&f_k = \sum_{p \in \mathcal{P}_k} q^p_k f_k^p, & \forall k \in \mathcal{D} & \text{(allocation for demand $k$)}\nonumber\\
		& \sum_{p \in \mathcal{P}_k}{f^p_k} \leq d_k, & \forall k \in \mathcal{D} & \text{(allocation below volume)}\nonumber\\
		& \sum_{k,p \mid p \in \mathcal{P}_k, e \in p} r^e_{k}f^p_{k} \le C_{e} & \forall e \in \mathcal{E}  & \text{(allocation below capacity)} \nonumber\\		
		&f_k^p \geq 0 & \forall p \in \mathcal{P}_k, k \in \mathcal{D} & \text{(non-negative allocation)}\big\}\nonumber
	\end{alignat}
}

\begin{comment}
	{\small
		\begin{alignat}{4}
			\multispan{4}\mbox{$\displaystyle \label{eq:feasible-alloc} {\sf FeasibleAlloc}(\mathcal{V}, \mathcal{E}, \mathcal{D}, \mathcal{P}) \triangleq \big\{\mathbf{f} \mid$}&\\
			&f_k = \sum_{p \in \mathcal{P}_k} f_k^p, & \forall k \in \mathcal{D} & ~~ \text{(allocation for demand $k$)}\nonumber\\
			&f_k \leq d_k, & \forall k \in \mathcal{D} & ~~ \text{(allocation below volume)}\nonumber\\
			&\sum_{k, p \mid p \in \mathcal{P}_k, e \in p} \!\!\!\!\! r_k f_k^p \leq c_e, & \forall e \in \mathcal{E} & ~ \text{(allocation below capacity)}\nonumber\\
			&f_k^p \geq 0 & \forall p \in \mathcal{P}, k \in \mathcal{D} & ~~ \text{(non-negative allocation)}\big\}\nonumber
		\end{alignat}
	}
\end{comment}

\parab{Max-Min Fairness.} Among all the feasible allocations, the optimal max-min fair solution seeks:
{\small
	\begin{align}
%		\mbox{\sf OptFair}(\mathcal{E}, \mathcal{D}, \mathcal{P}, \mathcal{Q}, \mathcal{R}) \quad \triangleq \quad & \arg \max_{\mathbf{f}} \quad \text{fair} (\mathbf{a} + \mathbf{f} / \mathbf{w}) \\
		\mbox{\sf OptFair}(\mathcal{E}, \mathcal{D}, \mathcal{P}, \mathcal{Q}, \mathcal{R}) \quad \triangleq \quad & \arg \max_{\mathbf{f}} \quad \text{fair} (\mathbf{f} / \mathbf{w}) \\
		\mbox{s.t.} \quad & \mathbf{f} \in {\sf FeasibleAlloc}(\mathcal{E}, \mathcal{D}, \mathcal{P}, \mathcal{Q}, \mathcal{R}) \nonumber
	\end{align}
}
where the function $\text{fair}(\mathbf{x})$ encodes the max-min fairness objective. To our knowledge, prior works do not present a closed form of this function.  In~\secref{sec:closed-form}, we introduce two potential candidates (one exact and one that converges in the limit).

%We believe this formulation can apply to other domains which require multi-resource max-min fair allocation. This extension is outside the scope of this work.

\section{Closed-form max-min fair objective}
\label{sec:closed-form}

We present two closed form representations of the max-min fair objective: one exact, and one that converges to the max-min fair objective in the limit. The exact form is the following:

\begin{align}
\label{sec:exact_objective}
\text{fair}(\mathbf{f}) = \text{arg}\max_{\mathbf{f}} \bigcup\limits_{\{\set{F}_A \mid \set{F}_A \subseteq \mathbf{f}\}}\min(\{f_k \mid f_k \in \set{F}_A\})
%\bigcup\limits_{F_{A} \subseteq \set{F}} \max\limits_{\mathbi{f} \in \set{F}_A} \min\limits_{k: f_k \in F_{A}} f_k
\end{align}

Intuitively, this is a collection of maximization problems, where each maximizes the smallest flow in a given subset of $\mathbf{f}$ (a total of $2^{\mid\mathbf{f}\mid}$ maximizations). We next prove that this objective, in the instance that $\mathbf{f}$ are bounded, results in max-min fair allocations.

\begin{proof}
	
	Without loss of generality, we assume if $i < j$ then $f_i \le f_j$ for all $f_i, f_j \in \mathbf{f}$. 
	
	Suppose the theorem is not true: there exists an allocation $\mathbf{f}^*$ which is optimum as measured by the objective in~\ref{sec:exact_objective} but is not max-min fair. Three scenarios might have caused this;
	
	\textbf{Case 1.} A flow $i$ exists with unbounded $f_i^{*}$, which can not be true as we assume all the flows are bounded.
	
	\textbf{Case 2.} A flow $i$ exists that we can improve its rate without hurting other flows with $\leq$ rate. One of the constraints in~\eqnref{sec:exact_objective} is to maximize $f_{i}$ as a result such $i$ can not exist.
	
	\textbf{Case 3.} Two flows $i$ and $j$ exist ($i < j$) with optimal max-min fair rates of $\hat{f}_i$ and $\hat{f}_j$ such that $\hat{f}_j < f^{*}_j$ and $f^{*}_i < \hat{f}_i$. This means that in the solution from \eqnref{sec:exact_objective}, flow $j$ is receiving more than its share and is hurting flow $i$. This also can not happen since it violates one of the constraints in~\eqnref{sec:exact_objective} that maximizes the minimum of $i$ and $j$. (Note that this holds even if $\hat{f}_i = \hat{f}_j$ since maximizing the minimum of these two ensures they get equal rates.)
	
	As a result, each flow is guaranteed to be bounded, achieve its maximum possible rate, and can not hurt any other flow with less than or equal rate. This is the definition of max-min fairness ($\mathbf{f^{*}}$ is max-min fair).
\end{proof}

An alternate closed form representation of max-min fair is the following:

\begin{align}
\label{eq:limit_proof}
\text{fair}(\mathbf{f}) =\arg\max_{\mathbf{f}} \sum_{i} \epsilon^{\sum_{j \neq i} \mathbb{I}(f_i \le f_j)} f_i
\end{align}

We can prove this converges to the max-min fair rate allocations as $\epsilon \rightarrow 0$ similar to the proof of~\theoremref{th:epsilonOneShot}.

\section{SWAN as a sequence of LPs}
\label{sec:swan_form}
The original formulation of the $b^{\mbox{\tiny{th}}}$ iteration of SWAN~\cite{SWAN} is the following:
{\small
		\begin{align}
			\label{eq:seq-swan}
			& \mbox{\sf SWANMaxMin}_{\textcolor{maroon}{b}}(\mathcal{E}, \mathcal{D}, \mathcal{P})~\triangleq && \arg\max \sum_{k \in \mathcal{D}} \textcolor{maroon}{f_{kb}} \\
			& \hspace{5mm} \mbox{s.t.} \quad && \hspace{-25mm}\textcolor{maroon}{f_{kb} \leq U\alpha^{b-1},} & \textcolor{maroon}{\forall k \in \mathcal{D}} \nonumber \\
			&&& \hspace{-25mm} \textcolor{maroon}{f_{kb} ~ \left\{\begin{array}{cl}
					\rowcolor{whiteRowColor} = f_{k(b-1)}  & \mbox{if}~f_{k(b-1)} < U\alpha^{b-2} \\
					\rowcolor{whiteRowColor} \geq f_{k(b-1)} & \mbox{otherwise} \\
				\end{array}\right. ,} & \textcolor{maroon}{\forall b > 1}\nonumber \\
			&&& \hspace{-25mm} (\mathbf{f}_{1}, \dots, \mathbf{f}_{b-1}) \in {\sf SWANMaxMin}_{b-1}, \nonumber \\
			&&& \hspace{-25mm} \mathbf{f} \in {\sf FeasibleAlloc}(\mathcal{E}, \mathcal{D}, \mathcal{P}).\nonumber
		\end{align}
}
where $f_{kb}$ is the total allocated rate to demand $k$ up to iteration $b$.

\section{Proofs of results for AdaptiveWaterfiller}
\label{sec:app_proof}
We present the proofs of the various results mentioned in~\secref{s:genwaterfilling} for Adaptive Waterfiller.

\subsection{Proof of Theorem~\ref{T:bandwith_bottleneck}}
\label{sec:proof_bottleneck}

If we denote by $\mathbf{f}(\mathbf{\theta})$, the solution of solving the weighted waterfilling sub-flow problem with weights $\mathbf{\theta} = \{\theta^p_{k}\}$, then convergence implies that
\begin{equation}\label{e:weights_convergence}
	\theta^p_{k} = \frac{f^p_{k}(\mathbf{\theta})}{f_{k}(\mathbf{\theta})},
\end{equation}
so that $\theta^p_{k}(t+1) = \theta^p_{k}(t)$ for all $p,k$.
From the definition of single-path weighted waterfilling, it must be that if $f^p_{k}$ is bottlenecked at link $l$, then $\frac{f^p_{k}}{\theta^{p}_{k}} \geq \frac{f^{\hat{p}}_{j}}{\theta^{\hat{p}}_{j}}$ for all non-zero $f^{\hat{p}}_{j}$ going through that link.
Using \eqnref{e:weights_convergence} to replace the weights in this inequality, it immediately follows that $f_k \geq f_j$.
Since this must hold for every $j$ such that there exists a non-zero subflow $f^{\hat{p}}_{j}$ going through link $l$, it must be that $\mathbf{f}$ is bandwidth-bottlenecked (see definition before Theorem~\ref{T:bandwith_bottleneck}).

\subsection{Other results}
\label{sec:additional_results_waterfilling}

In the discussion after Theorem~\ref{T:bandwith_bottleneck}, two results are stated without proof: the max-min fair rate allocation is bandwidth-bottlenecked and the adaptive waterfiller converges when it finds a bandwidth-bottlenecked rate allocation.
Here, we provide their proofs in the form of the two following lemmas:

\begin{lemma}
	If $\mathbf{f}$ is a max-min fair rate allocation then it must be bandwidth-bottlenecked.
\end{lemma}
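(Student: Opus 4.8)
The plan is to prove the multi-commodity analogue of the classical ``bottleneck'' characterisation of max-min fairness~\cite{Bertsekas-data-netwroks}: in a max-min fair allocation no demand's rate can be raised without strictly lowering the rate of a demand that is no larger, and being bandwidth-bottlenecked is precisely the combinatorial record of that obstruction. I would use the form of max-min fairness that there is no feasible reallocation $\mathbf{g}$ and demand $k$ with $g_k>f_k$ while every demand whose rate strictly decreases had rate $\le f_k$ under $\mathbf{f}$; the contradiction argument below invokes exactly this negation.

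First I would fold the volume constraints $\sum_{p\in\mathcal{P}_k}f^p_k\le d_k$ into the capacity constraints by adding, for each demand $k$, a \emph{virtual link} $e_k$ of capacity $d_k$ that lies on every path of $k$ -- exactly the augmentation already used in \secref{s:genwaterfilling}. After this reduction the only feasibility constraints are link capacities, and ``$f^p_k$ is bottlenecked at $l$'' reads: $l\in p$ and $l$ is saturated; condition (ii) in the definition preceding Theorem~\ref{T:bandwith_bottleneck} then additionally asks $f_k\ge f_j$ for every demand $j$ with a positive subflow through $l$. Observe that if $k$'s volume constraint is tight then $e_k$ is saturated, lies on every path of $k$, and carries only subflows of $k$; hence $e_k$ is a valid bottleneck link for \emph{all} subflows of $k$ (the inequality $f_k\ge f_j$ degenerates to $f_k\ge f_k$), and conditions (i) and (ii) hold for $k$ for free. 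So I may henceforth assume $k$'s volume constraint is slack. Zero subflows need no attention: they are not ``subdemands'' in the sense of the definition.

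The core is then a contradiction argument. Suppose demand $k$ has a positive subflow $f^p_k>0$ with \emph{no} bottleneck link on $p$. Since $k$'s volume is slack, $p$ must contain at least one saturated (real) link -- otherwise $f^p_k$ could simply be increased by a small $\delta>0$, strictly raising $f_k$ without changing any other rate, contradicting max-min fairness. Moreover every saturated link $l\in p$ must carry some demand $j(l)$ with a positive subflow through $l$ and $f_{j(l)}>f_k$, for otherwise $l$ itself would be a bottleneck link for $f^p_k$. Now form $\mathbf{g}$ from $\mathbf{f}$ by raising $f^p_k$ by a small $\delta>0$ and, on each saturated link $l\in p$, lowering the chosen subflow of $j(l)$ by just enough (accounting for the coefficients $q^p_k$, $r^e_{k}$, $r^{e}_{j(l)}$) to keep $l$ within its capacity; the non-saturated links of $p$ absorb the extra flow for $\delta$ small, reductions of $j(l)$'s subflows only free capacity elsewhere, and $k$'s slack volume constraint is not violated. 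In $\mathbf{g}$ we have $g_k>f_k$ while every demand whose rate strictly decreased had rate $>f_k$ under $\mathbf{f}$ -- contradicting max-min fairness of $\mathbf{f}$. Hence every positive subflow of $k$ admits a bottleneck link; assigning each subflow such a link yields condition (i), and the bottleneck property of each assigned link yields condition (ii). Since $k$ was arbitrary, $\mathbf{f}$ is bandwidth-bottlenecked.

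The step I expect to be the main obstacle is making the perturbation $\mathbf{g}$ simultaneously feasible at \emph{all} saturated links of $p$ at once -- a single path may contain several, possibly sharing demands -- and verifying that ``only strictly larger demands are decreased'' survives the bookkeeping with the utility and resource-scaling coefficients; the virtual-link device is what prevents the demand-volume constraint from complicating this. A secondary point to state carefully is the precise max-min fairness definition invoked, since the whole argument is its contrapositive, together with the routine non-degeneracy remark that on a used path $q^p_k>0$ (else the subflow carries no utility and may be zeroed out).
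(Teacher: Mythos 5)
Your proof is correct and follows essentially the same route as the paper's: assume the allocation is not bandwidth-bottlenecked and derive a contradiction by constructing a feasible perturbation that raises $f_k$ while lowering only demands with strictly larger rates. Your version is in fact more complete than the paper's two-sentence sketch --- it explicitly handles the volume constraints via virtual links, the case of a positive subflow whose path has no saturated link, and the simultaneous repair of several saturated links on one path --- but these are elaborations of the same argument rather than a different approach.
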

\begin{proof}
	Suppose that this is not true and a max-min rate allocation is \emph{not} bandwidth-bottlenecked.
	This must mean that for some subflow $f^{p}_{k}$ bottlenecked on link $l$, there is another non-zero subflow $f^{\hat{p}}_{j}$ going through that link and $f_j > f_k$.
	This implies that we can increase the subflow $f^{p}_{k}$ at the expense of $f^{\hat{p}}_{j}$. Ultimately, this increases the allocation of $f_k$ without reducing the allocation of any other equal or smaller allocation (only reducing the allocation of $f_j$, which was larger to start with).
	We arrived at a contradiction since this violates the definition of max-min fair allocation.
\end{proof}

\begin{lemma}
	Every bandwidth-bottlenecked rate allocation $\mathbf{f}$ is a fixed point of the adaptive waterfiller algorithm.
\end{lemma}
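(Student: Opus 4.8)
The plan is to make explicit the weight configuration that the adaptive waterfiller would carry at such an allocation, and then verify in a single step that nothing moves. Given a bandwidth-bottlenecked $\mathbf{f}$, set $\theta^p_k := f^p_k / f_k$ for every demand $k$ and path $p \in \mathcal{P}_k$ (as in the proof of \theoremref{T:bandwith_bottleneck}, we may assume without loss of generality that every demand receives strictly positive rate; subdemands with $f^p_k = 0$ then get weight $0$, which waterfilling respects, and a demand forced to $0$ would force, via conditions (i)--(ii), a saturated link carrying only zero-rate subdemands, i.e.\ a zero-capacity link, which can be removed). These $\theta^p_k$ are exactly what the update rule of \secref{s:genwaterfilling} produces from $\mathbf{f}$, so it suffices to show that single-path weighted waterfilling (\algoref{alg:one_path_wt_waterfilling}), run on the subdemand instance with weights $\weightedroutingmatrix[e, k_p] = \theta^p_k \indicator{e \in p}$, returns $\mathbf{f}$ itself; then $\theta^p_k(t+1) = f^p_k(\theta)/f_k(\theta) = f^p_k/f_k = \theta^p_k$, and $\mathbf{f}$ is a fixed point.

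To show $\mathbf{f}(\theta) = \mathbf{f}$ I would invoke the classical bottleneck characterization of weighted max-min fairness~\cite{Bertsekas-data-netwroks}: the output of weighted waterfilling is the unique allocation in which every positive subdemand traverses a saturated link on which it attains the largest \emph{weighted} rate among the positive subdemands using that link. With our choice of weights, the weighted rate of any subdemand of demand $k$ collapses to the single value $f^p_k / \theta^p_k = f_k$. Now fix a subdemand with $f^p_k > 0$: condition (i) of bandwidth-bottleneck supplies a saturated link $l \in p$ at which $f^p_k$ is bottlenecked, and condition (ii) supplies $f_k \geq f_j$ for every demand $j$ having a subdemand on $l$; equivalently, $f^p_k/\theta^p_k = f_k \geq f_j = f^{p'}_j/\theta^{p'}_j$ for every positive subdemand $j_{p'}$ on $l$. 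Hence $\mathbf{f}$ satisfies the characterization verbatim, so by uniqueness of the weighted max-min fair allocation it equals the waterfilling output, as needed.

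The one genuine subtlety --- and where the real work lies --- is the virtual per-demand edge of capacity $d_k$ that {\sf aW}/{\sf AW} insert to cap allocations, which has to be shown harmless. If $\sum_p f^p_k < d_k$ this edge is unsaturated and contributes nothing to the bottleneck characterization; if $\sum_p f^p_k = d_k$ it is saturated but carries only subdemands of $k$, all at weighted rate $f_k$, so each of them still attains the maximum weighted rate on it (and it can serve as the certifying saturated link when no physical link does). A minor bookkeeping point is tie-breaking in \algoref{alg:one_path_wt_waterfilling} when several links share the minimum fair share: ties affect only the order in which links are processed, not the rates returned, since the weighted max-min fair allocation is unique; they are thus harmless. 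Everything else is a direct transcription of the definitions, parallel to the proof of \theoremref{T:bandwith_bottleneck}.
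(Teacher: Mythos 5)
Your proof is correct and follows essentially the same route as the paper's: construct the weights $\theta^p_k = f^p_k/f_k$, observe that the weighted rate of every subdemand collapses to $f_k$ so that conditions (i)--(ii) of the bandwidth-bottleneck definition yield exactly the bottleneck characterization of the weighted waterfilling solution, and conclude by uniqueness. The additional care you take with zero-rate subdemands, the virtual per-demand edge, and tie-breaking addresses details the paper's proof leaves implicit, but does not change the argument.
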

\begin{proof}
	Assume that $\mathbf{f}$ is bandwidth-bottlenecked and we use these flows (and subflows) to construct weights $\theta^{p}_{k} = {f^p_{k}}/{f_k}$. Let us denote by $\tilde{\mathbf{f}}$ the solution of solving the weighted waterfilling with those weights. We want to show that $\mathbf{f} = \tilde{\mathbf{f}}$.
	Notice that the following must hold for a subflow $f^{p}_{k}$ bottlenecked at link $l$:
	\begin{equation}
		\frac{f^{p}_{k}}{\theta^{p}_{k}} = \frac{f^{p}_{k}}{f^{p}_{k}} f_k = f_k \geq f_j = f_j \frac{f^{\hat{p}}_{j}}{f^{\hat{p}}_{j}} = \frac{f^{\hat{p}}_{j}}{\theta^{\hat{p}}_{j}},
	\end{equation}
where the inequality follows from the definition of bandwidth bottleneck (prior to Theorem~\ref{T:bandwith_bottleneck}) and the equality after that one assumes that $f^{\hat{p}}_{j}$ is a non-zero subflow also going through link $l$.
Hence, we have established that for every $f^{p}_{k}$ bottlenecked at link $l$, it must hold that $\frac{f^{p}_{k}}{\theta^{p}_{k}} \geq \frac{f^{\hat{p}}_{j}}{\theta^{\hat{p}}_{j}}$ for all non-zero flows $f^{\hat{p}}_{j}$ going through that link. This implies that $\mathbf{f}$ is a solution to the weighted waterfilling problem. However, we denoted by $\tilde{\mathbf{f}}$ the solution to this problem. From uniqueness of the weighted waterfilling solution, it must be that $\mathbf{f} = \tilde{\mathbf{f}}$.
\end{proof}

\section{Equi-depth binner formulation}
\label{ss:eb_formulation}
In this section, we present two variants of Equi-depth binner~---~one where boundaries are elastic but the demands get allocation from exactly one bin, and the other where the bin boundaries are fixed but the demands are allowed to get allocation from multiple bins.

\parab{Equi-depth binner with elastic bin boundaries.} In this variant, we use the output of {\sf AdaptiveWaterfiller} to sort demands by their estimated max-min fair rates. We then divide the demands from smallest to largest into equal-sized bins ($D_b$), each assigned to one specific bin. The order of bins is maintained using bin boundaries $\ell_{b}$, which are determined by the optimization. During the allocation process, we prioritize bins with smaller demands, following a similar linearization technique described in~\secref{s:oneshotopts}:

\vspace{-0.1in}
{\small
	\begin{align}
		\label{eq:equibinning}
		& \mbox{\sf ElasticBoundaryEquiBinning}(\mathcal{E}, \mathcal{D}, \mathcal{P})~\triangleq\\
		&&&	\hspace{-45mm}\arg\max_{\mathbf{f},\textcolor{maroon}{\mathbf{\ell}}} \textcolor{maroon}{\sum_{\mbox{bins}~b}~\sum_{k \in \mathcal{D}_b}\epsilon^{b - 1} f_{k}} \nonumber \\
		&\hspace{5mm} \mbox{s.t.} \quad && \hspace{-35mm} \textcolor{maroon}{f_{k} < {\ell}_b + s_b,} &  \textcolor{maroon}{\forall b < N_\beta,\forall k \in \mathcal{D}_b}\nonumber\\
		&&& \hspace{-35mm} \textcolor{maroon}{f_{k} \ge {\ell}_{b-1},} & \textcolor{maroon}{\forall b > 1, \forall k \in \mathcal{D}_b}\nonumber \\
		&&&\hspace{-35mm} \textcolor{maroon}{{\ell}_b \ge 0,} & \textcolor{maroon}{\forall b} \nonumber\\
		&&&\hspace{-35mm} \mathbf{f} \in {\sf FeasibleAlloc}(\mathcal{E}, \mathcal{D}, \mathcal{P}).\nonumber 
	\end{align}
}
where $N_\beta$ is the number of bins, $D_b$ denotes the set of demands in bin $b$, $\ell_b$ shows the boundary of bin $b$ (determined by the optimization), and $s_b$ is the slack in quantization boundary of bin $b$ (input to the optimization).

\parab{Equi-depth binner with multi-bin allocations.} In this variant, we use the output of {\sf AdaptiveWaterfiller} to compute the bin boundaries $\ell_b$ that result in roughly the same number of demands per bin. Then, we reuse the {\sf Geometric Binner}'s formulation from~\eqnref{eq:new_geobinning} but with the estimated bin boundaries instead of geometrically increasing sizes:

\vspace{-0.1in}
{\small
	\begin{align}
		\label{eq:equibinning_v2}
		& \mbox{\sf MultiBinEquiBinning}(\mathcal{E}, \mathcal{D}, \mathcal{P}, \textcolor{maroon}{\{\ell_b\}})~\triangleq & \\ 
		&&& \hspace{-40mm} \arg\max_{\mathbf{f}} \sum_{k \in \mathcal{D}}~\textcolor{maroon}{\sum_{\mbox{bins} ~ b} \epsilon^{b-1} f_{kb}} \nonumber \\
		& \hspace{5mm} \mbox{s.t.} \quad && \hspace{-35mm} \textcolor{maroon}{f_{k} = \sum_{\mbox{bins} ~ b} f_{kb},} & \textcolor{maroon}{\forall k \in \mathcal{D}} \nonumber \\
		&&& \hspace{-35mm} \textcolor{maroon}{f_{k1} \leq \ell_1,} & \textcolor{maroon}{\forall k \in \mathcal{D}} \nonumber \\
		&&& \hspace{-35mm} \textcolor{maroon}{f_{kb} \leq \ell_{b} - \ell_{b-1},} & \textcolor{maroon}{\forall b > 1,\forall k \in \mathcal{D}}\nonumber \\
		&&& \hspace{-35mm} \mathbf{f} \in {\sf FeasibleAlloc}(\mathcal{E}, \mathcal{D}, \mathcal{P}).\nonumber 
	\end{align}
}

\section{Expected Run-time Benefit of GB and EB}
\label{ss:expected-run-time}
Solving a linear program (with \#constraints = $\Omega$(\#variables) -- holds for resource allocation problems such as TE and CS) has worst-case time complexity of $O(\nu^{a})$ where $a \approx 2.373$~\cite{solvingLP} and $\nu$ is the number of variables in the optimization.

One can argue that simply solving a single optimization (as in the case of {\sf EB} and {\sf GB}) does not guarantee lower run-times compared to solving multiple optimizations (\eg SWAN). Solving multiple optimizations adds a multiplicative term to the time complexity. However, a naive single-shot optimization may use too many additional variables and ends up being slower. In this part, we theoretically analyze the expected run-time benefit of {\sf GB} and {\sf EB}. We show that the speed up of \sysname's optimization-based methods is due to their carefully designed approaches that only require a small number of additional variables compared to each optimization in the multi-shot variant.

\parab{SWAN} uses 1 variable per demand per path to demonstrate the allocation from an specific path ($\nu = PK$ where $P$ is the number of paths and $K$ is the number of demands). Therefore, if SWAN needs $N^{S}_\beta$ iterations, its worst-case complexity is $O(N^S_\beta P^{a} K^a)$.

\parab{GB} needs 1 extra variable per demand per bin to measure the allocation from each bin ($\nu = (N^G_\beta + P)K$ -- note that the number of bins in {\sf GB} ($N^G_\beta$) is the same as the number of iterations in SWAN ($N^S_\beta$)). This leads to a worst-case complexity of $O((N^G_\beta+ P)^a K^a)$. Compared to SWAN, the run-time saving of {\sf GB} is proportional to $N_\beta\big[1 + \frac{N_\beta}{P}\big]^{-a}$. For $P=16$ paths and $N_\beta = 8$ bins, we expect {\sf GB} to be $\sim 3.06 \times$ faster.

Our analysis of {\sf GB} is only valid when the number of bins is small. When there are many bins, {\sf GB}'s allocation may have too many zero variables. For example, if we have 128 bins, and a demand only needs allocation from the first bin, the remaining 127 bin variables will be zero. Existing solvers such as Gurobi~\cite{gurobi} exploit this sparsity to improve their runtime.

\parab{EB}\footnote{We analyze the variant with elastic bin boundaries. The other multi-bin variant has the same complexity as {\sf GB} since it uses the same optimization.} only uses 1 extra variable per bin to show the bin boundaries ($\nu = N^E_\beta + PK $). Therefore, its worst-case complexity is $O((N^E_\beta + PK)^a)$. Compared to SWAN, {\sf EB} has a run-time saving proportional to ${N^{S}_\beta}\big[1 + \frac{N^{E}_\beta}{PK}\big]^{-a}$. Since the number of demands is usually substantially larger than the number of bins $N^E_\beta$, we can approximate the run-time saving by ${N^{S}_\beta}$. For $N^S_\beta = 8$, we expect {\sf EB} to be $\sim 8 \times$ faster.

Although theoretical analysis can help us understand the speedup of \sysname, solvers can typically perform better than their worst-case and can take advantage of the structure of the optimization (\eg sparsity). In \secref{sec:eval}, we empirically show that the speedup of {\sf GB} compared to SWAN is larger than what our theoretical analysis predicted. In fact, it is even faster than {\sf EB}. Similarly, we found that the speed up of {\sf EB} compared to SWAN is only slightly less than the theoretical analysis.

\section{Extended Evaluation}

In this section, we provide both additional experiment details as well as an extended evaluation of \sysname.

\subsection{Tuning benchmarks for performance}
\label{sec:benchmark-tuning}

\begin{itemize}[noitemsep,leftmargin=*]
\item We warm start SWAN's and Danna's optimizations for iterations $> 1$ to reduce the run-times. We further tune Gurobi's solver parameters using $5\%$ of the traffic demands to achieve the best run-time.

\item Our Danna's implementation is that of Figure 2 in~\cite{danna-practical-max-min} (\ie binary and linear search): we found this algorithm is faster than the other proposed by the same work (\ie binary then linear search in Figure 4) as it can find and eliminate more demand-constrained flows. 

\item Our modified K-waterfilling algorithm uses K=1 which is the fastest and most parallelizable version of the K-waterfilling~\cite{s-PERC-Lavanya}.

\item For cluster scheduling (CS), we changed {\sysname}'s implementation to use CVXPY~\cite{cvxpy} to match Gavel's implementation and ensure fair run-time comparisons.

\end{itemize}

%Since Gavel is using CVXPY~\cite{cvxpy}, we also modified the implementation of \sysname to use CVXPY for creating the optimization model in the CS usecase to make the run-times comparable.

% For Danna, we implemented the algorithm described in Figure 2 in~\cite{danna-practical-max-min} (\ie binary and linear search). We found that this algorithm is faster than the other one (\ie binary then linear search in Figure 4) because it is able to find and eliminate more demand-constrained flows. For the modified K-waterfilling, we use K=1 which is the fastest and most parallelized version of the K-waterfilling~\cite{s-PERC-Lavanya}. We also provide limited evaluations on the max-min fair heuristic used in B4~\cite{B4}.

\subsection{Evaluation on CS}
\label{sec::CS_eval_extended}
%\pooria{Working on this part to address reviewer's comment regarding evaluation set up.}

\begin{table}
	\footnotesize
	\centering
	\begin{tabular}{llcc}
		\hline
		\rowcolor{rowColorOne}
		Model & Task & App/Dataset & \#Batch sizes \\
		\hline
		\hline
		\hline
		\rowcolor{rowColorTwo}
		ResNet-18 & Image  & CIFAR-10~\cite{CIFAR10Dataset} & 16, 32, 64, \\
		\rowcolor{rowColorTwo}
		\cite{ResNet,Pytorch-CIFAR10} & Classification &  & 128, 256\\ 
		
		\rowcolor{rowColorOne}
		ResNet-50 & Image & ImageNet~\cite{ImageNet} & 16, 32, \\
		\rowcolor{rowColorOne}
		\cite{ResNet,Pytorch-ImageNet} & Classification & & 64, 128\\
		
		\rowcolor{rowColorTwo}
		CycleGAN & Image-to-Image & monet2photo~\cite{CycleGAN} & 1  \\
		\rowcolor{rowColorTwo}
		\cite{CycleGAN, Pytorch-CycleGAN} & Translation & & \\ 
		
		\rowcolor{rowColorOne}
		LSTM & Language & Wikitext2~\cite{Wikitext2} & 5, 10, 20, \\
		\rowcolor{rowColorOne}
		\cite{LSTM} & Modeling & & 40, 80 \\

		\rowcolor{rowColorTwo}
		Transformer & Language & Multi30k~\cite{Multi30k} & 16, 32, 64, \\
		\rowcolor{rowColorTwo}
		\cite{Attention-Transformer, Pytorch-Transformer} & Translation & (de-en) & 128, 256 \\
				
		\rowcolor{rowColorOne}
		A3C~\cite{RL-A3C,Pytorch-RL-A3C} & Deep RL & Pong & 4 \\ 
		
		\rowcolor{rowColorTwo}
		 &  & & 512, 1024, \\
		\rowcolor{rowColorTwo}
		Autoencoder & Recommendation & ML-20M~\cite{MV-20M} & 2048, 4096, \\ 
		\rowcolor{rowColorTwo}
		\cite{Autoencoder} & & & 8192 \\
		
		\hline                           
	\end{tabular}
	\caption{Type of jobs used for the evaluation of \sysname. We use Gavel's job generator~\cite{Gavel-Deepak}.}
	\label{tab:jobs:summary}
\end{table}

\parab{Experiment details.} We consider 3 different types of GPUs (V100, P100, K80) and base the number of GPUs on the number of jobs. We randomly select jobs from a set of 26 different job types available in Gavel (see~\tabref{tab:jobs:summary}). Each job has a specific priority and requires a certain number of workers. We determine the number of workers by randomly sampling from the distribution obtained from the Microsoft public trace~\cite{microsoft-philly} -- $70\%$ of jobs need a single worker, 25\% need between 2 and 4, and the remaining 5\% need 8. We also sample job priorities uniformly from $\{1, 2, 4, 8\}$. To compute the weights in the weighted max-min fair objective, we use the job throughput estimations from Gavel.

\parab{Results.}
In~\figref{fig:cluster-scheduling:all-samples:speedup-fairness}, we compare \sysname to two variants of Gavel (with and without waterfilling) in 40 different scenarios. The number of competing jobs in each scenario is selected from the set $\{1024, 2048, 4096, 8192\}$ -- 10 scenarios from each. Following Gavel, we set the number of each type of GPU to one-fourth of the total number of jobs and generate each job using the methodology explained above.

The results are in line with our observations from WAN-TE; (a) Gavel with waterfilling is an optimal max-min fair algorithm, but it is more than $20\times$ slower than other methods, (b) Gavel is $\sim 100\times$ faster than the Gavel w waterfilling but at the cost of $\sim 40\%$ drop in fairness and $\sim 15\%$ drop in efficiency, and (c) {\sysname}'s algorithms empirically Pareto-dominate both Gavel and Gavel w waterfilling.

We also observe that {\sf GB} is slower than the rest of the methods except (Gavel with waterfilling), but it provides worst-case per-demand fairness guarantees.

\begin{figure}
	\centering
	\subfigure[Fairness vs Speed up.]{\includegraphics[width=1.0\linewidth]{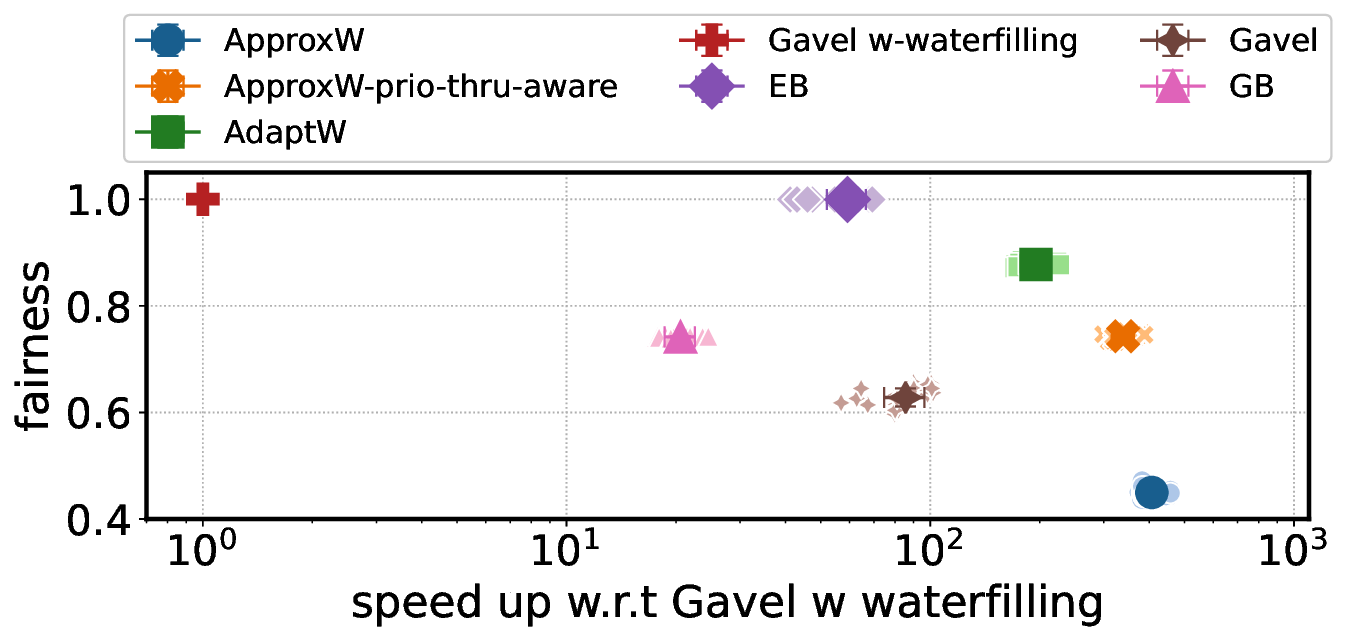}}
	\subfigure[Efficiency (total effective rate)]{\includegraphics[width=1.0\linewidth]{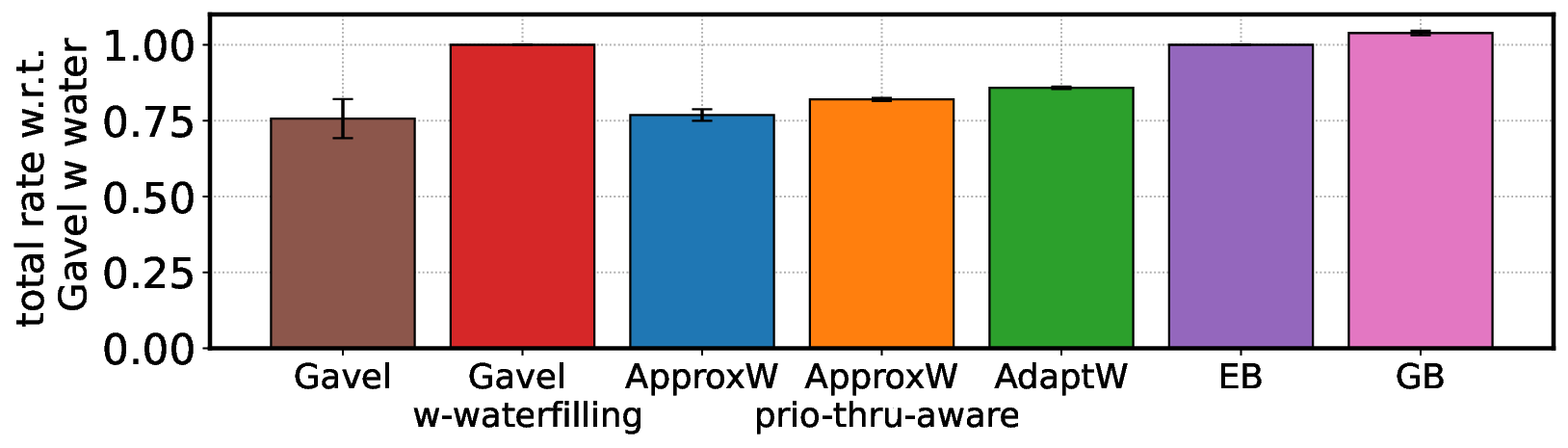}}
	\caption{\textbf{\sysname empirically Pareto-dominates Gavel}. These results are on 40 different scenarios with varying number of jobs and GPUs.}
	\label{fig:cluster-scheduling:all-samples:speedup-fairness}
\end{figure}

%\subsection{Extended Ablation Study}
%\label{sec:extended-ablation}
\begin{figure}[t]
	\centering
	\subfigure[Impact of the number of bins on the fairness]{\includegraphics[width=0.8\linewidth]{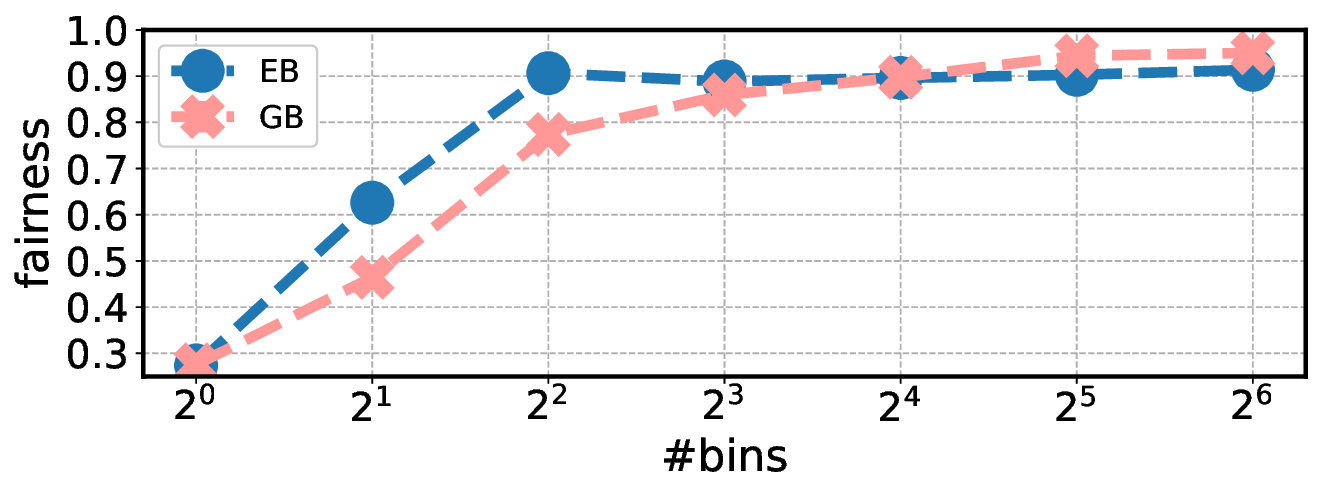} \label{fig:bins:poisson:fairness}} \vspace{-2mm} \\
	\subfigure[Impact of the number of bins on the efficiency]{\includegraphics[width=0.8\linewidth]{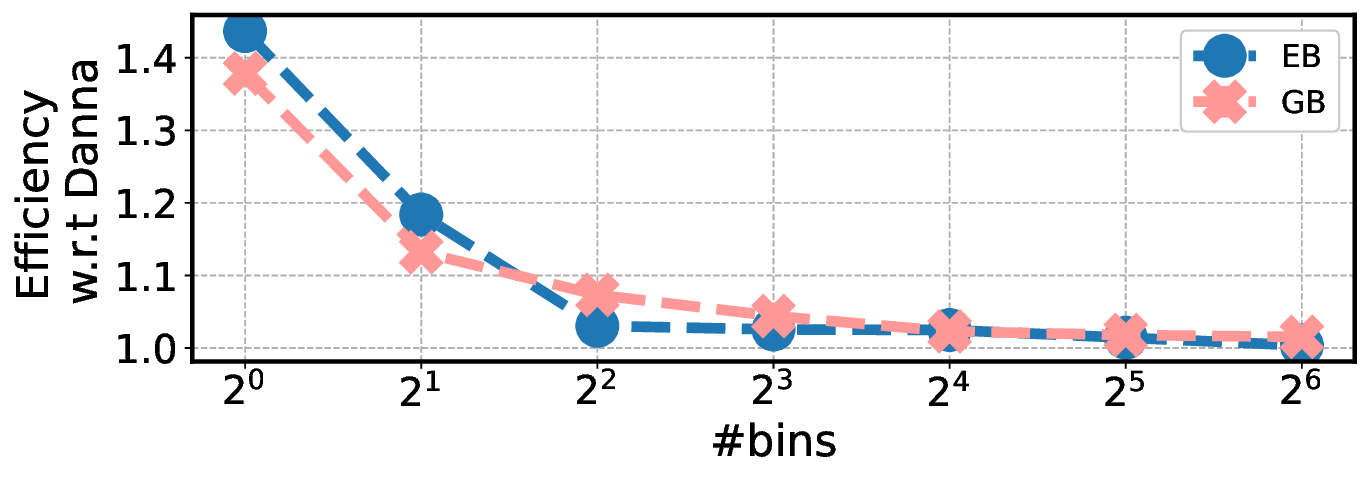} \label{fig:bins:poisson:efficiency}}
	\caption{\textbf{Impact of number of bins on fairness and efficiency of {\sf GB} and {\sf EB}}. These results are on the Cogentco topology and Poisson traffic distribution (scale factor = 64). \label{fig:bins:poisson}}
\end{figure}

\begin{figure}[t]
	\centering
	\subfigure{\includegraphics[width=0.8\linewidth]{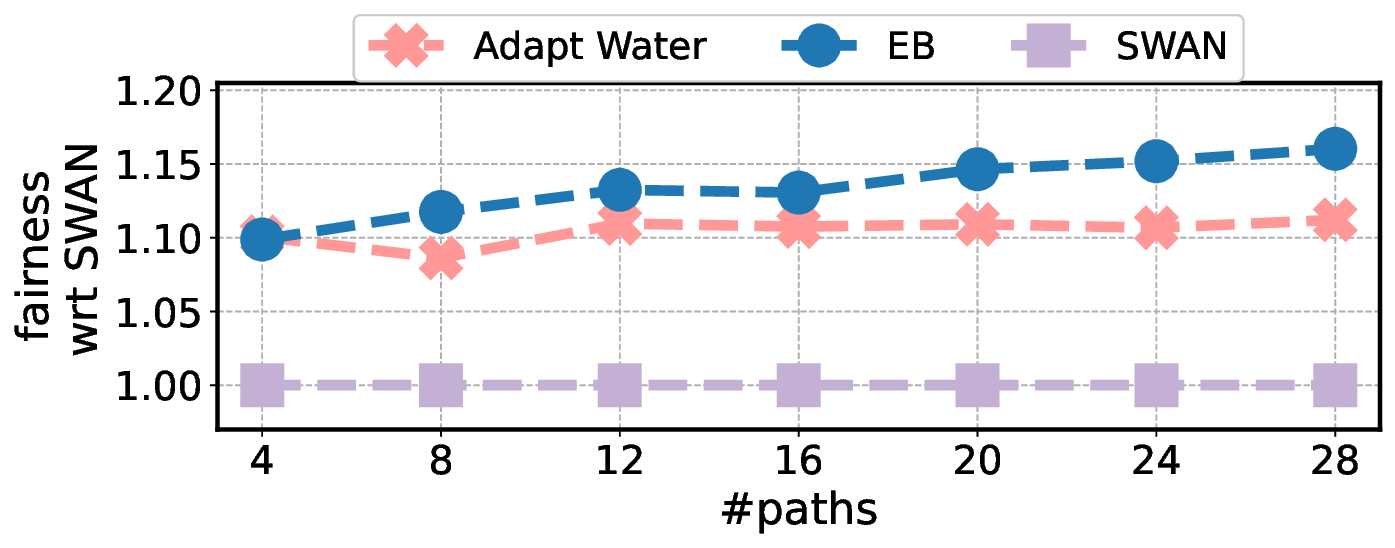}}
	\vspace{-5mm} \\
	\subfigure{\includegraphics[width=0.8\linewidth]{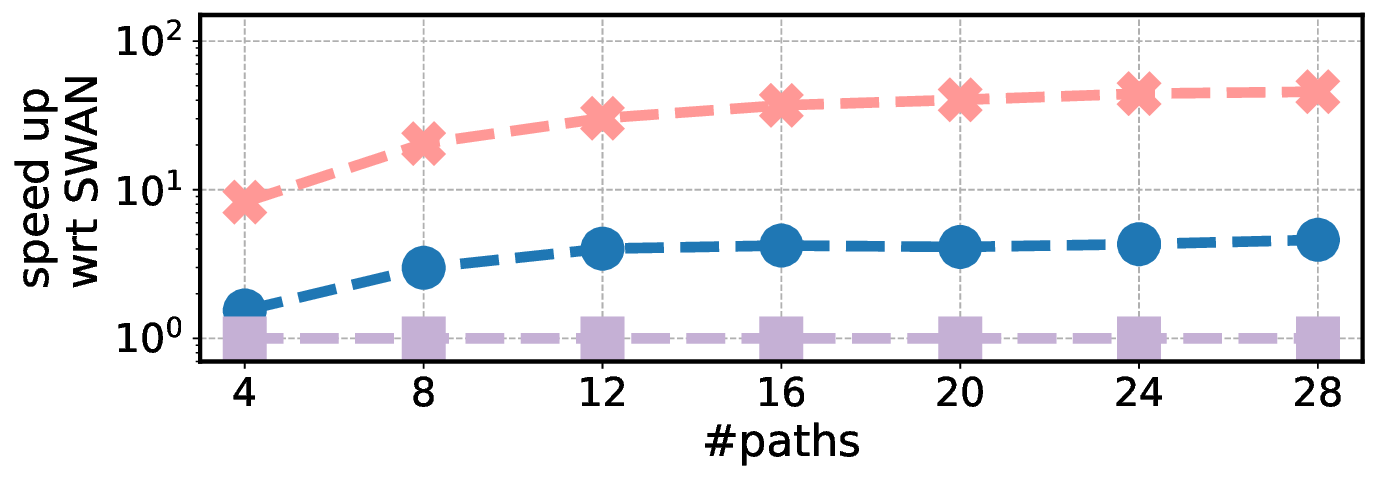}}
	\caption{\textbf{Impact of number of paths in TE.} These results are on the Cogentco topology and Poisson Traffic distribution (scale factor = 64).}
	\label{fig:ablation:paths:poisson}
\end{figure}

\begin{figure}
	\centering
	\includegraphics[width=0.8\linewidth]{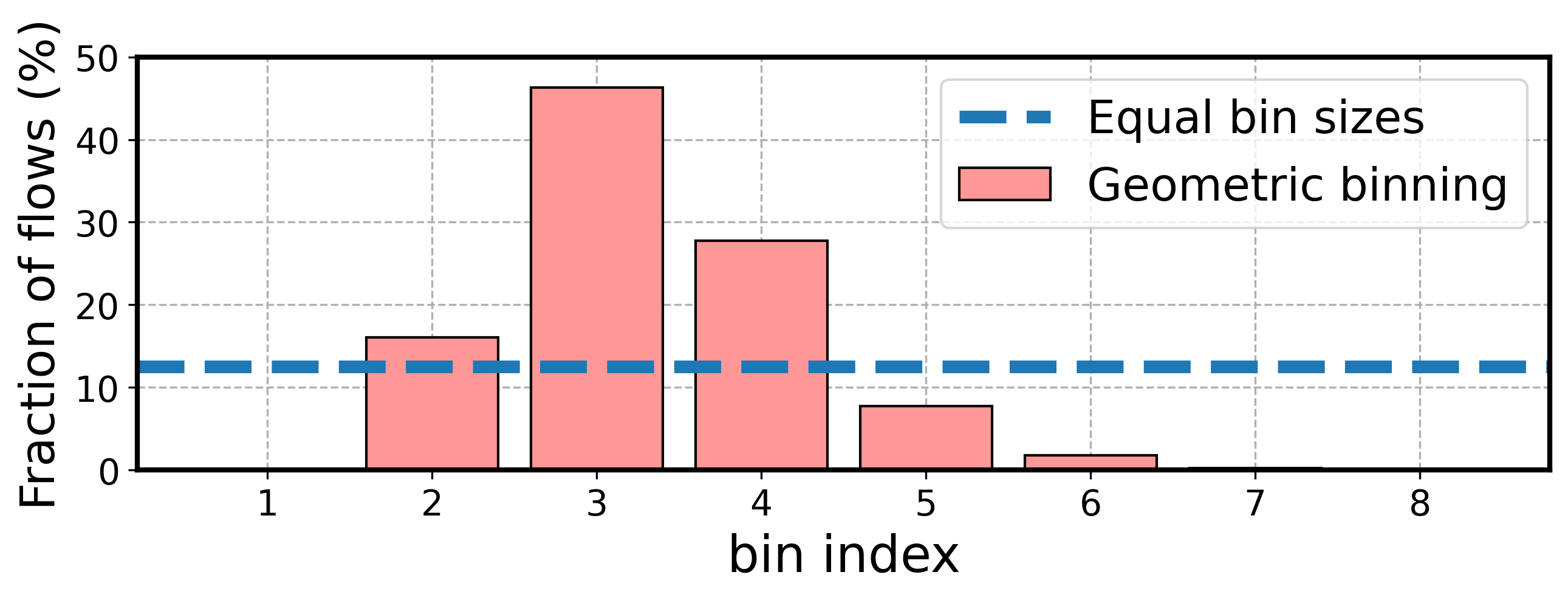}
	\caption{\textbf{Example of imbalanced bins in {\sf GB} for the TE usecase.} }
	\label{fig:imbalanced-bins:app}
\end{figure}

\subsection{Evaluation of POP}
\label{sec:POP-eval-extended}
POP~\cite{Narayanan-POP} is a decomposition technique used to scale granular resource allocations. It involves dividing demands uniformly at random into partitions, assigning an equal share of each resource to each partition, and then, solving the resource allocation for each partition in parallel. This procedure is called \textit{resource splitting}.

For large demands, POP incorporates an additional method called \textit{client splitting}, where demands are divided among multiple partitions.~\cite{Narayanan-POP} recommends using client splitting for Poission traffic distribution, as this can improve resource utilization. However, it is unnecessary for the other distributions such as Gravity.

POP focuses on objectives such as maximizing utilization or maximizing the minimum allocation, a different objective than max-min fairness. To assess the impact of POP on max-min fairness, we adapt POP to work with both \sysname and SWAN; We use the same procedure for partitioning the problem (resource splitting and client splitting as needed). We then allocate resources in each partition using a max-min fair solver such as SWAN or \sysname.

\parab{Theoretical Guarantee.} POP results in losing all the theoretical guarantees ($\alpha$-approximation for \sysname and SWAN). In fact,~\cite{MetaOpt} shows a substantial worst-case optimality gap for POP. However, \sysname is faster than SWAN because of its single optimization reformulation, while maintaining the same theoretical guarantees.

\parab{Empirical Evaluation.}
In~\figref{fig:decomp:pop:app}, we evaluate the performance of POP when applied to \sysname (specifically, GB) and SWAN for two different topologies, two load factors and two traffic distributions.

We find that the performance of POP depends on the traffic distribution whereas \sysname maintains the same level of fairness in all cases while being up to $15\times$ faster than SWAN. For distributions with granular demands such as Gravity, POP speeds up both SWAN and \sysname with only a minor drop in fairness. However, using POP causes significant fairness degradation -- more than 10\% to match the run-time of \sysname -- for traffic distributions such as Poisson that are not granular and require client splitting to avoid resource under-utilization. This unfairness is a result of per-partition max-min fairness in POP, which differs from global max-min fairness in \sysname or SWAN.

We also observe that applying POP to \sysname results in the same fairness level as SWAN for the same number of partitions. In each partition, \sysname is guaranteed to produce an allocation similar to SWAN (see~\secref{s:oneshotopts}), and therefore, the aggregated allocation is guaranteed to be the same. Since \sysname is faster in each partition, the overall run-time of \sysname + POP is substantially lower.

\begin{figure*}[t]
	\centering
	\subfigure[Topo: Cogentco, Traffic: Poisson, Scale factor=16, w client splitting]{\includegraphics[width=0.45\linewidth]{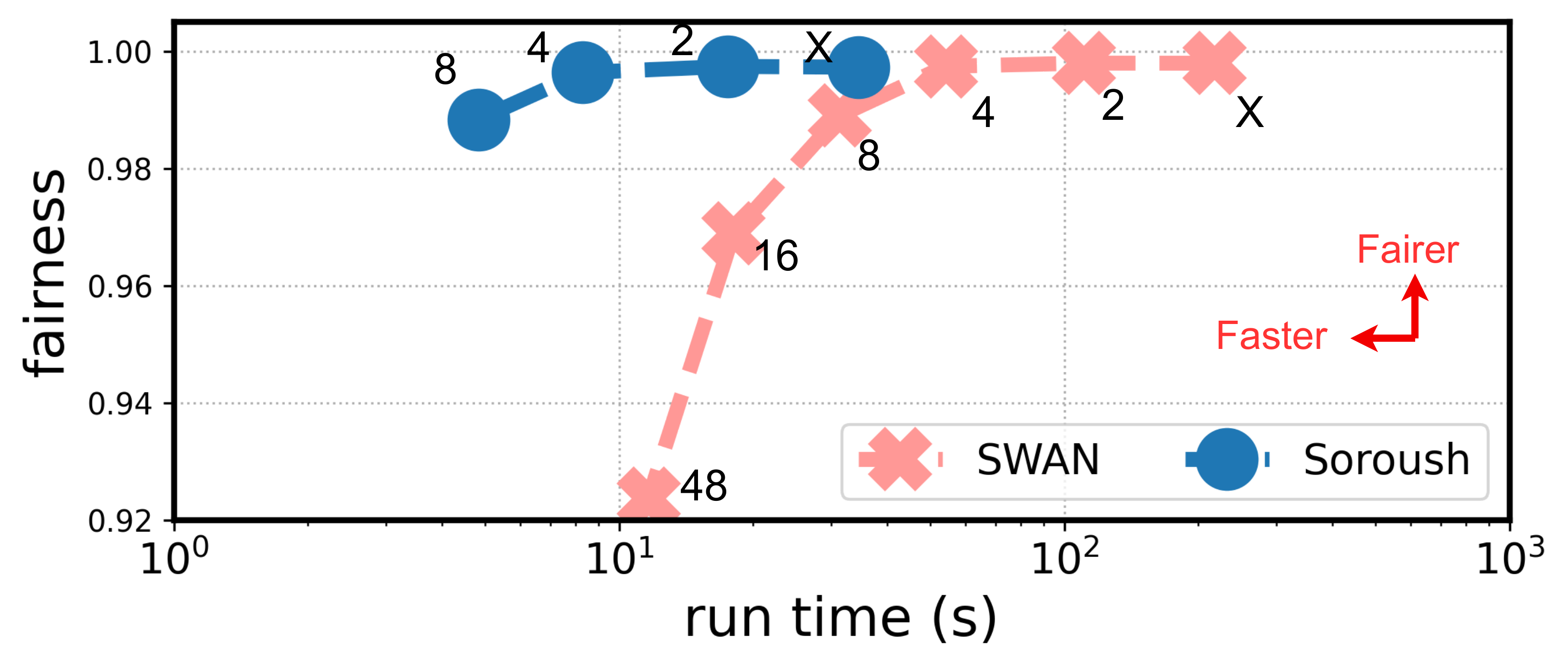}} \hfill
	\subfigure[Topo: Cogentco, Traffic: Poisson, Scale factor=64, w client splitting]{\includegraphics[width=0.45\linewidth]{figs/traffic_engineering/pop/POP-Cogentco-poisson-64.pdf}}
	\vspace{-1mm} \\
	\subfigure[Topo: Cogentco, Traffic: Gravity, Scale factor=16, no client splitting]{\includegraphics[width=0.45\linewidth]{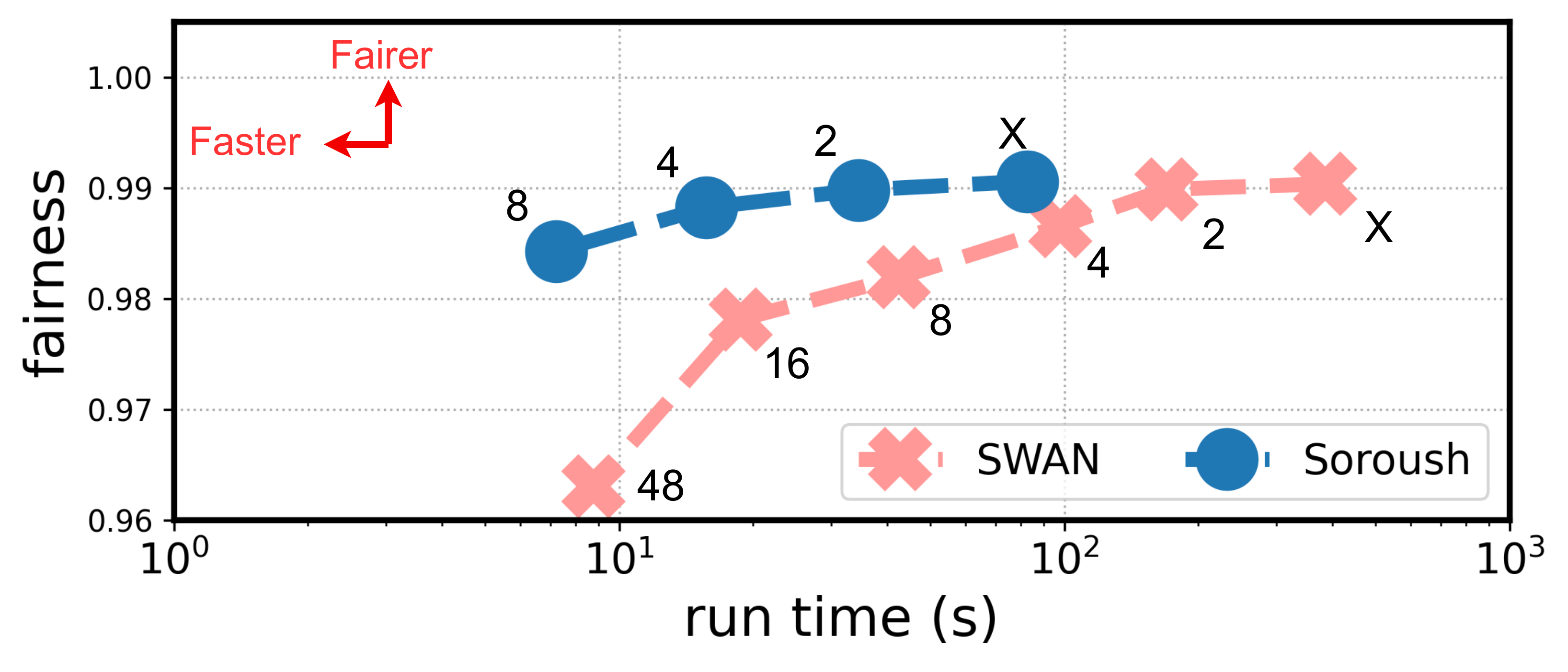}} \hfill
	\subfigure[Topo: Cogentco, Traffic: Gravity, Scale factor=64, no client splitting]{\includegraphics[width=0.45\linewidth]{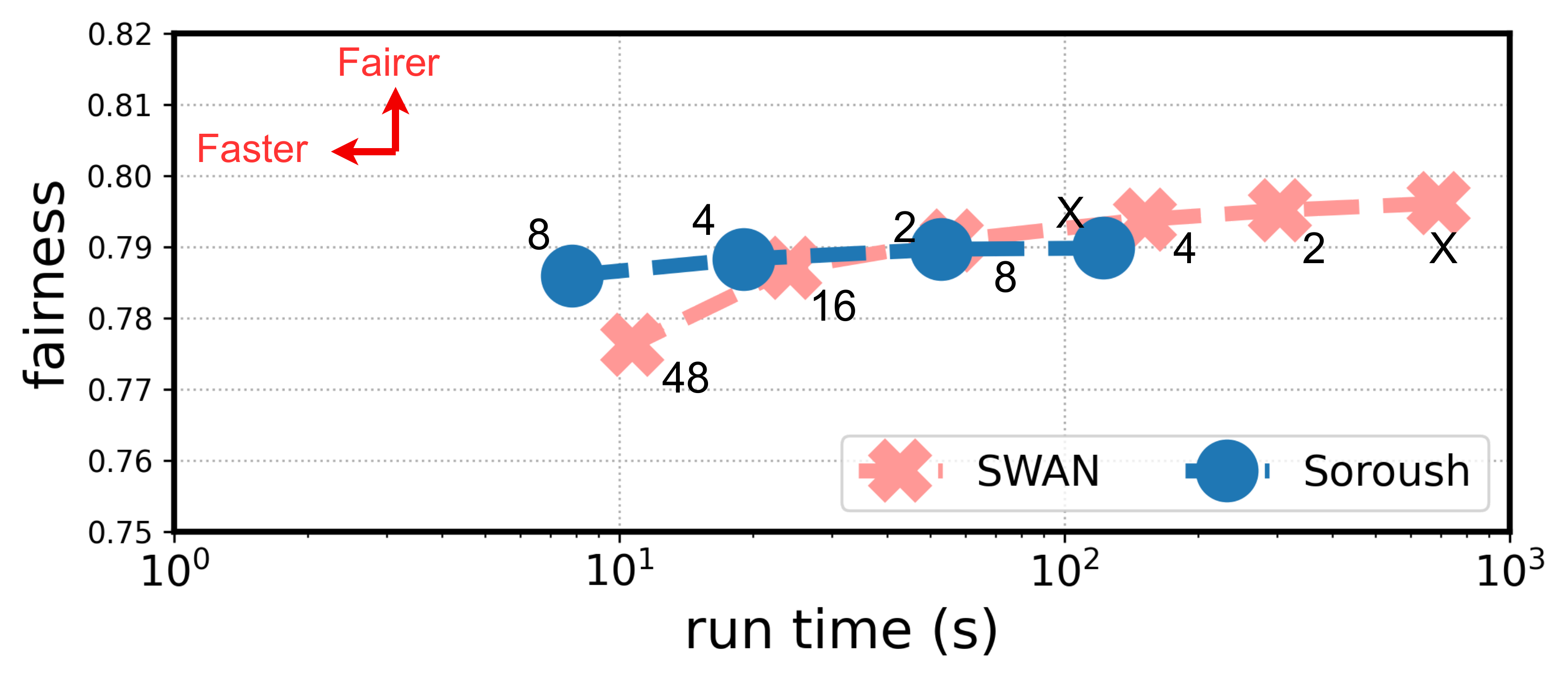}}
	\vspace{-1mm} \\
	\subfigure[Topo: GtsCe, Traffic: Poisson, Scale factor=16, with client splitting]{\includegraphics[width=0.45\linewidth]{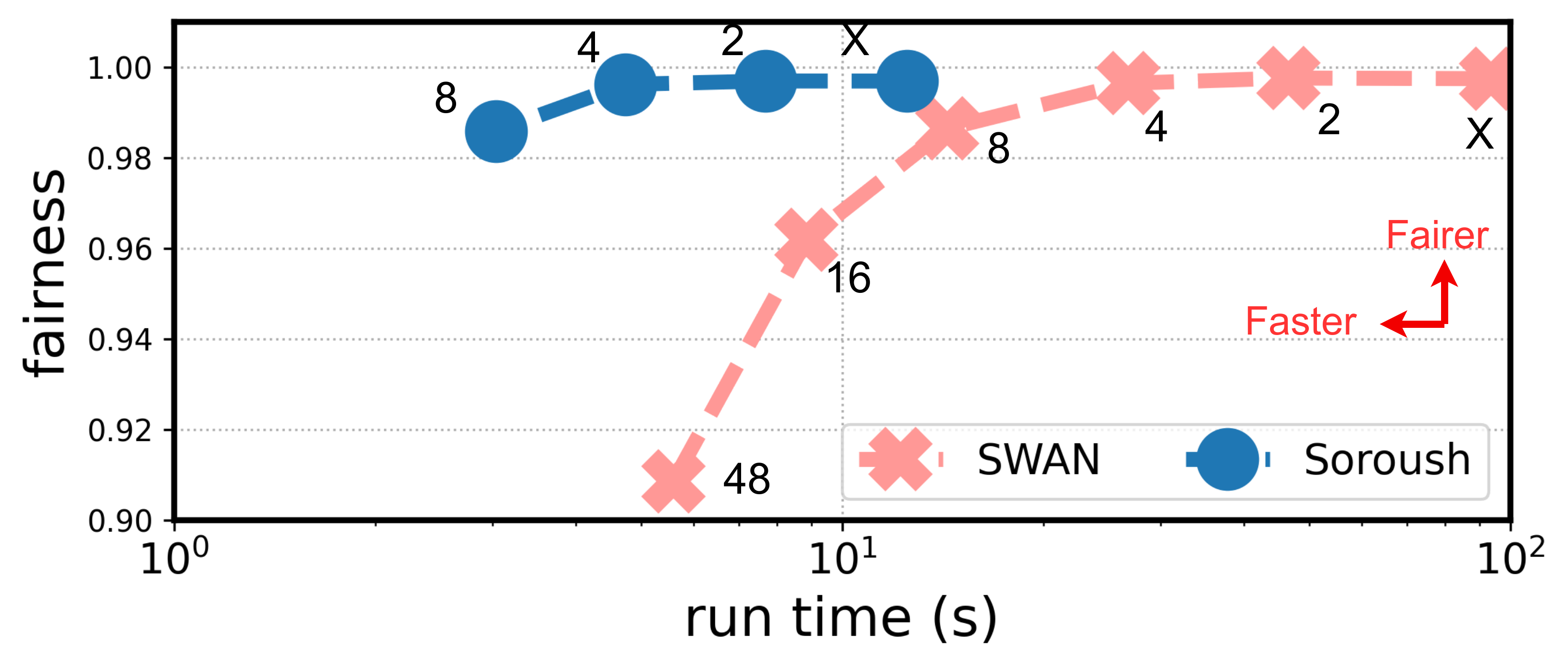}} \hfill
	\subfigure[Topo: GtsCe, Traffic: Poisson, Scale factor=64, with client splitting]{\includegraphics[width=0.45\linewidth]{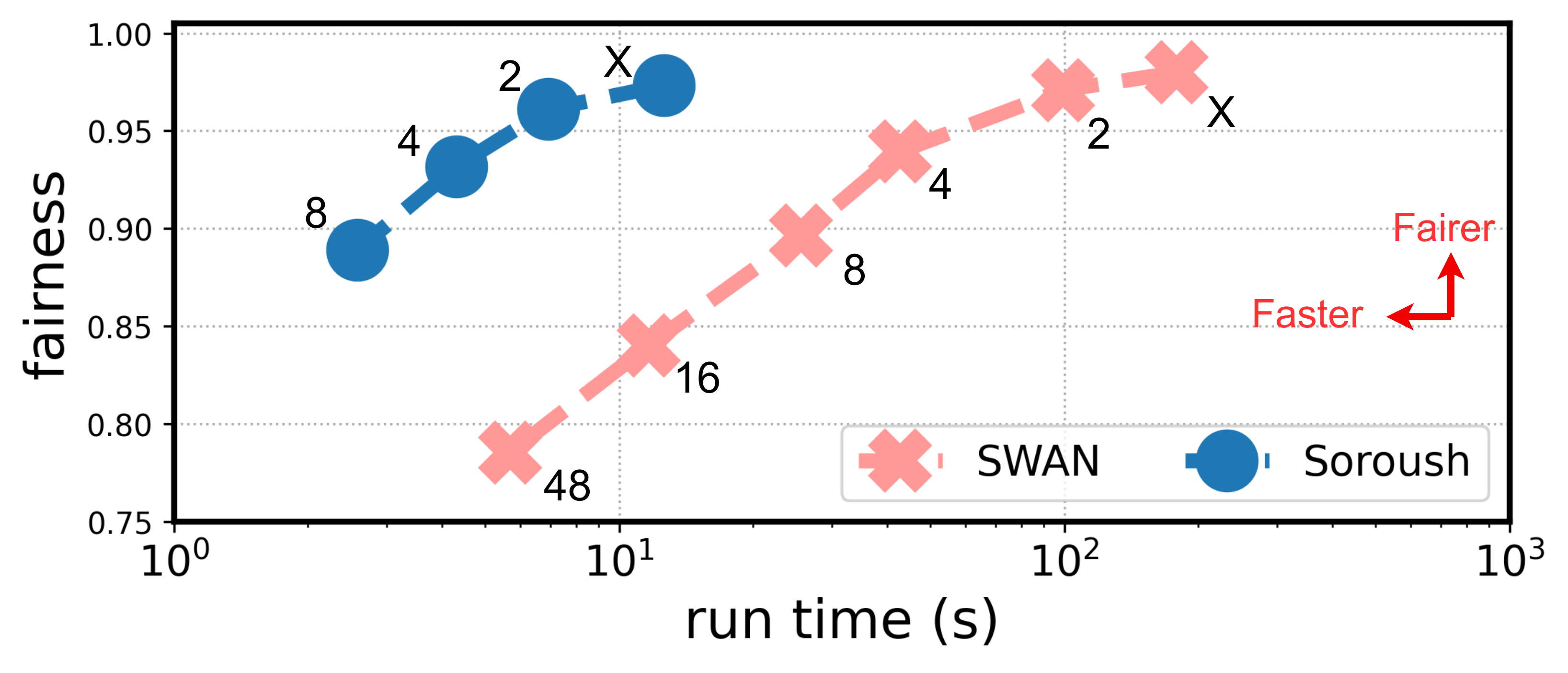}}
	\vspace{-1mm} \\
	\subfigure[Topo: GtsCe, Traffic: Gravity, Scale factor=16, no client splitting]{\includegraphics[width=0.45\linewidth]{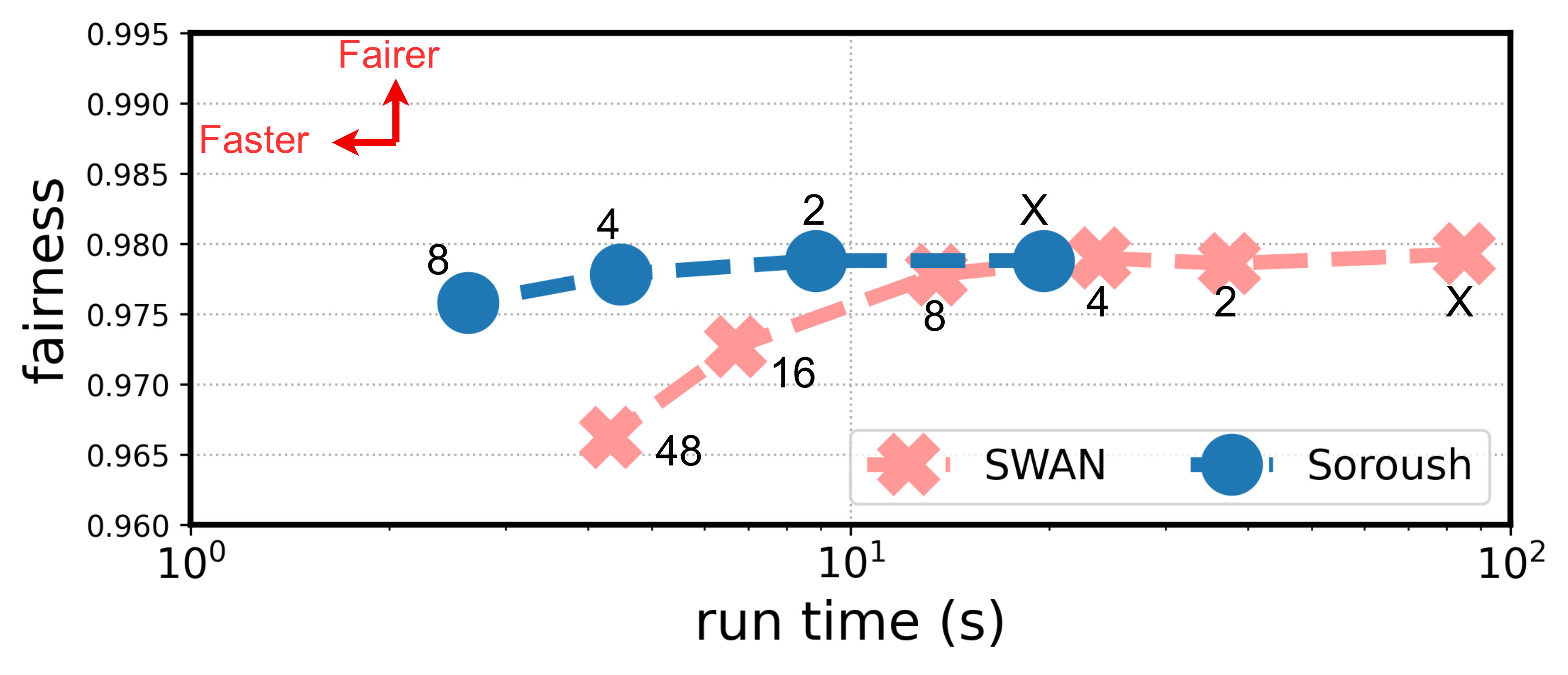}} \hfill
	\subfigure[Topo: GtsCe, Traffic: Gravity, Scale factor=64, no client splitting]{\includegraphics[width=0.45\linewidth]{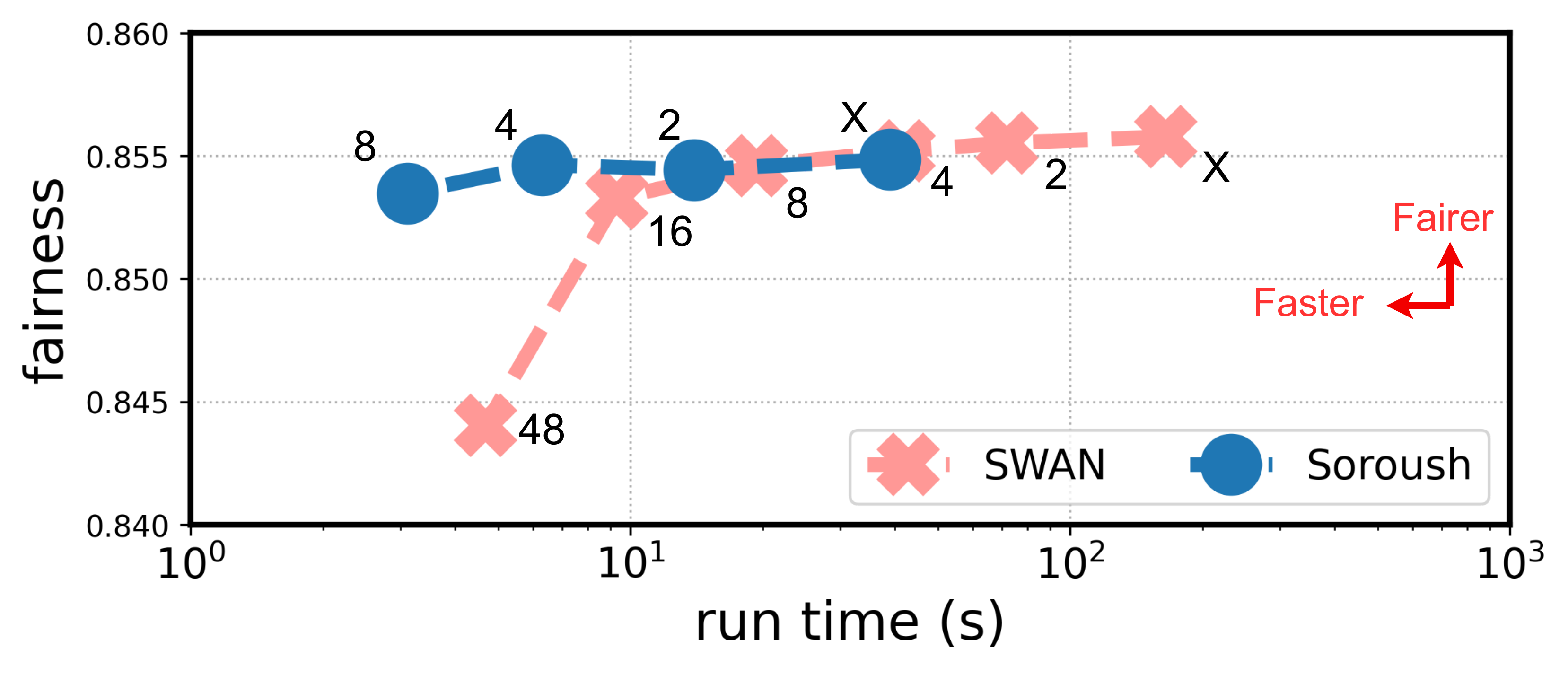}}
	%	\vspace{-3mm} \\
	\caption{\textbf{Impact of POP~\cite{Narayanan-POP}.} POP is not designed for max-min fair allocation and can cause drop in fairness depending on the traffic distribution (both on \sysname and SWAN). In contrast, \sysname achieves lower runtime compared to SWAN while maintaining the same level of fairness and theoretical guarantees. The figure reports the average over 3 randomly generated demands. "X" indicates that POP is not used, and the "Numbers" represent the number of POP partitions.}
	\label{fig:decomp:pop:app}
\end{figure*}

\cut{
\RestyleAlgo{ruled}
\begin{algorithm}[t]
	\DontPrintSemicolon
	\LinesNumbered
	\caption{The waterfilling algorithm to compute sub-flow level max-min fairness.}
	\label{Alg:waterfilling}
	\KwIn{$\routingmatrix_{M \times K}$: sub-flow lvl routing matrix where $\routingmatrix[m,k] = 1$ if sub-flow $k$ uses link $m$, otherwise $=0$.}
	\KwIn{$\setactiveflows_{K \times 1}$: set of (active) sub-flows}
	\KwIn{$\capacity_{M \times 1}$: link capacities.}
	\KwOut{$\rate_{K \times 1}$: max-min rate of each sub-flow.}
	
	$\rate_{K \times 1} \gets 0$ \algcomment{initial rate for each sub-flow.} \;
%	$S \gets K$ \algcomment{number of active sub-flows.}\;
	\While{$|\setactiveflows| > 0$}{
		$N \gets \routingmatrix \mathbbm{1}_{|\setactiveflows| \times 1}$ \algcomment{number of sub-flows per link.}\;
		$\fairshare \gets \frac{\capacity}{N}$ \algcomment{fairshare of each link.}\;
		$l \gets \mathop{\mathrm{arg\,min}_e} \fairshare_e$ \algcomment{link with minimum fair share}\;
		$\setflows_l \gets \{ f: \routingmatrix[l, f] = 1 \}$ \algcomment{sub-flows on link $l$}\;
		\For{$f \in \setflows_l$}{ 
			$\rate_f \gets \fairshare$ \algcomment{Fix rate of all sub-flows in $\setflows_l$}\;
		}
		$\capacity \gets \capacity - \routingmatrix \rate[\setactiveflows]$ \algcomment{update link capacities}\;
		$\capacity \gets \capacity[\setminus l]$ \algcomment{remove link}\;
		$\routingmatrix \gets \routingmatrix[\setminus l, \setminus \setflows_l]$ \algcomment{remove link and all sub-flows}\;
		$\setactiveflows \gets \setactiveflows \setminus \setflows_l$	\algcomment{update set of active sub-flows}\;
	}
	\Return{$\rate$}
\end{algorithm}

\RestyleAlgo{ruled}
\begin{algorithm}[t]
	\DontPrintSemicolon
	\LinesNumbered
	\caption{The waterfilling algorithm to compute weighted sub-flow level max-min fairness.}
	\label{Alg:weighted-waterfilling}
	\KwIn{$\weightedroutingmatrix_{M \times K}$: sub-flow weights where $\weightedroutingmatrix[m,k]$ shows the weight of sub-flow $k$ on link $m$.}
	\KwIn{$\routingmatrix_{M \times K}$: routing matrix where where $\routingmatrix[m,k] = 1$ if sub-flow $k$ uses link $m$, otherwise $=0$}
	\KwIn{$\setactiveflows$: set of (active) sub-flows}
	\KwIn{$\capacity_{M \times 1}$: link capacities.}
	\KwOut{$\rate_{K \times 1}$: max-min rate of each sub-flow.}
	
	$\rate_{K \times 1} \gets 0$ \algcomment{initial rate for each sub-flow.} \;
%	$S \gets K$ \algcomment{number of active sub-flows.}\;
	\While{$|\setactiveflows| > 0$}{
		$N \gets \weightedroutingmatrix \mathbbm{1}_{|\setactiveflows| \times 1}$ \algcomment{weight of sub-flows per link.}\;
		$\fairshare \gets \frac{\capacity}{N}$ \algcomment{fairshare of each link.}\;
		$l \gets \mathop{\mathrm{arg\,min}_e} \fairshare_e$ \algcomment{link with minimum fair share}\;
		$\setflows_l \gets \{ f: \routingmatrix[l, f] = 1 \}$ \algcomment{sub-flows on link $l$}\;
		\For{$f \in \setflows_l$}{ 
			$\rate_f \gets \fairshare \weightedroutingmatrix[l, f]$ \algcomment{Fix rate of sub-flows}\;
		}
		$\capacity \gets \capacity - \routingmatrix \rate[\setactiveflows]$ \algcomment{update link capacities}\;
		$\capacity \gets \capacity[\setminus l]$ \algcomment{remove link l}\;
		$\weightedroutingmatrix \gets \weightedroutingmatrix[\setminus l, \setminus \setflows_l]$ \algcomment{remove link and sub-flows}\;
		$\setactiveflows \gets \setactiveflows \setminus \setflows_l$	\algcomment{update set of active flows}\;
	}
\Return{$\rate$}
\end{algorithm}

\RestyleAlgo{ruled}
\begin{algorithm}[t]
	\DontPrintSemicolon
	\LinesNumbered
	\caption{Sub-flow level Approx Max-Min Fair.}
	\label{Alg:approx-max-min-fair}
	\KwIn{$\weightedroutingmatrix_{M \times K}$: sub-flow weights where $\weightedroutingmatrix[m,k]$ shows the weight of sub-flow $k$ on link $m$.}
%	\KwIn{$\routingmatrix_{M \times K}$: routing matrix where where $\routingmatrix[m,k] = 1$ if sub-flow $k$ uses link $m$, otherwise $=0$}
	\KwIn{$\capacity_{M \times 1}$: link capacities.}
	\KwOut{$\rate_{K \times 1}$: max-min rate of each sub-flow.}
	
	${\rate} \gets \infty_{K \times 1}$ \algcomment{initial max-min rate set to $\infty$}\;
	$N \gets \weightedroutingmatrix \mathbbm{1}_{K \times 1}$ \algcomment{weight of sub-flows per link}\;
	$\set{L} \gets \mathop{\mathrm{arg\,sort}}\frac{\capacity}{N}$ \algcomment{sort links in ascending order}\;
	\For{$l \in \set{L}$}{
		$\setflows_l \gets \{ f: \weightedroutingmatrix[l, f] > 0 \}$ \algcomment{sub-flows on link $l$}\;
		$\hat{\rate} \gets 0_{|\setflows_l|}$ \algcomment{initial fair share for each sub-flow}\;
		\While{$\setflows_l \neq \emptyset$}{
			\tcc{compute link and sub-flow fair shares.}
			$\fairshare \gets \frac{\capacity_l}{\sum \weightedroutingmatrix[l, \setflows_l]}$ \;
			$\hat{\rate}_{\setflows_l} = \fairshare \weightedroutingmatrix[l, \setflows_l]$ \;
			\tcc{compuate set of flows requesting less than fair share}
			$\set{U} \gets \{i:i \in \setflows_l \And \rate_i < \hat{\rate}_i\}$ \;
			\uIf{$\set{U} = \emptyset$}{
				\tcc{if all the sub-flows are requesting more than fair share, set their rate and break.}
				$\rate_{\setflows_l} \gets \hat{\rate}_{\setflows_l}$ \;
				$break$ \; 
			}\Else{
				\tcc{otherwise, remove the sub-flows requesting less and recompute the fair share.}
%				$\hat{\rate}_{\set{U}} \gets \rate_{\set{U}}$ \;
				$\capacity_l \gets \capacity_l - \sum \rate_{\set{U}}$ \;
				$\setflows_l \gets \setflows_l \setminus \set{U}$\;
			}
		}
	}
	\Return{$\rate$}
\end{algorithm}

\renewcommand{\arraystretch}{1.2}
\begin{figure}[t]
	\centering
	\label{fig:example:water}
	\subfigure[Global vs. Sub-flow level max-min fair]{\includegraphics[width=1.0\linewidth]{figs/max-min-fair-algorithm-example.pdf}}
	
	\subfigure[Illustrating Adaptive Waterfilling]{
%		\begin{table}
%			\centering
			\begin{tabular}{l|cccccc}

				& \multicolumn{6}{l}{\bf \# iteration $t\longrightarrow$}\\\hline

				$\theta^1_1$ 	& $\frac{1}{2}$ & $\frac{3}{5}$ 	& $\frac{7}{11}$ 	& $\frac{15}{23}$ 	& $\frac{31}{47}$ 	& $\frac{63}{95}$ \\

				$\theta^2_1$ 	& $\frac{1}{2}$ & $\frac{2}{5}$ 	& $\frac{4}{11}$ 	& $\frac{8}{23}$	& $\frac{16}{47}$	& $\frac{32}{95}$\\

				$\theta^1_2$ 	& $1$ 			& $1$				& $1$ 			 	& $1$				& $1$				& $1$\\

				$f^1_1$ 		& $\frac{1}{2}$ & $\frac{1}{2}$		& $\frac{1}{2}$ 	& $\frac{1}{2}$		& $\frac{1}{2}$		& $\frac{1}{2}$\\

				$f^2_1$ 		& $\frac{1}{3}$ & $\frac{2}{7}$		& $\frac{4}{15}$ 	& $\frac{8}{31}$	& $\frac{16}{63}$	& $\frac{32}{127}$\\

				$f^1_2$ 		& $\frac{2}{3}$ & $\frac{5}{7}$		& $\frac{11}{15}$ 	& $\frac{23}{31}$	& $\frac{47}{63}$	& $\frac{95}{127}$\\

			\end{tabular}
	
%		\end{table}
	}
	\caption{Example.}
\end{figure}
\renewcommand{\arraystretch}{1}
}
	
\end{document}

%%% Local Variables:
%%% mode: latex
%%% TeX-master: t
%%% End: